\newtheorem{theo}{Theorem}
\newtheorem{lemma}[theo]{Lemma}
\newtheorem{prop}[theo]{Proposition}
\newtheorem{definition}[theo]{Definition}
\begin{document}
\title[Nonlinear Lattice with Potential Symmetry for Smooth Propagation of DB]{Construction of Nonlinear Lattice\\ with Potential Symmetry\\
 for Smooth Propagation of Discrete Breather}
\author{Yusuke Doi}
\address{Department of Adaptive Machine Systems, Graduate School of
Engineering, Osaka University\\
2-1 Yamadaoka, Suita, Osaka 565-0871, Japan}
\ead{doi@ams.eng.osaka-u.ac.jp}
\author{Kazuyuki Yoshimura}
\address{Faculty of Engineering, Tottori University\\
4-101 Koyama-Minami, Tottori 680-8552, Japan}
\ead{kazuyuki@tottori-u.ac.jp}
\pacs{}
\begin{abstract}
   We construct a nonlinear lattice
 that has a particular symmetry in its potential function consisting of long-range pairwise interactions.
 The symmetry enhances smooth propagation of discrete breathers,
 and it is defined by an 
 invariance of the potential function with respect to a map acting on the complex normal mode coordinates.
 Condition of the symmetry is given by
 a set of algebraic equations with respect to
 coefficients of the pairwise interactions.
 We prove that
 the set of algebraic equations has a unique solution,
 and moreover we solve it explicitly.
 We present an explicit Hamiltonian for the symmetric lattice,
 which has coefficients given by the solution.
 We demonstrate that the present symmetric lattice is useful for
 numerically computing traveling discrete breathers in various lattices.
 We propose an algorithm using it.
\end{abstract}

\maketitle

%
%
%
%
\section{Introduction}
   Wave propagation is one of the fundamental phenomena in physics.
 It has been of crucial importance to understand
 the nature of wave propagation in spatially discrete media.
 For example,
 phonons are plane waves in crystals,
 and their characteristics dominate
 various material properties such as thermal transport.
 In addition to the discreteness,
 a variety of discrete media are inherently nonlinear.
 Therefore,
 wave propagation in nonlinear discrete media
 has been of increasing importance.

   {\it Discrete breathers} (DBs) are
 space-localized modes
 that ubiquitously emerge in a variety of nonlinear discrete media.
 The concept of DB was introduced
 by Takeno and Sievers \cite{Sievers1988},
 and it has been of great interest \cite{flach2008a,yoshimura2015}.
 Two types of DBs are known to be possible, i.e.,
 stationary and traveling DBs.
 Long-lived traveling DBs
 have been found numerically in various nonlinear lattice models
 \cite{burlakov1990,takeno1991,hori1992,sandusky1992}:
 they propagate along the lattices
 without noticeable decay for a long time.
 Such traveling DBs are of considerable interest
 from the viewpoint of energy transport,
 and their properties have been investigated
 \cite{aubry1998,aubry2006,Yoshimura2007,gomez-gardenes2004b}.

   Discreteness effects 
 manifest in propagation property of traveling DB.
 The lattice discreteness in general tends to reduce the mobility of DB:
 for instance,
 an approximate traveling DB
 produced by perturbing a stationary DB
 loses its velocity during its propagation,
 and it is eventually trapped at a certain lattice site
 \cite{sandusky1992}.
 On the other hand, it is possible to precisely compute
 a traveling DB solution without velocity loss
 by combining the Newton method
 with numerical integration of the equations of motion.
 A remarkable feature is that
 it does not propagate with a constant velocity
 but with periodically varying velocity, i.e., the non-smooth propagation
 \cite{aubry2006,doi2016}.
 The period of this velocity variation is just
 a time needed for propagating one lattice space. 
 Both the velocity loss of an approximate traveling DB
 and the velocity variation of a precise traveling DB
 vanish in the continuum limit,
 where the DB is very weakly localized \cite{Yoshimura2007,Feng2004}.
 This fact indicates that
 the two features are just manifestations of 
 the lattice discreteness effects.

   A fundamental issue is to clarify
 the origin of such discreteness effects,
 in other words,
 an essential property of the lattice potential
 that causes the discreteness effects. 
 Addressing this issue,
 we pointed out the relevance of a particular symmetry of the lattice potential
 \cite{Yoshimura2007}. 
 Recently,
 we have proposed a nonlinear lattice having this symmetry,
 which has a potential function consisting of pairwise long-range interactions. 
 We have numerically demonstrated that
 this lattice allows
 constant-velocity traveling DBs
 and moreover exhibits a high mobility of approximate traveling DBs,
 i.e.,
 the lattice is
 free from the discreteness effect \cite{doi2016}. 
 In contrast,
 it is possible to break the symmetry by adding a perturbation
 to the lattice potential.
 As the perturbation increases,
 the velocity variation of precise traveling DB becomes larger
 (cf.~Sec.~5)
 and the velocity loss of approximate traveling DB also increases 
 \cite{doi2016}.
 These results indicate that 
 breaking the symmetry is the origin of the discreteness effects
 in propagation of DBs.

  It also should be emphasized that
 the ballistic thermal transport has been numerically observed
 in lattices with the potential symmetry
 \cite{Bagchi2017,yoshimura2019, yoshimura2019a}.
 In addition,
 when the symmetry is broken by adding a perturbation of the potential,
 the lattice exhibits 
 a transition from the ballistic to a non-ballistic (but still anomalous)
 thermal transport as the lattice size increases \cite{yoshimura2019}.
 The threshold lattice size $N_c$ depends on the magnitude of the perturbation:
 $N_c$ decreases as the perturbation is increased.
 These observations indicate that
 the thermal resistance appears and becomes stronger
 as the symmetry breaks gradually.
 Thus, it is expected that
 a study of the thermal transport by gradually breaking the symmetry
 in the present lattice
 will lead to a better understanding of
 the mechanism of thermal resistance.
 The present symmetric lattice may be of much significance
 also from the point of view of thermal transport.

   In our previous paper \cite{doi2016},
 we stated that
 there exists such a symmetric lattice without a proof.
 In the present paper, we give a proof of its existence.
 Moreover,
 we give an explicit Hamiltonian for the symmetric lattice.
 Analytical expressions for the coefficients
 appearing in the symmetric lattice's potential were not given,
 but they were only numerically obtained in \cite{doi2016}.
 Here, we obtain the coefficients explicitly.

   The symmetric lattice is useful
 for numerically computing traveling DB solutions. 
 Precise computation of traveling DB usually employs the Newton method,
 which needs a good approximate solution as an initial guess.
 This approximation is obtained by perturbing a stationary DB,
 but this method does not always
 work 
 successfully
 because the domain of convergence of the Newton method is very small
 for ordinary nonlinear lattices such as the Fermi-Pasta-Ulam (FPU) lattice
 and the approximation is not precise enough.
 A useful feature of the symmetric lattice is that
 it is possible to find a precise enough approximation by perturbation.
 Given a lattice model to compute a traveling DB such as the FPU one,
 our idea for the algorithm is
 to introduce a lattice that has a potential function
 parameterized between the symmetric lattice and the given one,
 compute a precise traveling DB for the symmetric lattice case,
 and then continue it to the given lattice case
 by gradually changing the parameter value.
 This idea has already been proposed in \cite{Yoshimura2005,Doi2009}.
 However,
 only the four-particle symmetric lattice was constructed at that point,
 and the $N$-particle one has been lacking.
 The present lattice is an extension of the four-particle symmetric lattice
 to an arbitrary degrees of freedom.
 We demonstrate that
 our algorithm using the present lattice
 successfully works for computing traveling DBs
 in the FPU lattice.
 
 The rest of paper is organized as follows.
 In section \ref{sec:definition},
 the definition of the symmetric lattice is given.
 In section \ref{sec:model_and_main_result},
 we give a pairwise interaction symmetric lattice,
 as well as the main results on the existence and uniqueness of the proposed model.
 The explicit expression of the proposed model is also given.
 In section \ref{sec:calculation_method},
 the numerical method for finding traveling DBs using the proposed model is presented.
 In section \ref{sec:truncated}, we discuss physical effects of breaking the symmetry.
 In section \ref{sec:preliminary},
 we give a preliminary discussion of the proof.
 In section \ref{sec:odd_equation} and \ref{sec:even_equation},
 several lemmas are proved to prepare the proof of main result.
 The proof of main result is given in section \ref{sec:proof}.
%
%
%
\section{Definition of symmetric lattice}\label{sec:definition}

  Let us consider a nonlinear lattice described by the Hamiltonian
\begin{eqnarray}
 H = \frac{1}{2}\sum_{n=1}^{N}p_n^2+ \Phi(q_1, q_2, \ldots, q_N),
 \label{eqn:Hamiltonian_in_q}
\end{eqnarray}
 where $p_n\in\mathbb{R}$ and $q_n\in\mathbb{R}$ represent
 the linear momentum and the position of $n$th particle, respectively,
 $N$ is the number of particles of the system,
 and $\Phi(q_1,q_2, \ldots, q_N)$: $\mathbb{R}^N \to\mathbb{R}$
 is a $C^2$ function of $(q_1, q_2, \ldots, q_N)$.
 We assume the case of even $N$ and the periodic boundary conditions
 $p_{N+1} = p_{1}$, $p_{0}= p_{N}$, $q_{N+1} = q_{1}$ and $q_{0} =q_{N}$. 

   Consider the variable transformation defined by
\begin{eqnarray}
 q_n = \frac{(-1)^n}{\sqrt{N}}\sum_{m=-N/2+1}^{N/2}{U_m \exp\left[{-i\frac{2\pi n}{N}m}\right]},
 \quad n=1,2,\cdots, N,
 \label{eqn:variable_transformation_q_to_U}
\end{eqnarray}
 where $U_m \in \mathbb{C}$, $m=-N/2+1, -N/2+2, \ldots, N/2$ are called
 the complex normal mode coordinates.
 Note that the $N/2$th mode represents the uniform displacement of the lattice. 
 Substituting Eq.~(\ref{eqn:variable_transformation_q_to_U}),
 Hamiltonian (\ref{eqn:Hamiltonian_in_q}) can be written
 in terms of $U_m$ and reads
\begin{eqnarray}
 H = \frac{1}{2}\sum_{m=-N_h}^{N_h+1}{\dot{U}_m}{\dot{U}_{-m}} +
  \Phi(\mathbf{U}, U_{N/2}),
  \label{eqn:Hamiltonian_in_U}
\end{eqnarray}
 where $N_h = N/2-1$ and $\mathbf{U} = (U_{-N_h}, U_{-N_h+1}, \ldots, U_{N_h})$.
 The potential $\Phi (\mathbf{U},U_{N/2})$ can be decomposed as
\begin{equation}
  \Phi(\mathbf{U},U_{N/2}) = \Phi(\mathbf{U},0)+\mathcal{G}(\mathbf{U},U_{N/2}),
  \label{eqn:UR}
\end{equation}
 where $\mathcal{G}(\mathbf{U}, U_{N/2})
 \equiv \Phi(\mathbf{U},U_{N/2})-\Phi(\mathbf{U},0)$.

   Consider the map $\mathcal{T}_\lambda:\mathbb{C}^{N-1} \to \mathbb{C}^{N-1}$
 defined by
\begin{eqnarray}
 \mathcal{T}_\lambda : U_m \mapsto U_m \exp(-i m\lambda),
 \quad m=-N_h,\dots,N_h,
 \label{eqn:transform_T}
\end{eqnarray}
 where $\lambda$ is a real parameter.
 This map $\mathcal{T}_\lambda$ forms a one-parameter transformation group. 
  When $\lambda = 2\pi/N$,
 given an arbitrary displacement pattern $\mathbf{q}=(q_1,q_2,\cdots, q_{N})$
 satisfying $U_{N/2}= \sum_{n=1}^{N}q_n/\sqrt{N}=0$, 
 the map $\mathcal{T}_\lambda$ represents successive operations of 
 shifting $\mathbf{q}$ by one lattice spacing
 and reversing the sign of the resulting displacement.
 Therefore,
 $\mathcal{T}_\lambda$ with an arbitrary $\lambda$ may be regarded as
 a continuous extension of this one-lattice-space shifting
 and sign-inverting transformation.

   The $U_{N/2}$-independent part of potential function $\Phi(\mathbf{U},0)$
 in Eq.~(\ref{eqn:UR}) can be divided into two parts:
 $\Phi_\mathrm{s}(\mathbf{U})$ and $\Phi_\mathrm{a}(\mathbf{U})$.
 The former part is invariant with respect to the map $\mathcal{T}_\lambda$
 for any $\lambda\in\mathbb{R}$ and any $\mathbf{U}\in\mathbb{C}^{N-1}$,
 i.e.,
 $\Phi_\mathrm{s}(\mathcal{T}_\lambda\mathbf{U})=\Phi_\mathrm{s}(\mathbf{U})$,
 while 
 $\Phi_\mathrm{a}(\mathbf{U}) = \Phi(\mathbf{U},0)-\Phi_\mathrm{s}(\mathbf{U})$
 is the rest part of $\Phi(\mathbf{U},0)$
 and not invariant with respect to $\mathcal{T}_\lambda$ for any $\lambda$.
 We call the former part $\Phi_\mathrm{s}(\mathbf{U})$ the symmetric part.
 This decomposition in Eq.(\ref{eqn:Hamiltonian_in_U}) provides
\begin{equation}
  H = \frac{1}{2}\sum_{m=-N_{h}}^{N_h + 1}{\dot{U}_m\dot{U}_{-m}}
  +\Phi_\mathrm{s}(\mathbf{U})+\Psi(\mathbf{U}, U_{N/2}),
  \label{eqn:hamiltonian_decomposed}
\end{equation}
 where $\Psi = \Phi_\mathrm{a}(\mathbf{U}) + \mathcal{G}(\mathbf{U},U_{N/2})$
 is the asymmetric part of the whole potential $\Phi(\mathbf{U},U_{N/2})$.
   Let $\mathcal{I} =\{(\mathbf{q},\mathbf{p})\in \mathbb{R}^{2N}|\,
 \sum_{n=1}^{N}q_n=\sum_{n=1}^{N}p_n =0\}$.
 This is the subspace which is specified by $U_{N/2}=\dot{U}_{N/2}=0$.
 We give the following definition.
\begin{definition}
 The lattice (\ref{eqn:Hamiltonian_in_U}) or (\ref{eqn:hamiltonian_decomposed})
 is said to be a symmetric lattice
 if $\mathcal{I}$ is an invariant subspace
 and $\Psi(\mathbf{U},0)=0$.
\label{def:symmetric}
\end{definition} 
 By this definition,
 the Hamiltonian of the reduced dynamical system on $\mathcal{I}$
 of a symmetric lattice is given in the form
\begin{eqnarray}
 H &=&
  \frac{1}{2}\sum_{m=-N_h}^{N_h}{\dot{U}_m\dot{U}_{-m}} +
  \Phi_\mathrm{s}(\mathbf{U}).
  \label{eqn:symmetric_Hamiltonian_in_U}
\end{eqnarray}
 It has been reported that the following two propositions hold in the symmetric lattice\cite{Doi2009}.
 \begin{prop}
 Suppose that the symmetric lattice (\ref{eqn:symmetric_Hamiltonian_in_U}) has a solution $U_m(t)=u_m(t), m=-N_h,\dots,N_h, U_{N/2}(t)=0$. Then for any $\lambda \in \mathbb{R}$, $U_m(t) = u_m(t)\exp{(-im\lambda)}, m=-N_h,\dots,N_h, U_{N/2}(t)=0$ is also a solution.	
 \end{prop}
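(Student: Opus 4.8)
The plan is to prove the proposition by direct substitution into the equations of motion of the reduced system (\ref{eqn:symmetric_Hamiltonian_in_U}), using the invariance $\Phi_\mathrm{s}(\mathcal{T}_\lambda\mathbf{U})=\Phi_\mathrm{s}(\mathbf{U})$ as the only nontrivial ingredient. The equations of motion of (\ref{eqn:symmetric_Hamiltonian_in_U}) (with momenta $P_m=\dot U_{-m}$) take the second-order form
\begin{equation}
\ddot U_m = -\,\frac{\partial \Phi_\mathrm{s}}{\partial U_{-m}}(\mathbf{U}), \qquad m=-N_h,\dots,N_h ,
\end{equation}
where the derivatives with respect to the complex mode coordinates are understood in the standard formal (Wirtinger) sense, treating $U_m$ and $U_{-m}=\overline{U_m}$ as independent — or equivalently, after splitting into real and imaginary parts. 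Before anything else I would record two trivial points: $\mathcal{T}_\lambda$ is compatible with the reality constraint, since $\overline{U_m e^{-im\lambda}}=\overline{U_m}\,e^{im\lambda}=U_{-m}e^{-i(-m)\lambda}$, and it acts only on the modes $m=-N_h,\dots,N_h$, leaving $U_{N/2}$ untouched; hence the condition $U_{N/2}(t)\equiv 0$ in the conclusion is automatically maintained, and only the dynamical equations above need to be checked.

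The key computation is the transformation rule for the gradient of $\Phi_\mathrm{s}$. Differentiating the identity $\Phi_\mathrm{s}(\mathcal{T}_\lambda\mathbf{U})=\Phi_\mathrm{s}(\mathbf{U})$ with respect to $U_{-m}$ and applying the chain rule to the componentwise map $U_k\mapsto U_k e^{-ik\lambda}$ gives
\begin{equation}
\left(\frac{\partial \Phi_\mathrm{s}}{\partial U_{-m}}\right)\!\!(\mathcal{T}_\lambda\mathbf{U})\, e^{im\lambda} = \frac{\partial \Phi_\mathrm{s}}{\partial U_{-m}}(\mathbf{U}),
\end{equation}
that is, $(\partial_{U_{-m}}\Phi_\mathrm{s})(\mathcal{T}_\lambda\mathbf{U}) = e^{-im\lambda}\,(\partial_{U_{-m}}\Phi_\mathrm{s})(\mathbf{U})$. (As a consistency check, the kinetic term is $\mathcal{T}_\lambda$-invariant as well, because the phases $e^{-im\lambda}$ and $e^{im\lambda}$ carried by $\dot U_m$ and $\dot U_{-m}$ cancel, so $\mathcal{T}_\lambda$ is actually a symmetry of the whole reduced Hamiltonian; this is the conceptual reason the proposition holds, and it also lets one phrase the argument as ``$\mathcal{T}_\lambda$ is a linear canonical transformation commuting with the flow'', but the direct verification is just as short.)

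Now I would set $v_m(t) = u_m(t)e^{-im\lambda}$, i.e. $\mathbf{v}(t)=\mathcal{T}_\lambda\mathbf{u}(t)$, and check the equations of motion. The left-hand side is $\ddot v_m = e^{-im\lambda}\ddot u_m$. The right-hand side is $-\partial_{U_{-m}}\Phi_\mathrm{s}(\mathbf{v}) = -\partial_{U_{-m}}\Phi_\mathrm{s}(\mathcal{T}_\lambda\mathbf{u}) = -e^{-im\lambda}\,\partial_{U_{-m}}\Phi_\mathrm{s}(\mathbf{u})$ by the transformation rule above. Since $\mathbf{u}(t)$ is a solution, $\ddot u_m = -\partial_{U_{-m}}\Phi_\mathrm{s}(\mathbf{u})$; multiplying this by $e^{-im\lambda}$ shows that the two sides agree, so $\mathbf{v}(t)$ solves the equations of motion. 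Combined with $U_{N/2}(t)\equiv 0$, this gives the claim for every $\lambda\in\mathbb{R}$ (consistent with, but not requiring, the one-parameter group property of $\mathcal{T}_\lambda$).

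The only place that deserves care — and hence the part I would write out most explicitly — is the complex-coordinate bookkeeping in the chain-rule step: fixing once and for all the convention for $\partial/\partial U_{-m}$ and checking that $\mathcal{T}_\lambda$ respects both that convention and the reality constraint $U_{-m}=\overline{U_m}$. There is no analytic obstacle: $\mathcal{T}_\lambda$ is linear and $\Phi_\mathrm{s}$ is $C^2$ by assumption, so the proof reduces to this finite, purely algebraic verification.
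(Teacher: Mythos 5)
Your proof is correct, and the bookkeeping is right: the reduced equations of motion are indeed $\ddot U_m=-\partial\Phi_\mathrm{s}/\partial U_{-m}$, and differentiating the invariance $\Phi_\mathrm{s}(\mathcal{T}_\lambda\mathbf{U})=\Phi_\mathrm{s}(\mathbf{U})$ gives precisely $(\partial_{U_{-m}}\Phi_\mathrm{s})(\mathcal{T}_\lambda\mathbf{U})=e^{-im\lambda}(\partial_{U_{-m}}\Phi_\mathrm{s})(\mathbf{U})$, which together with the preservation of the reality constraint and of $U_{N/2}=0$ yields the claim. Note that this paper does not itself prove Proposition 1 but quotes it from \cite{Doi2009}, so there is no in-text proof to compare against; your direct substitution argument is the standard one expected there.
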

 
 \begin{prop}
 	The symmetric lattice (\ref{eqn:symmetric_Hamiltonian_in_U}) has an additional first integral given by
 	\label{prop:extra_conserve}
    \begin{equation}
	   I = \sum_{m=1}^{N/2-1}{m(\dot{U}_mU_{-m}-U_m\dot{U}_{-m})}.
	   \label{eqn:extra_conserved_quantity}
    \end{equation}
 \end{prop}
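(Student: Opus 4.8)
The plan is to recognise $I$ in (\ref{eqn:extra_conserved_quantity}) as the Noether invariant attached to the one-parameter symmetry group $\{\mathcal{T}_\lambda\}$ of the reduced system (\ref{eqn:symmetric_Hamiltonian_in_U}), and then to establish conservation by checking $\dot I=0$ directly along solutions, rather than by setting up a complex-coordinate version of Noether's theorem. The structural reason such an $I$ exists is that the kinetic term $\frac{1}{2}\sum_m\dot U_m\dot U_{-m}$ is manifestly $\mathcal{T}_\lambda$-invariant — the factors $e^{-im\lambda}$ and $e^{im\lambda}$ cancel within each conjugate pair $(U_m,U_{-m})$ — while $\Phi_\mathrm{s}$ is $\mathcal{T}_\lambda$-invariant by construction.

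First I would record the equations of motion of (\ref{eqn:symmetric_Hamiltonian_in_U}). Treating $U_m$ and $U_{-m}$ as a conjugate pair (equivalently, passing to the real coordinates $\mathrm{Re}\,U_m,\ \mathrm{Im}\,U_m$ and using $U_{-m}=\overline{U_m}$), these take the compact form $\ddot U_{-m}=-\partial\Phi_\mathrm{s}/\partial U_m$, and symmetrically $\ddot U_m=-\partial\Phi_\mathrm{s}/\partial U_{-m}$, for $m=-N_h,\dots,N_h$. Differentiating (\ref{eqn:extra_conserved_quantity}) along a solution and observing that $I$ depends on the velocities only through $\dot U_mU_{-m}$ and $U_m\dot U_{-m}$, the kinetic cross terms $\dot U_m\dot U_{-m}$ cancel, leaving $\dot I=\sum_{m=1}^{N_h}m(\ddot U_mU_{-m}-U_m\ddot U_{-m})$. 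Substituting the equations of motion gives $\dot I=\sum_{m=1}^{N_h}m\big(U_m\,\partial\Phi_\mathrm{s}/\partial U_m-U_{-m}\,\partial\Phi_\mathrm{s}/\partial U_{-m}\big)$, and relabelling $m\mapsto-m$ in the second group of terms merges the two pieces into the single symmetric sum $\dot I=\sum_{m=-N_h}^{N_h}m\,U_m\,\partial\Phi_\mathrm{s}/\partial U_m$, the $m=0$ term being zero.

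It then remains to show that this sum vanishes, which is exactly where the potential symmetry enters. Differentiating the invariance identity $\Phi_\mathrm{s}(\mathcal{T}_\lambda\mathbf{U})=\Phi_\mathrm{s}(\mathbf{U})$ with respect to $\lambda$ at $\lambda=0$ and using $\frac{d}{d\lambda}(U_me^{-im\lambda})=-imU_m$ at $\lambda=0$, read off from (\ref{eqn:transform_T}), the chain rule gives $\sum_{m=-N_h}^{N_h}m\,U_m\,\partial\Phi_\mathrm{s}/\partial U_m=0$ identically in $\mathbf{U}$. Comparing this with the expression for $\dot I$ obtained above yields $\dot I=0$ along every solution of (\ref{eqn:symmetric_Hamiltonian_in_U}), which is the claim of Proposition~\ref{prop:extra_conserve}.

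I do not expect a genuine obstacle here; the only step needing care is the index bookkeeping in the complex normal-mode coordinates — writing the equations of motion for the pair $(U_m,U_{-m})$ correctly, checking that the purely kinetic part of $\dot I$ cancels, and carrying out the $m\mapsto-m$ reindexing while handling the $m=0$ mode (the uniform mode $m=N/2$ does not enter the reduced system at all). If one prefers to avoid complex derivatives altogether, the entire computation can be run in the real coordinates $\mathrm{Re}\,U_m,\ \mathrm{Im}\,U_m$, where each step is an elementary calculus identity.
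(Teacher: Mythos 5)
Your argument is correct. Note, however, that the paper itself does not prove Proposition~2: it is quoted from the earlier work \cite{Doi2009}, so there is no in-text proof to compare against; your write-up supplies a self-contained verification. The route you take is the natural one and almost certainly the intended one: the momentum conjugate to $U_m$ in the reduced Lagrangian is $\dot U_{-m}$, so $\ddot U_{-m}=-\partial\Phi_\mathrm{s}/\partial U_m$, the kinetic cross terms cancel in $\dot I$, and the remaining sum $\sum_{m}mU_m\,\partial\Phi_\mathrm{s}/\partial U_m$ vanishes by differentiating the invariance identity $\Phi_\mathrm{s}(\mathcal{T}_\lambda\mathbf{U})=\Phi_\mathrm{s}(\mathbf{U})$ at $\lambda=0$ --- i.e.\ $I$ is exactly the Noether integral of the one-parameter group $\mathcal{T}_\lambda$. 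The only points worth keeping explicit are the ones you already flag: $\Phi_\mathrm{s}$ is a polynomial in the $U_m$, so the complex chain rule (or, equivalently, the computation in $\mathrm{Re}\,U_m$, $\mathrm{Im}\,U_m$ with $U_{-m}=\overline{U_m}$) is legitimate, the invariance holds for all $\mathbf{U}\in\mathbb{C}^{N-1}$ by the paper's definition of the symmetric part, and the $m=0$ and $m=N/2$ modes drop out as you state. (Incidentally, on the physical subspace $U_{-m}=\overline{U_m}$ the quantity $I$ as written is purely imaginary; this does not affect conservation, but one may prefer $iI$ or $\mathrm{Im}$ of the summand as the real first integral.)
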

 
%
%
\section{Model and main result}\label{sec:model_and_main_result}
   It is possible to construct a symmetric lattice from
 any lattice system defined by Hamiltonian (\ref{eqn:Hamiltonian_in_U}),
 because it is enough to eliminate its asymmetric terms.
 However,
 such a model is in general unphysical
 when it is transformed into the Hamiltonian in terms of $q_n$.
 For example,
 such lattice is not composed of pairwise interactions.
 Our purpose is to construct a symmetric lattice
 that has a potential consisting of pairwise interactions only.

 In section \ref{sec:definition}, we considered the symmetric lattice defined for even $N$.
 For simplicity of the proof, we restrict the following discussion to the case that $N$ is a multiple of 4, i.e, $N/2$ is even. We have not given the proof for the case that $N/2$ is odd. However, it may be possible to prove as the same manner.
   Let us consider the Hamiltonian which has pairwise interaction terms as follows:
\begin{eqnarray}
 H =\sum_{n=1}^{N}{\frac{1}{2} p_{n}^2}+\sum_{n=1}^{N}{\left[\frac{\mu_0}{2}q_n^2+\frac{\mu_1}{2}(q_{n+1}-q_n)^2\right]} +
  \frac{1}{4}\sum_{n=1}^{N}{\sum_{r=1}^{N/2}{b_r (q_{n+r}-q_n)^4}},\nonumber\\
 \label{eqn:proposed_model}
\end{eqnarray}
 where $b_r \in \mathbb{R}$ is a constant
 which represents the coupling strength between the $r$th neighboring particles,
 $\mu_0$ and $\mu_1$ are the coefficients of harmonic on-site potential and harmonic intersite potential, respectively.
 Hamiltonian (\ref{eqn:proposed_model}) reduces to
 the FPU-type lattice when $\mu_0 =0$ and $b_r = 0\quad (r\ge2)$.

   Substituting Eq.~(\ref{eqn:variable_transformation_q_to_U})
 into Eq.~(\ref{eqn:proposed_model}),
 we rewrite the Hamiltonian in terms of $U_m$ into the form
\begin{eqnarray}
 H =
  \frac{1}{2}\sum_{m=-N_h}^{N_h+1}{\dot{U}_m\dot{U}_{-m}}
  +\Phi_\mathrm{s}(\mathbf{U})+\Phi_\mathrm{a}(\mathbf{U}) + \frac{\mu_0}{2}U_{N/2}^2.
  \label{eqn:hamiltonian_in_U_model}
\end{eqnarray}
It is easy to see that Hamiltonian (\ref{eqn:hamiltonian_in_U_model}) has the invariant subspace $\mathcal{I}$.  
Comparing Eq.(\ref{eqn:hamiltonian_in_U_model}) with Eq.(\ref{eqn:hamiltonian_decomposed}), we see $\Psi(\mathbf{U}, U_{N/2})=\Phi_\mathrm{a}(\mathbf{U}) + (\mu_0/2)U_{N/2}^2$. 
This asymmetric part reduces $\Psi(\mathbf{U}, 0)=\Phi_\mathrm{a}(\mathbf{U})$ on $\mathcal{I}$.
In order to construct a symmetric lattice, it is enough to consider $\Phi_\mathrm{a}(\mathbf{U})$ as the asymmetric part.
 The symmetric part and asymmetric part in the lattice potential are given as follows:
\begin{eqnarray}
\Phi_\mathrm{s}(\mathbf{U})
&=&\frac{1}{2}\sum_{m=-N_h}^{N_h}{\left[\mu_0 +4\mu_1 \cos^2\left(\frac{\pi m}{N}\right)\right]}U_m U_{-m}\nonumber\\
&+&\frac{4}{N}\sum_{i,j,k,l=-N_h}^{N_h}
{\phi^{(i,j,k,l)}(\mathbf{b}) U_i U_j U_k U_l}\Delta(i+j+k+l)
\label{eqn:potential_symmetric}
\end{eqnarray}
\begin{eqnarray}
 \Phi_\mathrm{a}(\mathbf{U})
  &=&-\frac{4}{N}\sum_{i,j,k,l=-N_h}^{N_h}
{\psi^{(i,j,k,l)}(\mathbf{b}) U_i U_j U_k U_l}\Delta(i+j+k+l+N)\nonumber\\
  &&-\frac{4}{N}\sum_{i,j,k,l=-N_h}^{N_h}
{\psi^{(i,j,k,l)}(\mathbf{b}) U_i U_j U_k U_l}\Delta(i+j+k+l-N)
\label{eqn:potential_asymmetric}
\end{eqnarray}
with
\begin{eqnarray}
\phi^{(i,j,k,l)}(\mathbf{b})&=&\sum_{q=1}^{N/2}{b_q f_{q}^{(i,j,k,l)}}
\label{eqn:coefficient_symmetric_term}
\end{eqnarray}
\begin{eqnarray}
\psi^{(i,j,k,l)}(\mathbf{b})&=&-\sum_{q=1}^{N/2}(-1)^{q}{b_q f_{q}^{(i,j,k,l)}},
\label{eqn:coefficient_asymmetric_term}
\end{eqnarray}
 where $\mathbf{b}=[b_1, b_2, \ldots, b_{N/2}]^{T}$ and
\begin{eqnarray}
 f_{q}^{(i,j,k,l)} =
  \left\{
   \begin{array}{ll}
    c_{iq}c_{jq}c_{kq}c_{lq}&\mbox{for
     odd $q$}\\
    s_{iq}s_{jq}s_{kq}s_{lq}&\mbox{for
     even $q$,}
   \end{array}
  \right.
\end{eqnarray}
 where $c_\alpha = \cos(\alpha\pi/N)$ and $s_\alpha = \sin(\alpha\pi/N)$.
 The function $\Delta(d)$ is defined by
\begin{eqnarray}
  \Delta(d) =\left\{\begin{array}{ll}
	      1&\mbox{if}\quad d=0\\
		     0&\mbox{otherwise.}
		    \end{array}\right.
\end{eqnarray}

   Let $\mathcal{S}_{\pm} = \{(i,j,k,l)\in\mathbb{Z}^{4}|-N_h \le i,j,k,l\le N_h,
 i+j+k+l=\pm N \}$.
 The lattice (\ref{eqn:hamiltonian_in_U_model}) becomes symmetric
 if and only if the asymmetric part (\ref{eqn:potential_asymmetric}) vanishes.
 The condition $\Phi_\mathrm{a}(\mathbf{U})=0$ is equivalent to
\begin{eqnarray}
 \psi^{(i,j,k,l)}(\mathbf{b})=0,\quad \forall (i,j,k,l)\in \mathcal{S}_{\pm}.
 \label{eqn:equation_for_symmetric}
\end{eqnarray}
 As for Eq.~(\ref{eqn:equation_for_symmetric}),
 we can readily obtain the following lemma. 
\begin{lemma}
 Suppose that
 $\psi^{(i,j,k,l)}(\mathbf{b})=0$ holds for $(i,j,k,l)\in\mathbb{Z}^4$
 and $\mathbf{b}\in\mathbb{R}^{N/2}$,
 then we have $\psi^{(-i,-j,-k,-l)}(\mathbf{b})=0$.
\label{lemma:minus}
\end{lemma}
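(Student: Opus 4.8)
The plan is to observe that the building block $f_q^{(i,j,k,l)}$ is itself invariant under the simultaneous sign change $(i,j,k,l)\mapsto(-i,-j,-k,-l)$, and then conclude immediately from the defining formula (\ref{eqn:coefficient_asymmetric_term}). Concretely, I would first recall that $c_\alpha=\cos(\alpha\pi/N)$ is an even function of $\alpha$, so $c_{-\alpha q}=c_{\alpha q}$, whereas $s_\alpha=\sin(\alpha\pi/N)$ is odd, so $s_{-\alpha q}=-s_{\alpha q}$.

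Next I would treat the two cases in the definition of $f_q^{(i,j,k,l)}$ separately. For odd $q$ we have $f_q^{(-i,-j,-k,-l)}=c_{-iq}c_{-jq}c_{-kq}c_{-lq}=c_{iq}c_{jq}c_{kq}c_{lq}=f_q^{(i,j,k,l)}$ directly. For even $q$ we have $f_q^{(-i,-j,-k,-l)}=s_{-iq}s_{-jq}s_{-kq}s_{-lq}=(-1)^4 s_{iq}s_{jq}s_{kq}s_{lq}=f_q^{(i,j,k,l)}$, the key point being that the product contains an even number (four) of sine factors, so the four sign changes cancel. Hence $f_q^{(-i,-j,-k,-l)}=f_q^{(i,j,k,l)}$ for every $q=1,\dots,N/2$.

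Substituting this identity into (\ref{eqn:coefficient_asymmetric_term}) term by term then gives
\begin{equation}
\psi^{(-i,-j,-k,-l)}(\mathbf{b})=-\sum_{q=1}^{N/2}(-1)^{q}b_q f_q^{(-i,-j,-k,-l)}=-\sum_{q=1}^{N/2}(-1)^{q}b_q f_q^{(i,j,k,l)}=\psi^{(i,j,k,l)}(\mathbf{b}),
\end{equation}
so the two quantities are equal for all $(i,j,k,l)\in\mathbb{Z}^4$ and all $\mathbf{b}\in\mathbb{R}^{N/2}$. In particular, if the right-hand side vanishes then so does the left-hand side, which is exactly the assertion. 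There is essentially no obstacle here: the whole lemma is the parity observation that a product of an even number of sines is invariant under negation of its arguments, combined with the evenness of cosine; the only thing to be careful about is to handle the odd-$q$ and even-$q$ branches of $f_q^{(i,j,k,l)}$ separately, which the above does.
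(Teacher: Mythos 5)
Your proof is correct and follows essentially the same route as the paper: evenness of $c_\alpha$, oddness of $s_\alpha$, and the fact that the sine branch contains four factors give $f_q^{(-i,-j,-k,-l)}=f_q^{(i,j,k,l)}$, whence $\psi^{(-i,-j,-k,-l)}(\mathbf{b})=\psi^{(i,j,k,l)}(\mathbf{b})$. Your write-up is in fact slightly more careful in matching the odd-$q$/even-$q$ branches to the cosine/sine cases than the paper's one-line argument.
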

\begin{proof}
 For any $q\in\mathbb{Z}$,
 $s_{-(2q-1)i}=-s_{(2q-1)i}$ and $c_{-2qi} = c_{2qi}$ hold.
 This fact implies $f_{q}^{(-i,-j,-k,-l)}=f_{q}^{(i,j,k,l)}$.
 Therefore, we have
 $\psi^{(-i,-j,-k,-l)}(\mathbf{b})=\psi^{(i,j,k,l)}(\mathbf{b})$.	
\end{proof}
 Since Eq.(\ref{eqn:equation_for_symmetric}) is invariant
 under any exchange of indices,
 we can restrict our discussion to
 the subset $S_0=\{(i,j,k,l)|-N_h \le i \le j \le k \le l \le N_h, i+j+k+l=\pm N\}
 \subset \mathcal{S}_{\pm}$ without loss of generality.
 Using this fact and Lemma \ref{lemma:minus},
 we can further restrict our discussions to the set
 $S = \{(i,j,k,l)|-N_h \le i \le j \le k \le l \le N_h, i+j+k+l= N\}\subset S_0$.
 Finally, it is enough to discuss the equations in the set $S$ instead of Eq.~(\ref{eqn:equation_for_symmetric}) which is in the set $\mathcal{S}_{\pm}$.
 Therefore, we consider the equations
 \begin{eqnarray}
	\psi^{(i,j,k,l)}(\mathbf{b}) = 0, \forall (i,j,k,l)\in S.
	\label{eqn:equation_for_s}
 \end{eqnarray}

   Let $S_1=\{(0,n+1,N/2-n, N/2-1)|1\le n \le N/4-1\}\subset S$
 and $S_2=\{(2-m,m,N/2-1,N/2-1)|m=1\ \mbox{or}\ 3\le m \le N/4+1\}\subset S$.
 The following lemma holds:
\begin{lemma}
 Let $N\in\mathbb{N}$ be a multiple of 4.
 Equations $\psi^{(i,j,k,l)}(\mathbf{b})=0, \forall (i,j,k,l)\in S_1\cup S_2$
 are $N/2-1$ linearly independent equations
 and therefore they have a nontrivial solution $\mathbf{b}\ne 0$.
 Moreover, this nontrivial solution $\mathbf{b}$ also solves
 the other equations in $S$.
\label{theo:main1}
\end{lemma}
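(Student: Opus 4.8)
The plan is to treat (\ref{eqn:equation_for_s}) as a homogeneous linear system for $\mathbf{b}\in\mathbb{R}^{N/2}$ and to compute the rank of its coefficient matrix after linearizing the functionals $\psi^{(i,j,k,l)}$. For $(i,j,k,l)\in S$ one has $i+j+k+l=N$; expanding the products $c_{iq}c_{jq}c_{kq}c_{lq}$ and $s_{iq}s_{jq}s_{kq}s_{lq}$ by product-to-sum formulas and using $\cos((i+j+k+l)q\pi/N)=(-1)^{q}$ rewrites each $\psi^{(i,j,k,l)}(\mathbf{b})$ as a short linear combination of the spectral quantities $A^{\mathrm{o}}_m(\mathbf{b})=\sum_{q\ \mathrm{odd}}b_q\cos(2mq\pi/N)$ and $A^{\mathrm{e}}_m(\mathbf{b})=\sum_{q\ \mathrm{even}}b_q\cos(2mq\pi/N)$ (this is what the preparatory sections \ref{sec:preliminary}--\ref{sec:even_equation} set up). I would then record the folding identities $A^{\mathrm{o}}_{-m}=A^{\mathrm{o}}_{N-m}=A^{\mathrm{o}}_m$, $A^{\mathrm{o}}_{N/2-m}=-A^{\mathrm{o}}_m$ (whence $A^{\mathrm{o}}_{N/4}=0$ and $A^{\mathrm{o}}_{N/4+r}=-A^{\mathrm{o}}_{N/4-r}$), and $A^{\mathrm{e}}_{-m}=A^{\mathrm{e}}_{N-m}=A^{\mathrm{e}}_{N/2-m}=A^{\mathrm{e}}_m$, which bring every index that occurs into the window $0\le m\le N/4$. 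The coefficient matrix is thereby carried by the $N/2$ quantities $A^{\mathrm{o}}_0,\ldots,A^{\mathrm{o}}_{N/4-1}$ and $A^{\mathrm{e}}_0,\ldots,A^{\mathrm{e}}_{N/4-1}$, which are obtained from $\mathbf{b}$ by an invertible linear map (a Chebyshev--Vandermonde computation using distinctness of the nodes $\cos(2m\pi/N)$, $\cos(4m\pi/N)$ for $0\le m\le N/4-1$, together with parity of the Chebyshev polynomials).

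Next I would read off the two families. If $(i,j,k,l)\in S_1$ the first entry is $0$, so $s_{0q}=0$ annihilates all even-$q$ terms, and a direct computation collapses the $N/4-1$ equations of $S_1$ to the vanishing of the successive second differences of the sequence $A^{\mathrm{o}}_0,A^{\mathrm{o}}_1,\ldots,A^{\mathrm{o}}_{N/4}$; since $A^{\mathrm{o}}_{N/4}=0$ identically, these are $N/4-1$ manifestly independent conditions on $A^{\mathrm{o}}_0,\ldots,A^{\mathrm{o}}_{N/4-1}$. If $(i,j,k,l)\in S_2$ the two trailing entries equal $N/2-1$ and the first two sum to $2$; after folding, the odd-$q$ part of such an equation is a combination of the second differences already fixed by $S_1$ -- except, for one boundary value of the parameter, for a single extra first-difference term $A^{\mathrm{o}}_1-A^{\mathrm{o}}_0$ -- so that modulo the $S_1$-relations the $S_2$-equations become $N/4$ further conditions: $N/4-1$ of them purely in $A^{\mathrm{e}}_0,\ldots,A^{\mathrm{e}}_{N/4-1}$, again of ``constant-second-difference plus one boundary relation'' type and of full rank $N/4-1$, and one equation coupling the remaining one-dimensional odd freedom to the even variables. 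Ordering the equations of $S_1$ by $n$ and those of $S_2$ by $m$ then exhibits a block-triangular pattern in which the $(N/4-1)+N/4=N/2-1$ equations of $S_1\cup S_2$ are linearly independent. Since $N/2-1<N/2$, their solution set is a line, and any $\mathbf{b}\ne 0$ on it is the required solution.

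It remains to show that this $\mathbf{b}$ also solves the other equations of $S$. For arbitrary $(i,j,k,l)\in S$ I would use the linearization once more: folding all eight signed combinations $\pm i\pm j\pm k\pm l$ into $[0,N/4]$ exhibits $\psi^{(i,j,k,l)}(\mathbf{b})$ as a fixed linear combination of the second differences of $(A^{\mathrm{o}}_n)$ and $(A^{\mathrm{e}}_n)$ together with the handful of boundary functionals isolated above, all of which vanish on the $\mathbf{b}$ just constructed; hence $\psi^{(i,j,k,l)}(\mathbf{b})=0$, and the full system over $S$ has the same rank $N/2-1$ and null space as the $S_1\cup S_2$ system. (Alternatively, an explicit nontrivial solution $\mathbf{b}$, once in hand, may be checked directly to annihilate every equation of $S$.)

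The step I expect to be the real obstacle is the rank count for $S_1\cup S_2$. The $S_1$ half is transparent, the zero entry decoupling the odd-$q$ part; the work is concentrated in the $S_2$ analysis (section \ref{sec:even_equation}), where one must verify that, after eliminating the odd-$q$ part via $S_1$, the out-of-range pair-sums $i+j,\ i+k,\ i+l$ fold back so as to leave exactly $N/4$ independent conditions without spoiling the triangular structure, and that the accompanying Chebyshev--Vandermonde matrices are nonsingular. The final consistency check on $S\setminus(S_1\cup S_2)$ is by comparison routine, being the same linearization read backwards.
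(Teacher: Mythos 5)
Your proposal is correct and follows essentially the same route as the paper: your spectral quantities $A^{\mathrm{o}}_m$ and $A^{\mathrm{e}}_m$ are exactly the paper's transformed vectors $A\mathbf{b}_\mathrm{odd}$ and $\tilde{D}\mathbf{b}_\mathrm{even}$, your second-difference reading of the $S_1$ and $S_2$ blocks is precisely the content of the matrices $M_1$ and $\tilde{M}_2$ (Propositions \ref{prop:independent_odd} and \ref{prop:independent_even}), and the Chebyshev--Vandermonde invertibility you invoke is the regularity of $A$ and $P_4D$ proved in \ref{sec:regularity_C} and \ref{sec:regularity_D}. Your final folding argument for $S\setminus(S_1\cup S_2)$ corresponds to the paper's case analysis in Propositions \ref{prop:solve_on_t1} and \ref{prop:solve_on_t2}, where (as you anticipate) the only real subtlety is that for $T_2$ the odd and even contributions are individually nonzero and cancel only through the relation coming from the $m=1$ equation of $S_2$.
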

 We briefly describe the procedure of our proof of Lemma \ref{theo:main1} below:
\begin{enumerate}
	\item Showing that $N/4-1$ equations $\psi^{(i,j,k,l)}(\mathbf{b})=0, \forall(i,j,k,l)\in S_1$ are linearly independent by showing the matrix rank of this set of equations is $N/4-1$;
	\item Showing that $N/4$ equations $\psi^{(i,j,k,l)}(\mathbf{b})=0, \forall(i,j,k,l)\in S_2$ are linearly independent provided that (i) is satisfied, by showing the matrix rank of this set of equations is $N/4$;
	\item Showing that the nontrivial solution obtained by (i) and (ii) solves the other all equations $\psi^{(i,j,k,l)}(\mathbf{b})=0, \forall(i,j,k,l)\in S\setminus(S_1\cup S_2)$.
\end{enumerate}
The proof of this lemma will be given in Secs.~\ref{sec:odd_equation} and \ref{sec:even_equation}.

 The solution $\mathbf{b}$ of the set of equations in Lemma ~\ref{theo:main1}
 can also be explicitly obtained as in Lemma 6. Its detailed derivation will be given in \ref{sec:explicit_solution}.

\begin{lemma}
Let $N\in\mathbb{N}$ be a multiple of 4 and 
 $b_1$ be a nonzero constant.
 The nontrivial solution of equations
 $\psi^{(i,j,k,l)}(\mathbf{b})=0, \forall (i,j,k,l)\in S_1\cup S_2$ is given by
\begin{eqnarray}
 b_r = \left\{
 \begin{array}{l}
	\displaystyle\frac{b_1\sin^2{\frac{\pi}{N}}}{\sin^2{\frac{r\pi}{N}}}\quad (r=1,2,\cdots, N/2-1),\\
	\displaystyle\frac{b_1}{2}\sin^2{\frac{\pi}{N}}\quad(r=\frac{N}{2}).
 \end{array}\right.
 \label{eqn:explicit_coeffient}
\end{eqnarray}
\label{theo:solution1}
\end{lemma}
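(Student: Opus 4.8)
The plan is to deduce Lemma~\ref{theo:solution1} from Lemma~\ref{theo:main1} by direct verification rather than by solving the system from scratch. Lemma~\ref{theo:main1} guarantees that the homogeneous linear system $\psi^{(i,j,k,l)}(\mathbf{b})=0$, $(i,j,k,l)\in S_1\cup S_2$, has rank $N/2-1$ in the $N/2$ unknowns $b_1,\dots,b_{N/2}$, so its solution set is a line through the origin. The vector $\mathbf{b}$ of~(\ref{eqn:explicit_coeffient}) is nonzero whenever $b_1\neq0$, every component being $b_1$ times a nonzero trigonometric factor. Hence it suffices to check that this particular $\mathbf{b}$ annihilates $\psi^{(i,j,k,l)}$ for every $(i,j,k,l)\in S_1\cup S_2$; the prescribed value of $b_1$ then singles out the point on the line. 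The whole proof thus reduces to two finite trigonometric identities, one for $S_1$ and one for $S_2$.

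For $S_1$, fix $(i,j,k,l)=(0,n+1,N/2-n,N/2-1)$ with $1\le n\le N/4-1$. Since $s_{0\cdot q}=0$, the even-$q$ terms of $f_q^{(i,j,k,l)}$ vanish and only odd $q$ survive; using $c_{(N/2-r)q}=\sin(q\pi/2)\,s_{rq}$ for odd $q$ together with $\sin^2(q\pi/2)=1$ one gets $f_q^{(0,n+1,N/2-n,N/2-1)}=\cos(\frac{(n+1)q\pi}{N})\sin(\frac{nq\pi}{N})\sin(\frac{q\pi}{N})$. Substituting $b_q=b_1\sin^2(\pi/N)/\sin^2(q\pi/N)$ (only odd $q\le N/2-1$ occur), the equation $\psi^{(i,j,k,l)}(\mathbf{b})=0$ collapses to $\sum_{q\ \mathrm{odd}}\cos(\frac{(n+1)q\pi}{N})\sin(\frac{nq\pi}{N})/\sin(\frac{q\pi}{N})=0$. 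The product-to-sum rule and the Dirichlet identity $\sin((2n+1)\theta)/\sin\theta=1+2\sum_{j=1}^{n}\cos(2j\theta)$ reduce this to the single statement that $\sum_{q\ \mathrm{odd},\,1\le q\le N/2-1}\cos(2jq\pi/N)=0$ for $1\le j\le N/4-1$, which is immediate from summing the geometric series $\sum_{t=1}^{N/4}e^{i4jt\pi/N}$ ($4\mid N$ is used so that its $(N/4)$-th power equals $e^{ij\pi}=(-1)^j$). This disposes of all of $S_1$.

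For $S_2$, fix $(i,j,k,l)=(2-m,m,N/2-1,N/2-1)$. Now both parities of $q$ contribute; one checks $c_{(N/2-1)q}^2=\sin^2(q\pi/N)$ for odd $q$ and $s_{(N/2-1)q}^2=\sin^2(q\pi/N)$ for even $q$ (including $q=N/2$, where $\cos^2(q\pi/2)=1$ is used). After inserting $b_q$ — and keeping the extra factor $\frac{1}{2}$ in $b_{N/2}$ — the equation becomes $\sum_{q\ \mathrm{odd}}c_{(2-m)q}c_{mq}-\sum_{q\ \mathrm{even},\,q\neq N/2}s_{(2-m)q}s_{mq}-\frac{1}{2}\sin^2(m\pi/2)=0$ (using $\sin((2-m)\pi/2)=\sin(m\pi/2)$ for the $q=N/2$ term). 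Product-to-sum turns the two remaining sums into elementary root-of-unity sums such as $\sum_{q=1}^{N/2-1}\cos(2q\pi/N)=0$, $\sum_{q\ \mathrm{odd}}\cos(2\mu q\pi/N)$, and $\sum_{t=1}^{N/4}\cos(4\mu t\pi/N)$, each evaluating to $0$, $\pm1$ or $\pm N/4$ according to the residue of $\mu=m-1$ modulo $N/2$. A short case split on the parity of $m$, together with $\sin^2(m\pi/2)\in\{0,1\}$, then shows everything cancels: for odd $m$ the isolated $q=N/2$ contribution is cancelled by the odd-$q$ sum, while for even $m$ each piece vanishes on its own. This disposes of all of $S_2$ and finishes the proof.

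The main obstacle is not any single identity — each is an elementary root-of-unity sum — but the bookkeeping: simultaneously tracking the parity of the summation index $q$, the parity of $m$, and the anomalous cases $r=N/2$ and $q=N/2$, and checking that it is precisely the hypothesis $4\mid N$ that degenerates the geometric sums to the values $\{0,\pm1,\pm N/4\}$. A more constructive alternative — presumably the route taken in~\ref{sec:explicit_solution} — is first to rewrite the $S_1$ equations as the assertion that $C_n:=\sum_{q\ \mathrm{odd}}b_q\cos(2nq\pi/N)$ is affine in $n$ with $C_{N/4}=0$, whence $C_n=C_0(1-4n/N)$; inverting this discrete cosine transform by orthogonality of $\{\cos(2nq\pi/N)\}$ over odd $q$ and recognizing the outcome as the Fej\'er kernel $\frac{1}{M}(\sin(M\phi/2)/\sin(\phi/2))^2$ with $M=N/4$, $\phi=2q\pi/N$, yields $b_q\propto\sin^2(q\pi/4)/\sin^2(q\pi/N)=\frac{1}{2}/\sin^2(q\pi/N)$ for odd $q$; the even $b_q$ follow analogously from the $S_2$ equations, which force $D_\mu:=\sum_{q\ \mathrm{even}}b_q\sin^2(q\pi/N)\cos(2\mu q\pi/N)$ to be constant in $\mu\ge1$, the constant being pinned down by $\sum_{\mu=0}^{N/2-1}D_\mu=0$.
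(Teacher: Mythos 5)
Your proposal is correct, but it follows a genuinely different route from the paper's. You treat Lemma \ref{theo:solution1} as a verification problem: since Lemma \ref{theo:main1} provides $N/2-1$ linearly independent homogeneous equations in the $N/2$ unknowns $b_1,\dots,b_{N/2}$, the solution set is a one-dimensional line, so it suffices to check that the candidate vector (\ref{eqn:explicit_coeffient}) satisfies every equation indexed by $S_1\cup S_2$; after the substitution $b_q\sin^2(q\pi/N)=b_1\sin^2(\pi/N)$ (with the extra $1/2$ at $q=N/2$) the equations collapse to root-of-unity sums over odd and even $q$, whose evaluation to $0$, $\pm 1$ or $\pm N/4$ uses exactly the hypothesis $4\mid N$, and these identities do check out. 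The paper instead derives the formula constructively in \ref{sec:explicit_solution}: it uses the reduced forms (\ref{eqn:p1m1}) and (\ref{eqn:equation_for_symmetry_odd_matrix2}) to write $\mathbf{B}=A\mathbf{b}_\mathrm{odd}$ and $\mathbf{G}=D\mathbf{b}_\mathrm{even}$ affinely in $B_1$ and $G_1$, then inverts the cosine matrices through the explicit inverses $\bar{A}$ and $\bar{D}$ of \ref{sec:regularity_C} and \ref{sec:regularity_D} and evaluates the resulting Fej\'er-type sums; this is essentially the ``constructive alternative'' you sketch in your last paragraph. Your route is shorter and leans only on the independence count of Lemma \ref{theo:main1} (whose proof does not use Lemma \ref{theo:solution1}, so there is no circularity), but it presupposes the closed form; the paper's derivation produces the formula, together with the constants $W$ and $G_1$, without guessing. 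One minor slip in your $S_2$ bookkeeping: for odd $m$ the isolated $q=N/2$ contribution $-\tfrac12\sin^2(m\pi/2)=-\tfrac12$ is cancelled by the residue of the even-$q$ sum (for $m\ge3$ one has $\sum_{q\,\mathrm{odd}}\cos(2(m-1)q\pi/N)=0$ while $\sum_{t=1}^{N/4-1}\cos(4(m-1)t\pi/N)=-1$, contributing $+\tfrac12$; for $m=1$ the odd- and even-$q$ sums give $N/4$ and $N/4-1$), not by the odd-$q$ sum as you state; the identities you invoke are the correct ones, so the conclusion is unaffected.
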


   Combining Lemmas 5 and 6, we can obtain the following main theorem.
   The lattice model given in the following main theorem is called {\it the pairwise interaction symmetric lattice} (PISL)\cite{doi2016}.
\begin{theo}
 Let $N\in\mathbb{N}$ be a multiple of 4
 and $\mathbf{b} = (b_1, b_2, \cdots, b_{N/2})$ be
 a nontrivial solution of equations
 $\psi^{(i,j,k,l)}(\mathbf{b})=0, \forall (i,j,k,l)\in S_1 \cup S_2$. 
 Then, the lattice defined by the Hamiltonian (\ref{eqn:proposed_model})
 is a symmetric lattice, and  the explicit expression of Hamiltonian (\ref{eqn:proposed_model}) is given as follows:
%

%
\begin{eqnarray}
 H &=& \frac{1}{2}\sum_{n=1}^{N}{p_{n}^2}+\frac{1}{2}\sum_{n=1}^{N}{\left[\mu_0 q_n^2+\mu_1(q_{n+1}-q_n)^2\right]}\nonumber\\
  &+&\frac{1}{4}\sum_{n=1}^{N}{\sum_{r=1}^{N/2-1}{\frac{b_1\sin^2{\frac{\pi}{N}}}{\sin^2{\frac{r\pi}{N}}}
  (q_{n+r}-q_n)^4}}+  \frac{1}{8}\sum_{n=1}^{N}{b_1\sin^2{\left(\frac{\pi}{N}\right)} (q_{n+N/2}-q_n)^4},\nonumber\\
\label{eqn:explicit_symmetric_lattice}	
\end{eqnarray}
 where $b_1$ is an arbitrary nonzero constant.
\label{theo:main2}
\end{theo}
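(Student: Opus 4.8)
The plan is to obtain Theorem~\ref{theo:main2} as a direct synthesis of Lemmas~\ref{theo:main1} and~\ref{theo:solution1} together with the reduction already set up in this section. Recall from Definition~\ref{def:symmetric} that the lattice (\ref{eqn:hamiltonian_in_U_model}) is symmetric precisely when $\mathcal{I}$ is an invariant subspace and $\Psi(\mathbf{U},0)=0$. The first condition was already observed to hold for (\ref{eqn:hamiltonian_in_U_model}), and since $\Psi(\mathbf{U},U_{N/2})=\Phi_\mathrm{a}(\mathbf{U})+(\mu_0/2)U_{N/2}^2$ we have $\Psi(\mathbf{U},0)=\Phi_\mathrm{a}(\mathbf{U})$. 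Hence the symmetry claim reduces to showing $\Phi_\mathrm{a}(\mathbf{U})\equiv 0$, which by (\ref{eqn:potential_asymmetric}) is equivalent to the algebraic system (\ref{eqn:equation_for_symmetric}), i.e.\ $\psi^{(i,j,k,l)}(\mathbf{b})=0$ for every $(i,j,k,l)\in\mathcal{S}_{\pm}$.

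Next I would run the chain of reductions established above. Because $\psi^{(i,j,k,l)}(\mathbf{b})$ is invariant under permutations of its four indices, the system over $\mathcal{S}_{\pm}$ is equivalent to the system over the ordered subset $S_0$; by Lemma~\ref{lemma:minus} it is further equivalent to the system over $S$; and Lemma~\ref{theo:main1} asserts that a nontrivial solution $\mathbf{b}$ of the subsystem indexed by $S_1\cup S_2$ automatically satisfies all remaining equations of $S$. Since the $\mathbf{b}$ in the statement of Theorem~\ref{theo:main2} is exactly such a solution, it satisfies $\psi^{(i,j,k,l)}(\mathbf{b})=0$ on all of $\mathcal{S}_{\pm}$, so $\Phi_\mathrm{a}(\mathbf{U})\equiv 0$ and the lattice is symmetric.

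It remains to identify $\mathbf{b}$ explicitly. By Lemma~\ref{theo:main1} the subsystem on $S_1\cup S_2$ consists of $N/2-1$ linearly independent homogeneous linear equations in the $N/2$ unknowns $b_1,\dots,b_{N/2}$, so its solution set is a one-dimensional subspace of $\mathbb{R}^{N/2}$. By Lemma~\ref{theo:solution1} the vector (\ref{eqn:explicit_coeffient}), parameterized by $b_1$, lies in this solution set and is nonzero whenever $b_1\neq 0$; hence it spans the line of solutions. Consequently the given nontrivial $\mathbf{b}$ must coincide with (\ref{eqn:explicit_coeffient}), with $b_1$ taken to be its first component (necessarily nonzero, since $b_1=0$ would force $\mathbf{b}=0$ by that same formula). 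Substituting $b_r=b_1\sin^2(\pi/N)/\sin^2(r\pi/N)$ for $r=1,\dots,N/2-1$ and $b_{N/2}=(b_1/2)\sin^2(\pi/N)$ into (\ref{eqn:proposed_model}), and splitting the quartic sum $\sum_{r=1}^{N/2}$ at $r=N/2$, yields the Hamiltonian (\ref{eqn:explicit_symmetric_lattice}).

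I do not expect a genuine obstacle here: once Lemmas~\ref{theo:main1} and~\ref{theo:solution1} are granted, the argument is essentially organizational. The one point that warrants care is the reduction chain $\mathcal{S}_{\pm}\to S_0\to S\to S_1\cup S_2$ — in particular, checking that permutation invariance of $\psi$ together with Lemma~\ref{lemma:minus} really do collapse the full index set $\mathcal{S}_{\pm}$ down to $S$, so that the clause ``solves the other equations in $S$'' of Lemma~\ref{theo:main1} is enough to conclude $\Phi_\mathrm{a}\equiv 0$. A secondary bookkeeping check is the $r=N/2$ term of (\ref{eqn:proposed_model}): each bond $(n,n+N/2)$ is listed twice in $\sum_{n=1}^{N}$, which is why its coefficient in (\ref{eqn:explicit_symmetric_lattice}) carries the extra factor $1/2$; but since we are merely inserting the Lemma~\ref{theo:solution1} values, this is automatic and needs no separate argument.
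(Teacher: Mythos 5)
Your argument is correct and follows essentially the same route as the paper: Lemma~\ref{theo:main1} (via the permutation/Lemma~\ref{lemma:minus} reduction $\mathcal{S}_{\pm}\to S_0\to S$) gives $\Phi_\mathrm{a}\equiv 0$ so Definition~\ref{def:symmetric} applies, and Lemma~\ref{theo:solution1} supplies the explicit coefficients for (\ref{eqn:explicit_symmetric_lattice}). You merely spell out details the paper leaves implicit (the one-dimensionality of the solution space and the $r=N/2$ bookkeeping), which is fine.
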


%
%
\section{Calculation method for traveling DB}\label{sec:calculation_method}

  It has been reported in our previous work\cite{doi2016} that
 one of the good features of the PISL is that DBs in the PISL have smooth mobility,
 that is, each traveling DB propagates with a constant velocity.
 It has been also reported that
 the PISL has a rather large tolerance in the initial perturbation
 that generates a smoothly propagating traveling DB from a stationary DB.
 These results indicate the proposed model is useful
 for obtaining an initial guess for finding a traveling DB by iteration method.
 We propose the following procedure (\ref{enm:step1})-(\ref{enm:step7}) for computing
 a traveling DB solution, which utilizes the PISL.
\begin{enumerate}
\item Consider the following lattice
 which has a parameter $C$ controlling the symmetry of lattice
 (PISL for $C=1$ and FPU-$\beta$ for $C=0$).
\begin{eqnarray}
 H &=& \sum_{n=1}^{N}\left[\frac{1}{2}{p_{n}^2} +
 \frac{1}{2}(q_{n+1}-q_n)^2+\frac{b_1}{4} (q_{n+1}-q_n)^4\right]
\nonumber\\
 &&+\frac{C}{4}\sum_{n=1}^{N}{\sum_{r=2}^{N/2}{b_r (q_{n+r}-q_n)^4}},
\label{eqn:explicit_tascl}	
\end{eqnarray}
 where $b_r$ is given by Eq.~(\ref{eqn:explicit_coeffient}).
 This lattice was named
 the {\it translational asymmetry controlled lattice (TASCL)} \cite{Doi2009}.
 The equations of motion are given by
\begin{eqnarray}
 \dot{q}_n &=& p_n,
\label{eqn:motion_tascl_q}
\\
 \dot{p}_n &=& q_{n+1}+q_{n-1}-2q_n + b_1\left[(q_{n+1}-q_{n})^3+(q_{n-1}-q_n)^3\right]
\nonumber
\\
 &&+C\sum_{r=2}^{N/2}{b_r\left[(q_{n+r}-q_n)^3+(q_{n-r}-q_n)^3\right]},
\label{eqn:motion_tascl}
\end{eqnarray}
 where $n=1,2,\cdots,N$. 
 Denote $(\mathbf{q},\mathbf{p})=(q_1,\cdots,q_N,p_1,\cdots,p_N)$. 
 The temporal evolution of a solution
 with its initial condition $(\mathbf{q}(0),\mathbf{p}(0))$
 is obtained by integrating
 Eqs.~(\ref{eqn:motion_tascl_q}) and (\ref{eqn:motion_tascl}).
 This temporal evolution over the duration $\tau$
 induces the map
 $\mathcal{F}_{C,\tau}:\mathbb{R}^{2N}\to\mathbb{R}^{2N}$
 as follows:
\begin{eqnarray}
 \mathcal{F}_{C,\tau}(\mathbf{q}(0),\mathbf{p}(0))
 =(\mathbf{q}(\tau),\mathbf{p}(\tau)).
\label{eqn:evolution_map}
\end{eqnarray}
\label{enm:step1}
%

\item Construct an approximate stationary DB
 with a prescribed angular frequency $\omega_\mathrm{DB}$
 in the symmetric lattice ($C=1$).
 Assume the approximate stationary DB solution in the form
\begin{eqnarray}
 q_n(t) = a_n \cos{\omega_\mathrm{DB} t},\quad n=1,2,\cdots,N,
\label{eqn:stationary_db}
\end{eqnarray}
 where $a_n$ represents the spatial profile of stationary DB.
 Substituting Eq.~(\ref{eqn:stationary_db})
 into Eqs.~(\ref{eqn:motion_tascl_q}) and (\ref{eqn:motion_tascl})
 and performing the rotating wave approximation,
 we obtain the algebraic equations for $a_n$ as follows:
\begin{eqnarray}
 &&a_{n+1}+a_{n-1}-(2-\omega_\mathrm{DB}^2)a_n
 + \frac{3b_1}{4}\left[(a_{n+1}-a_{n})^3+(a_{n-1}-a_n)^3\right]
\nonumber
\\
 &&+\frac{3}{4}\sum_{r=2}^{N/2}{b_r\left[(a_{n+r}-a_n)^3
 +(a_{n-r}-a_n)^3\right]}=0,
 \quad n=1,2,\cdots,N.\quad
\label{eqn:equation_an_sym}
\end{eqnarray}
 The particle amplitudes $a_n,\,n=1,2,\cdots N$,
 of the approximate stationary DB
 are obtained by numerically solving Eq.(\ref{eqn:equation_an_sym}).
\label{enm:step2}

\item Construct a precise numerical solution of stationary DB
 with the angular frequency $\omega_\mathrm{DB}$
 in the symmetric lattice ($C=1$)
 under the constraint $U_{N/2}=0$.
 This is performed by finding the periodic orbit in the phase space.
 Let $(\mathbf{q}^{(\mathrm{s})}(0),\mathbf{p}^{(\mathrm{s})}(0))$ be
 the initial state of the stationary DB
 and $T = 2\pi/\omega_\mathrm{DB}$ be
 its internal oscillation period.
 The temporal evolution map
 $\mathcal{F}_{1,T}(\mathbf{q}^{(\mathrm{s})}(0),\mathbf{p}^{(\mathrm{s})}(0))$
 is defined by integration of
 Eqs.~(\ref{eqn:motion_tascl_q}) and (\ref{eqn:motion_tascl})
 with $C=1$ over the period $T$.
 The initial state has to satisfy the condition
\begin{eqnarray}
 \mathcal{F}_{1,T}(\mathbf{q}^{(\mathrm{s})}(0),\mathbf{p}^{(\mathrm{s})}(0))
 =(\mathbf{q}^{(\mathrm{s})}(0),\mathbf{p}^{(\mathrm{s})}(0)).
\label{eqn:map_for_stationary_db}
\end{eqnarray}
 This is an equation for
 $(\mathbf{q}^{(\mathrm{s})}(0),\mathbf{p}^{(\mathrm{s})}(0))$.
 Solve Eq.~(\ref{eqn:map_for_stationary_db})
 by the Newton method with using $(a_1,\cdots,a_{N},0,\cdots,0)$
 as an initial guess for $(\mathbf{q}^{(\mathrm{s})}(0),\mathbf{p}^{(\mathrm{s})}(0))$. 
\label{enm:step3}

\item Construct an approximate traveling DB with velocity
 $v_\mathrm{DB} = r/s$ [site/period]
 in the symmetric lattice ($C=1$) under the constraint $U_{N/2}=0$
 by adding small perturbation to the stationary DB
 obtained in step (\ref{enm:step3}).
 The parameters $r$ and $s$ are integers.
 This means that the traveling DB propagates $r$ lattice spacings
 during $s$ internal oscillating periods $sT$, where $T = 2\pi/\omega_\mathrm{DB}$.
 It is natural to take the perturbation parallel to
 the direction $d\mathcal{T}_\lambda[\mathbf{U}]/d\lambda$, 
 since the map $\mathcal{T}_\lambda$ 
 represents a translational shift of DB along the lattice.
 Let $\delta U_m$ be each component of the perturbation vector,
 and we set
\begin{eqnarray}
 \delta U_m = -i m U_m\cdot \delta l,\quad m=-N/2+1,\cdots,N/2-1
\end{eqnarray}
 from a simple calculation of $d/d\lambda(U_m \exp(-im\lambda)) = -im U_m\exp(-im\lambda)$
 and $\delta U_{N/2}=0$ from $U_{N/2}=0$.
 The parameter $\delta l$ determines the magnitude of perturbation.
 The perturbation $\delta \mathbf{p}=(\delta p_1, \delta p_2,\cdots, \delta p_N)$
 in the physical space is given by
\begin{eqnarray}
 \delta p_n = -\delta l\frac{(-1)^n}{\sqrt{N}}
 \sum_{m=-N/2+1}^{N/2}{imU_m \exp\left[{-i\frac{2\pi n}{N}m}\right]},
 \quad n=1,2,\cdots, N.\nonumber\\
\label{eqn:variable_transformation_to_dp}
\end{eqnarray}
Fig.~\ref{fig:vdb_in_perturbed_db} shows the relation between the parameter $\delta l$ and the velocity $v_\mathrm{DB}$ of the traveling DB constructed by the perturbation (\ref{eqn:variable_transformation_to_dp}). 
The detailed procedure for estimating the velocity $v_\mathrm{DB}$ of the approximate traveling DB is described in \ref{sec:estimate_vdb}.
 It is found that the velocity $v_\mathrm{DB}$ is proportional to $\delta l$ in a certain range.
  Therefore, the parameter $\delta l$ is adjusted so that the traveling DB has
 the prescribed velocity $v_\mathrm{DB}$ (cf.~\ref{sec:estimate_vdb}).
 We obtain the initial state of the approximate traveling DB solution as
 $\bar{\mathbf{X}}^{(\mathrm{t})}(0)=(\mathbf{q}^{(\mathrm{s})}(0)
 ,\mathbf{p}^{(\mathrm{s})}(0)+\delta\mathbf{p})$.
\label{enm:step4}

\item
 Consider the TASCL with $C\in[0,1]$.
 For the prescribed values $T_\mathrm{DB}=2\pi/\omega_\mathrm{DB}$ and $v_\mathrm{DB}=r/s$,
 define the map $\mathcal{M}_C:\mathbb{R}^{2N}\to\mathbb{R}^{2N}$
 as follows:
\begin{eqnarray}
 \mathcal{M}_C = (-1)^r \mathcal{S}_r\circ \mathcal{F}_{C,sT},
\end{eqnarray}
 where $\mathcal{S}_r:\mathbb{R}^{2N}\to \mathbb{R}^{2N}$
 is the map that represents the cyclic permutation as follows:
\begin{eqnarray}
 \mathcal{S}_r(q_1,\cdots,q_N,p_1,\cdots,p_N)
 =(q_{1-r},\cdots,q_{N-r},p_{1-r},\cdots,p_{N-r}),
 \label{eqn:sr}
\end{eqnarray}
 Note that if the index $i-r\ (i=1,2,\cdots, N)$ of $q_{i-r}$ in RHS is not positive, it should be interpreted as $i-r+N$ since we consider periodic boundary conditions. 
 Let $(\mathbf{q}^{(\mathrm{t})}(0),\mathbf{p}^{(\mathrm{t})}(0))$ be
 the initial state of the traveling DB
 with $\omega_\mathrm{DB}$ and $v_\mathrm{DB}$. 
 The initial state has to satisfy the condition
\begin{equation}
 \mathcal{M}_C(\mathbf{q}^{(\mathrm{t})}(0),\mathbf{p}^{\mathrm{(t})}(0))
 =(\mathbf{q}^{(\mathrm{t})}(0),\mathbf{p}^{\mathrm{(t})}(0)).
\label{eqn:equation_for_moving_db}
\end{equation}
 This is an equation for
 $(\mathbf{q}^{(\mathrm{t})}(0),\mathbf{p}^{(\mathrm{t})}(0))$.
 It is possible to solve it by using the Newton method
 to find the solution
 precisely.
 Denote the solution of Eq.~(\ref{eqn:equation_for_moving_db}) with
 $\mathbf{X}_C^{(\mathrm{t})}(0)=(\mathbf{q}^{(\mathrm{t})}(0),\mathbf{p}^{(\mathrm{t})}(0))$.
\label{enm:step5}

\item 
 Consider the symmetric lattice ($C=1$).
 Solve Eq.~(\ref{eqn:equation_for_moving_db}) 
 by the Newton method
 with using $\bar{\mathbf{X}}^{(\mathrm{t})}(0)$ in step (\ref{enm:step4})
 as an initial guess
 to obtain $\mathbf{X}_1^{(\mathrm{t})}(0)$. 
\label{enm:step6}

\item Continue the solution $\mathbf{X}_1^{(\mathrm{t})}(0)$ in step (\ref{enm:step6})
 to the solution $\mathbf{X}_0^{(\mathrm{t})}(0)$
 for the FPU-$\beta$ lattice ($C=0$).
 This continuation is performed
 by repeatedly solving Eq.~(\ref{eqn:equation_for_moving_db})
 with gradually reducing the parameter $C$ until $C=0$. 
 Let $\Delta C>0$ be a small constant and
 $\mathbf{X}_C^{(\mathrm{t})}(0)$ be the solution of 
 Eq.~(\ref{eqn:equation_for_moving_db}) for $C$.
 Equation~(\ref{eqn:equation_for_moving_db}) for $C-\Delta C$
 can be solved by using $\mathbf{X}_C^{(\mathrm{t})}(0)$ as an initial guess
 for the Newton method.
 \label{enm:step7}
\end{enumerate}
%

   In steps (\ref{enm:step3}), (\ref{enm:step6}), and (\ref{enm:step7}),
 we have to find a fixed point $\mathbf{z} = (\mathbf{q}(0),\mathbf{p}(0))$
 of a map by solving the equation of the form
\begin{eqnarray}
 \mathcal{F}[\mathbf{z}] = \mathbf{z},
\label{eqn:equation_map_basic}
\end{eqnarray}
 where
 $\mathcal{F}:\mathbb{R}^{2N}\to\mathbb{R}^{2N}$
 is a continuously differentiable map.
 A fixed point $\mathbf{z}$ can be found by using the Newton method
 described below.
 Let $\mathbf{z}_0$ be a point that is close to
 the fixed point $\mathbf{z}$ of the map $\mathcal{F}$.
 Let $\mathbf{\Delta} = \mathbf{z}-\mathbf{z}_0$ be the deviation.
 Substituting $\mathbf{z} = \mathbf{z}_0 + \mathbf{\Delta}$
 into Eq.~(\ref{eqn:equation_map_basic})
 and performing the Taylor expansion with respect to $\mathbf{\Delta}$,
 we obtain
\begin{eqnarray}
 \mathcal{F}[\mathbf{z}_0]
 + D\mathcal{F}\cdot \mathbf{\Delta} - (\mathbf{z}_0 + \mathbf{\Delta}) \simeq \mathbf{0},
\label{eqn:equation_map_basic2}
\end{eqnarray}
 where $D\mathcal{F}$ is the Jacobian matrix of $\mathcal{F}$ evaluated at $\mathbf{z}_0$.
 From (\ref{eqn:equation_map_basic2}), we obtain
\begin{eqnarray}
 \mathbf{\Delta} = -(D\mathcal{F}-I)^{-1}(\mathcal{F}[\mathbf{z}_0]-\mathbf{z}_0).
\label{eqn:equation_correction_vector}
\end{eqnarray}
 Equation (\ref{eqn:equation_correction_vector}) gives
 the improved approximation $\mathbf{z}_0^{'} = \mathbf{z}_0 +\mathbf{\Delta}$.
 We can obtained an accurate numerical solution
 of Eq.~(\ref{eqn:equation_map_basic})
 by repeating this calculation until $|\mathbf{\Delta}|$ becomes sufficiently small.

 In the case of temporal evolution map $\mathcal{F}_{C,T}$,
 its Jacobian matrix $D\mathcal{F}_{C,T}$
 can be evaluated from the variational equation of
 Eqs.~(\ref{eqn:motion_tascl_q}) and (\ref{eqn:motion_tascl}). 
 which is given by
\begin{eqnarray}
 \dot{\xi}_n &=& \eta_n,\\
 \dot{\eta}_n &=& \xi_{n+1}+\xi_{n-1}-2\xi_n\nonumber\\
 &+& 3b_1\left[(q_{n+1}-q_{n})^2(\xi_{n+1}-\xi_{n})+(q_{n-1}-q_n)^2(\xi_{n-1}-\xi_n)\right]
\nonumber
\\
 &+&3C\sum_{r=2}^{N/2}{b_r\left[(q_{n+r}-q_n)^2(\xi_{n+r}-\xi_n)+(q_{n-r}-q_n)^2(\xi_{n-r}-\xi_n)\right]} 
\label{eqn:variational_tascl}
\end{eqnarray}
 where $n=1,2,\cdots,N$,
 $\xi_n$ and $\eta_n$ are variations in $q_n$ and $p_n$, respectively.
 Integration of Eq.~(\ref{eqn:variational_tascl})
 over the period $T$
 induces the temporal evolution map
 $\mathcal{G}_T:\mathbb{R}^{2N}\to\mathbb{R}^{2N}$ given by
\begin{eqnarray}
 \mathcal{G}_T[(\mathbf{\xi}(0), \mathbf{\eta}(0))]
 = (\mathbf{\xi}(T), \mathbf{\eta}(T)),
\end{eqnarray} 
 The Jacobian matrix $D\mathcal{F}_{C,T}$ is given by
\begin{eqnarray}
 D\mathcal{F}_{C,T}
 = (\mathcal{G}_T[\mathbf{\Delta}_1]^t,
 \mathcal{G}_T[\mathbf{\Delta}_2]^t,\cdots,
 \mathcal{G}_T[\mathbf{\Delta}_{2N}]^t),
\end{eqnarray}
 where the superscript $t$ stands for the transposition
 and $\mathbf{\Delta}_n$ is $2N$ vector
 in which only $n$th component is one and the other components are zero.

  In the case of map $\mathcal{M}_C$,
 its Jacobian matrix $D\mathcal{M}_C$,
 which is needed in steps (\ref{enm:step6}) and (\ref{enm:step7}), is given by
\begin{eqnarray}
 D\mathcal{M}_C =  (-1)^{r}S_r\circ (\mathcal{G}_{sT}[\mathbf{\Delta}_1]^t,
 \mathcal{G}_{sT}[\mathbf{\Delta}_2]^t,\cdots,
 \mathcal{G}_{sT}[\mathbf{\Delta}_{2N}]^t).
\end{eqnarray}
where $S_r$ is the matrix that represents the map (\ref{eqn:sr}).

 It should be noted that Eq.~(\ref{eqn:map_for_stationary_db}) in step (\ref{enm:step3}) is degenerate because of the translational invariance of equation of motion due to the conservation of total angular momentum and the arbitrariness of spatial symmetry of profile of stationary DB due to the extra conserved quantity of the symmetric lattices (see Eq.~(\ref{eqn:extra_conserved_quantity})). Therefore, we perform the Newton method under the constraint of $\sum_{n=1}^{N}q_n =0$ and keeping the spatial symmetry of profile of stationary DB, i.e., the even or odd mode.
Moreover, Eq.~(\ref{eqn:map_for_stationary_db}) is degenerate because of the arbitrariness of the initial point along the periodic orbit. In order to remove this degeneracy, we consider the constraint of $\mathbf{p}^{(\mathrm{s})}(0)=\mathbf{0}$ \cite{flach2008a}.
 
 As the same manner, Eq.(\ref{eqn:equation_for_moving_db}) in steps (\ref{enm:step6}) and (\ref{enm:step7}) is degenerate because of the translational invariance of equation of motion and the arbitrariness of the initial point along the trajectory of the traveling DB. In order to remove this degeneracy, we perform the calculation under the constraints of $\sum_{n=1}^{N}q_n = \sum_{n=1}^{N}p_n =0$, and $q_l >0$ and $p_l =0$, where $l$ is the index of a particle that has the maximum amplitude.

   Examples of the numerical solutions obtained
 by the above-mentioned procedure are presented
 in Fig.~\ref{fig:energy_profile2}.
 The internal period of DB is $T_\mathrm{DB}=2$
 and the velocity $v_\mathrm{DB}=1/10$. 
 Fig.~\ref{fig:energy_profile2}	shows
 particle energy profiles of DBs
 with different values of $C$. 
 The particle energy is defined by
 \begin{eqnarray}
 	e_n &=& \frac{1}{2}{p_{n}^2} +
 \frac{1}{4}\left[(q_{n}-q_{n-1})^2+(q_{n+1}-q_n)^2\right]\nonumber\\
 &&+\frac{b_1}{8} \left[(q_{n}-q_{n-1})^4+ (q_{n+1}-q_n)^4\right]
\nonumber\\
 &&+\frac{C}{8}\sum_{r=2}^{N/2}{b_r \left[(q_{n}-q_{n-r})^4+(q_{n+r}-q_n)^4 \right]}.
 \end{eqnarray}
 In the symmetric case ($C=1$),
 the traveling DB has no constant tail.
 By decreasing $C$, the spatially extended tail gradually appears. 
 The trajectory of averaged center position 
 of traveling DB with different $C$ 
 is presented in Fig.~\ref{fig:trajectory}.
 The center position of traveling DB is defined by
 \begin{eqnarray}
 	x(t) = \sum_{n=1}^{N}{n e_n}.
 \end{eqnarray}
 We perform the short-time average of $x(t)$ by
 \begin{eqnarray}
     X(t)=\frac{1}{2\tau}\int_{t-\tau}^{t+\tau}{x(t)dt}	
 \end{eqnarray}
 in order to reduce fluctuations of $x(t)$ due to the internal vibration of traveling DB.
 Figure \ref{fig:trajectory} shows the averaged center position $X(t)$.
 The DB travels with a constant velocity in the symmetric case ($C=1$).
 The slope of the trajectory is $1/20$, which is equal to $v_\mathrm{DB}/T$.
 In the FPU-$\beta$ lattice case ($C=0$), the velocity of DB periodically varies,
 but the averaged slope of the trajectory still coincides with $1/20$.
 These numerical results in Figs.~\ref{fig:energy_profile2} and \ref{fig:trajectory}
 demonstrate that
 the proposed calculation method works well
 and successfully computes the traveling DB in the FPU-$\beta$ lattice.

 In step (\ref{enm:step4}), we can obtain a good approximate traveling DB in the PISL which has a constant velocity. This is quite different from the case in the FPU-$\beta$ lattice because a traveling DB constructed from the perturbation gradually decreases its velocity. An advantage of the proposed method is that it is possible to obtain the traveling DB with a constant velocity easily.
  
 In this section, we focus on the calculation method for the traveling DB in the FPU-$\beta$ lattice. The proposed calculation method may apply to compute traveling DBs by the continuation in the other nonlinear lattices such as the nonlinear Klein-Gordon lattices and FPU-$\alpha\beta$ lattice, provided that no bifurcation occurs during the continuation from the PISL.

\begin{figure}[htpb]
\centering
\includegraphics{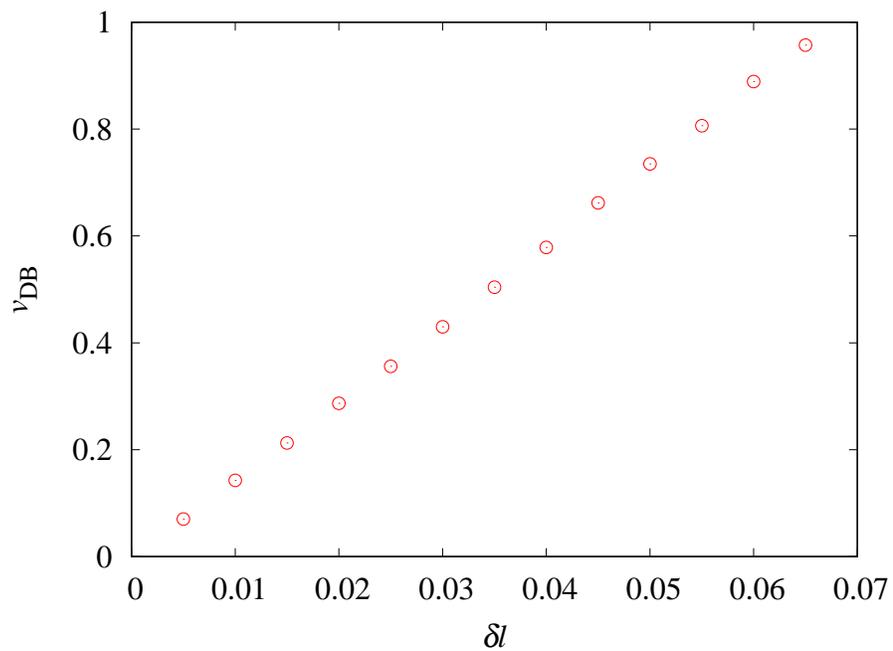}
\caption{Relation between $\delta l$ and the velocity of approximate traveling DB $v_\mathrm{DB}$. The period of internal vibration of the stationary DB is $T_\mathrm{DB}=2$.}
\label{fig:vdb_in_perturbed_db}
\end{figure}
\clearpage
\begin{figure}[htpb]
\centering
	\begin{tabular}{cc}
		\includegraphics[width=0.45\linewidth]{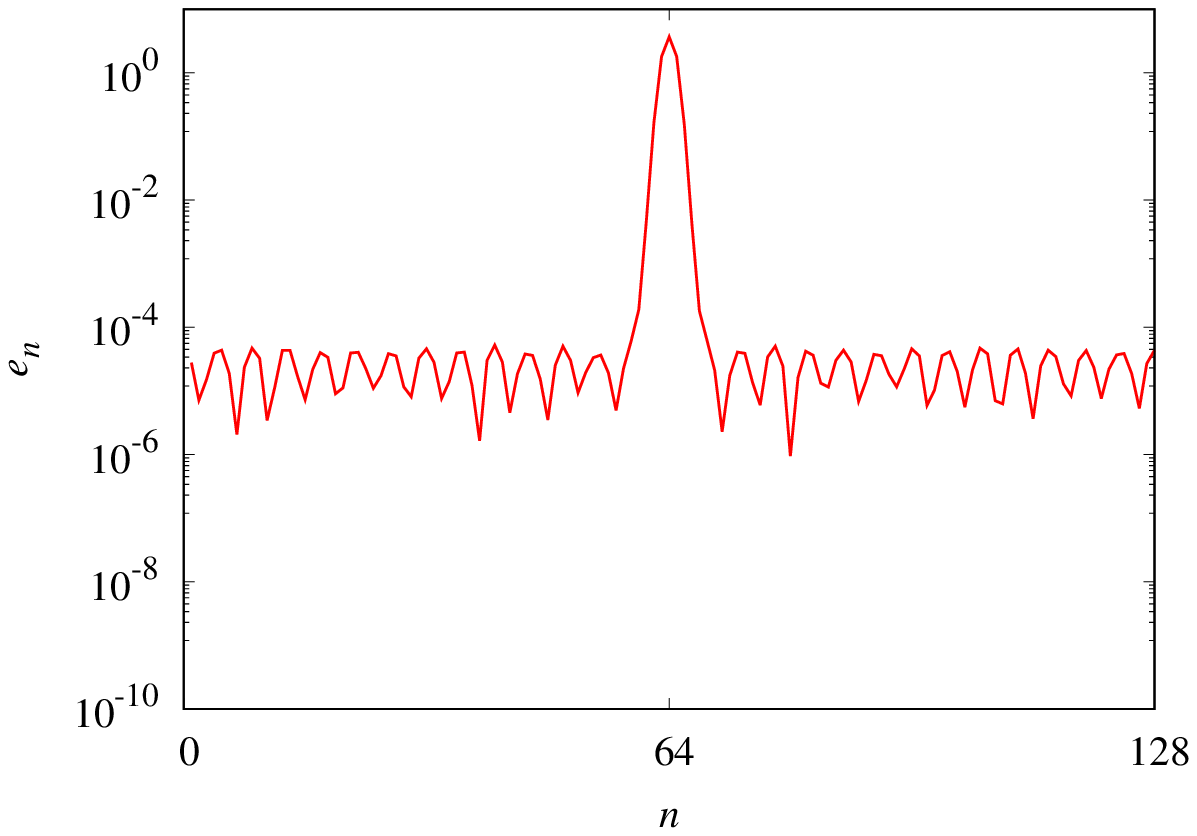}&
		\includegraphics[width=0.45\linewidth]{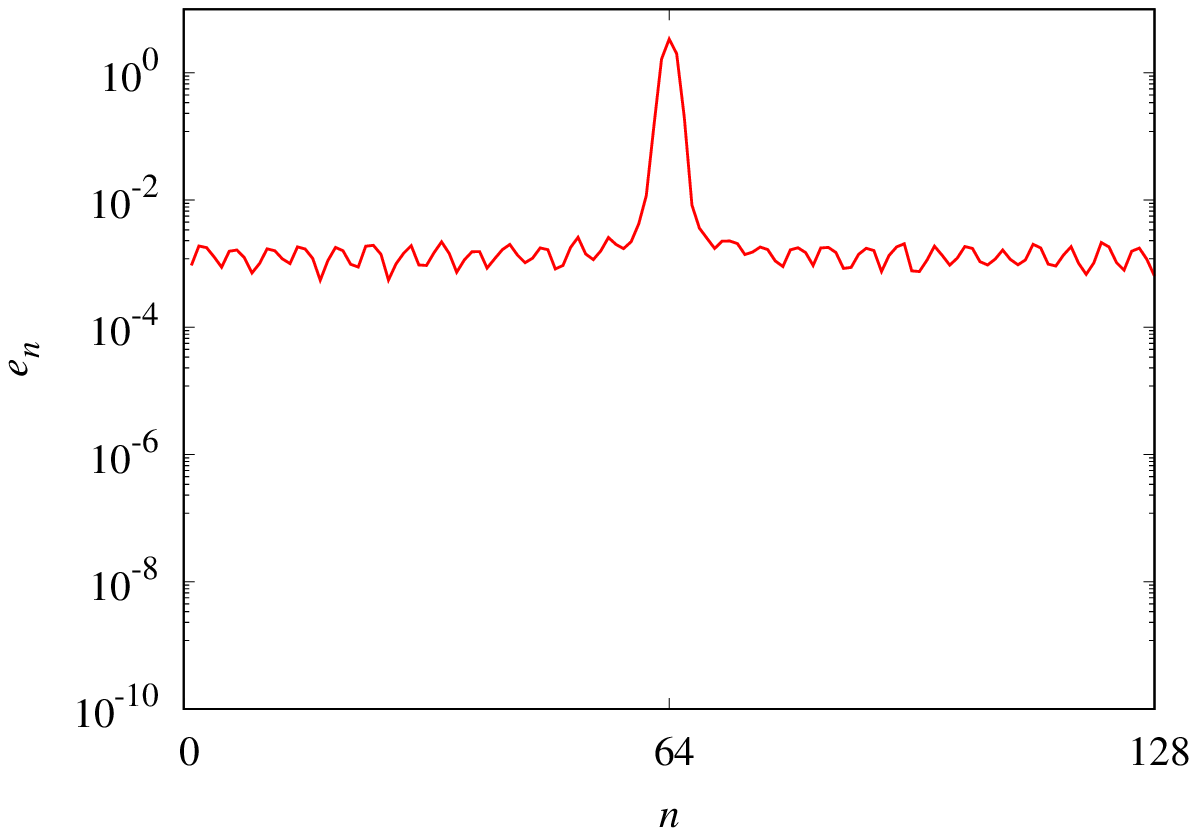}\\
		(a) $C=0.0$&(b) $C=0.2$\\
		\includegraphics[width=0.45\linewidth]{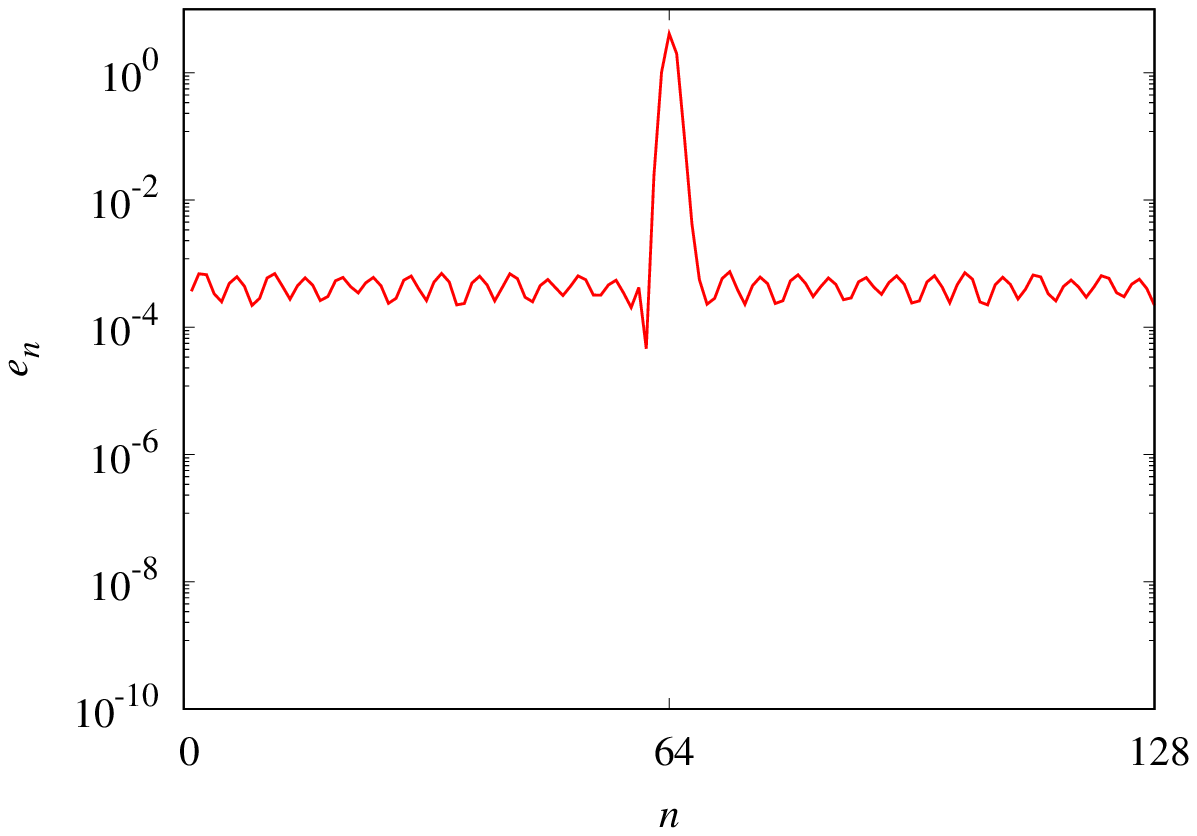}&
		\includegraphics[width=0.45\linewidth]{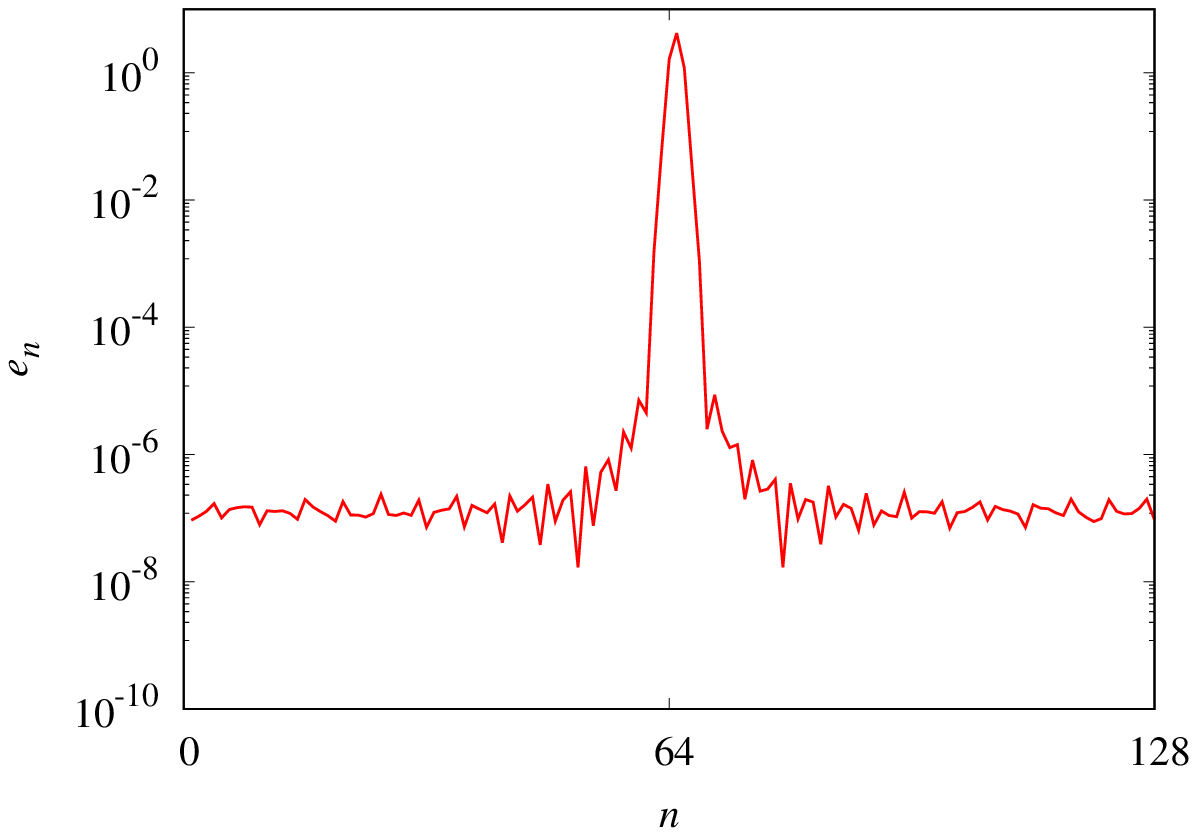}\\
		(c) $C=0.4$&(d) $C=0.6$\\
		\includegraphics[width=0.45\linewidth]{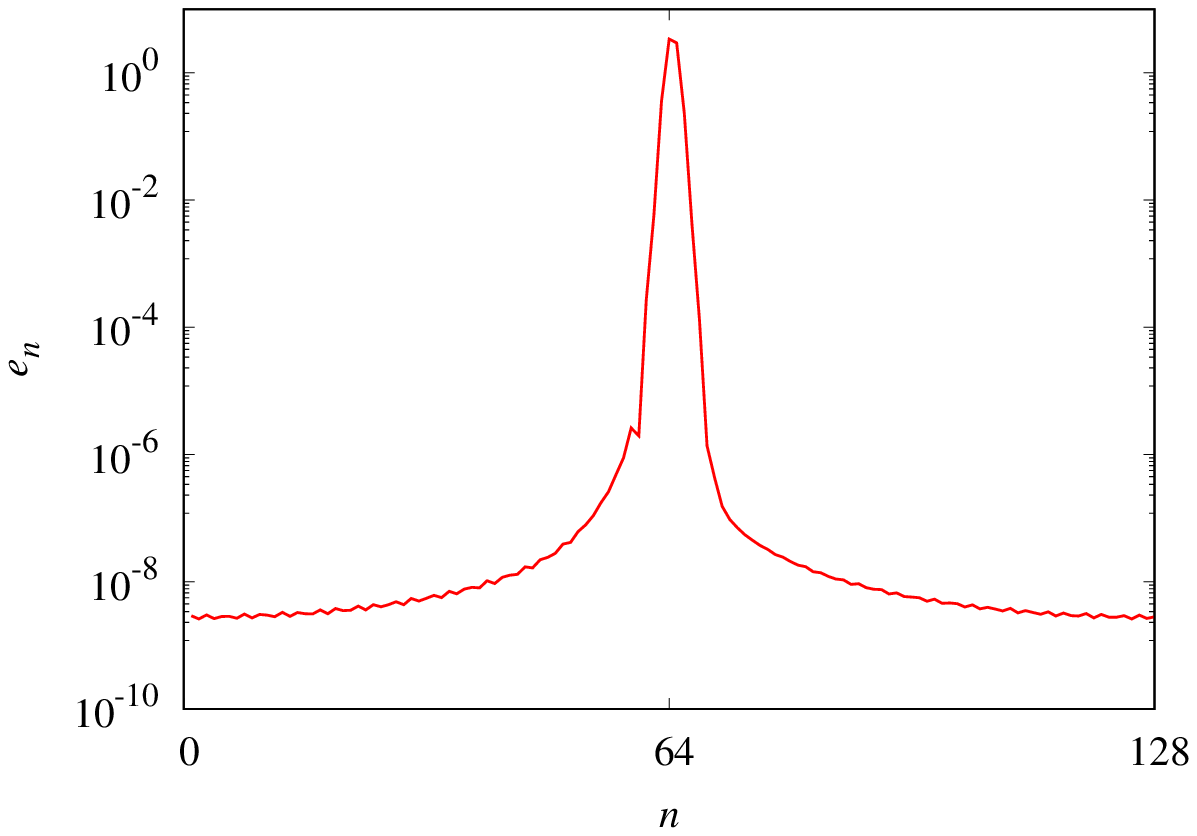}&
		\includegraphics[width=0.45\linewidth]{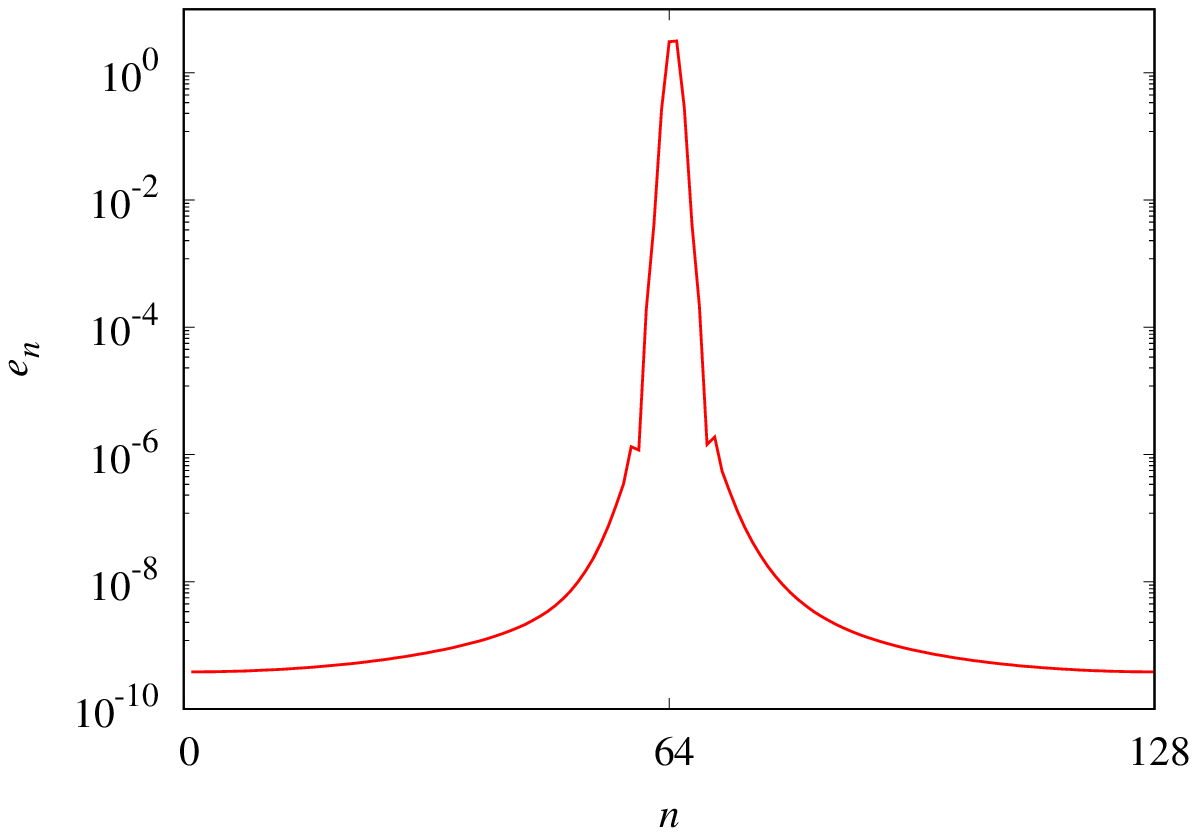}\\
		(e) $C=0.8$&(f) $C=1.0$\\
	\end{tabular}
	\caption{Energy profile of DBs obtained by the iteration method, $T_\mathrm{DB}=2$ and $v_\mathrm{DB}=1/10$.}
\label{fig:energy_profile2}
\end{figure}

\begin{figure}
    \centering
\begin{tabular}{c}
	\includegraphics{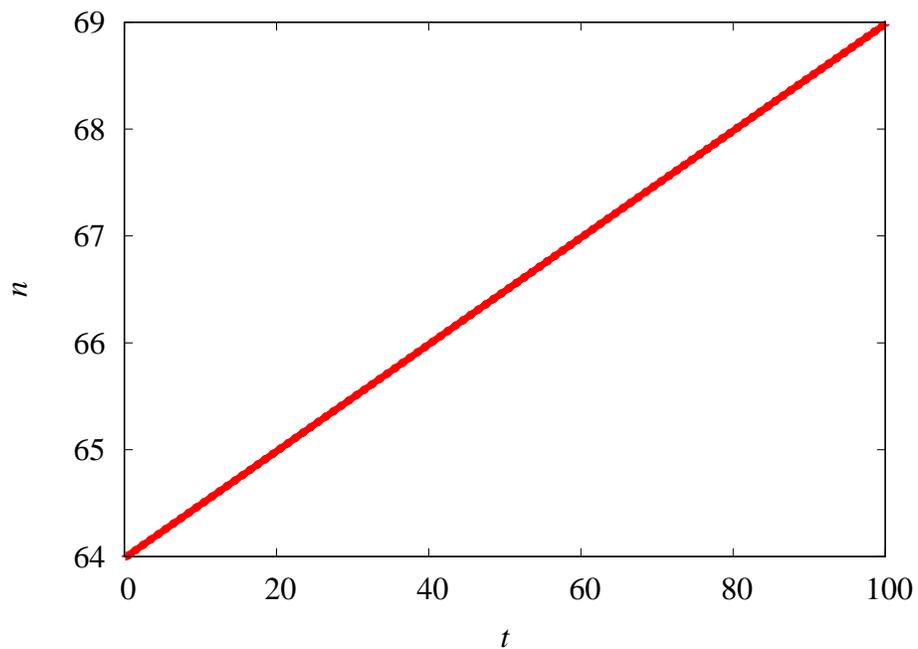}\\
	(a)$C=1$\\
	\includegraphics{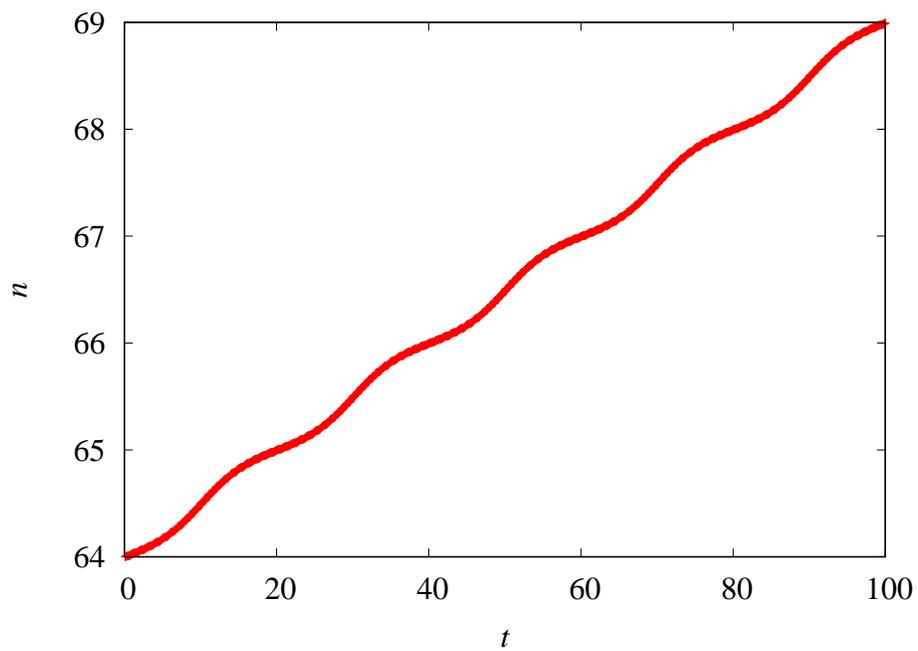}\\
	(b)$C=0$
\end{tabular}
	\caption{Trajectory of traveling DB with $v_\mathrm{DB}=1/10$ and $T_\mathrm{DB}=2\pi/\omega_\mathrm{DB}=2$ in (a)symmetric lattice ($C=1$) and (b)FPU-$\beta$ lattice ($C=0$) with $N=128$.}
	\label{fig:trajectory}
\end{figure}
\clearpage

%
%
\section{Truncated PISL and effects of breaking symmetry}\label{sec:truncated}
It is useful for investigating effects of breaking the symmetry to construct {\it a truncated PISL}, in which only up to $M$-th nearest neighbor interactions are considered:
\begin{eqnarray}
 H &=& \frac{1}{2}\sum_{n=1}^{N}{p_{n}^2} +\frac{1}{2}\sum_{n=1}^{N}{\left[\mu_0q_n^2+\mu_1(q_{n+1}-q_n)^2\right]}\nonumber\\&+&
  \frac{1}{4}\sum_{n=1}^{N}{\sum_{r=1}^{M}{b_r(q_{n+r}-q_n)^4}}.
 \label{eqn:explicit_approximated_symmetric_lattice}
\end{eqnarray}

The lattice (\ref{eqn:explicit_approximated_symmetric_lattice}) can be regarded as the PISL with the perturbation term $\Delta H$ as follows: 
\begin{eqnarray}
	\Delta H(M) = -\frac{1}{4} \sum_{n=1}^{N}{\sum_{r=M+1}^{N/2}{b_1 (q_{n+r}-q_n)^4}}.
	\label{eqn:perturbation_H}
\end{eqnarray}
This perturbation breaks the symmetry of lattice. The parameter $M$ corresponds to the magnitude of perturbation. As the parameter $M$ becomes smaller, the magnitude of perturbation becomes larger. 

When 
the perturbation (\ref{eqn:perturbation_H}) is introduced, the averaged center position $X(t)$ of a traveling DB deviates from the straight line $x_\mathrm{s}(t)=(v_\mathrm{DB}/T_\mathrm{DB})t$ which corresponds to  the case of a constant velocity.
Figure \ref{fig:trajectory_truncated} shows 
the deviation for different values of $M$.
 The magnitude of deviation becomes larger as $M$ decreases, i.e., the magnitude of perturbation becomes larger. 
It can be concluded that one of the symmetry breaking effects is the variation of DB's velocity.
In addition to this effect, the symmetry breaking causes the velocity loss of approximate traveling DB (cf.~Supplemental Material of \cite{doi2016}) and a 
    degradation of the ballistic thermal transport observed in the symmetric case \cite{yoshimura2019}. 
\begin{figure}[ht]
\centering
\includegraphics{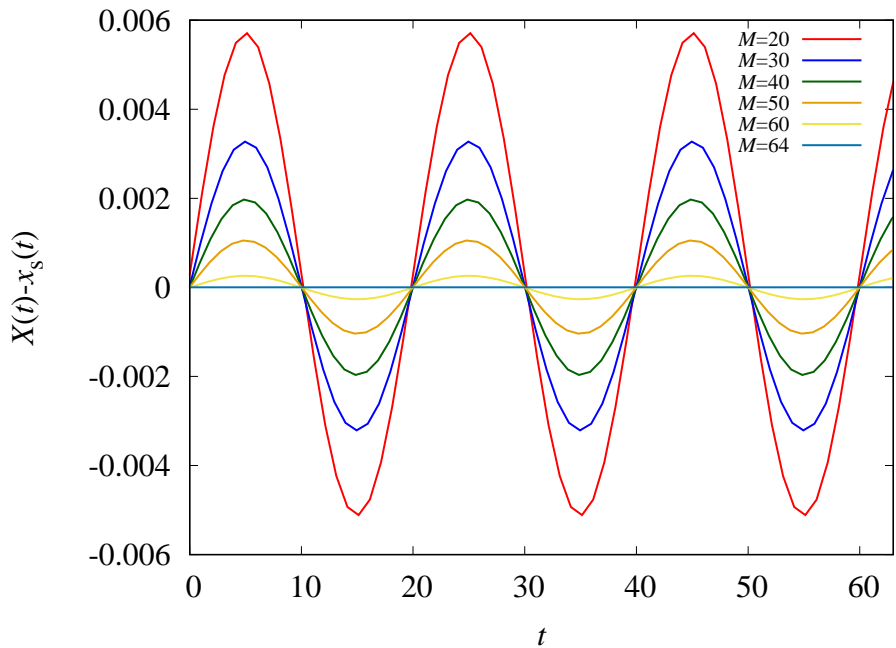}
\caption{Deviation of the averaged position of traveling DB $X(t)$ from the straight trajectory $x_\mathrm{s}(t)=(v_\mathrm{DB}/T_\mathrm{DB})t$ with $v_\mathrm{DB}=1/10$ and $T_\mathrm{DB}=2\pi/\omega_\mathrm{DB}=2$ in the truncated PISL with $N=128$, and $M=20, 30, 40, 50, 60$. The case of full PISL $M=64$ is also presented.}
\label{fig:trajectory_truncated}
\end{figure}

%
%
\section{Decomposition of Eq.~(\ref{eqn:equation_for_s})}\label{sec:preliminary}
We have assumed that $N$ is a multiple of 4.
Under this assumption, equations (\ref{eqn:equation_for_s}) can be rewritten to
\begin{eqnarray}
   c_{i}c_{j}c_{k}c_{l}b_1-s_{2i}s_{2j}s_{2k}s_{2l}b_2\ldots-s_{\frac{N}{2}i}s_{\frac{N}{2}j}s_{\frac{N}{2}k}s_{\frac{N}{2}l}b_{N/2}
    = 0.
\end{eqnarray}
The function $\psi^{(i,j,k,l)}(\mathbf{b})$ can be decomposed into two parts as follows:
\begin{equation}
  \psi^{(i,j,k,l)}(\mathbf{b})=\psi_\mathrm{odd}^{(i,j,k,l)}(\mathbf{b}_\mathrm{odd})-\psi_\mathrm{even}^{(i,j,k,l)}(\mathbf{b}_\mathrm{even}),
\end{equation}
where
\begin{eqnarray}
  \psi_\mathrm{odd}^{(i,j,k,l)}(\mathbf{b}_\mathrm{odd})&=&\sum_{s=1}^{N/4}b_{2s-1}c_{(2s-1)i}c_{(2s-1)j}c_{(2s-1)k}c_{(2s-1)l},\\
  \psi_\mathrm{even}^{(i,j,k,l)}(\mathbf{b}_\mathrm{even})&=&\sum_{s=1}^{N/4}b_{2s}s_{2si}s_{2sj}s_{2sk}s_{2sl},
\end{eqnarray}
and, $\mathbf{b}_\mathrm{odd} = [b_1, b_3, \cdots, b_{N/2-1}]^{T}$ and $\mathbf{b}_\mathrm{even} = [b_2, b_4, \cdots, b_{N/2}]^{T}$ are $N/4$ vectors.

We will discuss the equations for $b_{2s-1}$ and $b_{2s}$ separately in the following sections.

\section{Equations for $b_{2s-1}$}\label{sec:odd_equation}
The set $S$ can be
divided into two subsets $T_1 = \{(i=0,j,k,l)|-N_h \le j \le k \le l \le
N_h, j+k+l=N\} \subset S$ and $T_2 = S\setminus T_1$. 
When $(i,j,k,l)\in T_1$, all terms 
corresponding to $s_0$ vanish. Therefore, $\psi^{(i,j,k,l)}(\mathbf{b})=0\ \forall (i,j,k,l)\in T_1$ reduces to
\begin{eqnarray}
 \psi^{(i,j,k,l)}_\mathrm{odd}(\mathbf{b}_\mathrm{odd}) &=&
  \sum_{s=1}^{N/4}b_{2s-1}c_{(2s-1)i}c_{(2s-1)j}c_{(2s-1)k}c_{(2s-1)l}=0. \quad\forall (i,j,k,l)\in T_1.\nonumber\\
  \label{eqn:equation_for_symmetric_s1}
\end{eqnarray}

We consider the equations $\psi^{(i,j,k,l)}(\mathbf{b})=0\ \forall (i,j,k,l)\in S_1=\{(0,n+1,N/2-n, N/2-1)|1\le n \le N/4-1\}\subset T_1$. 
Substituting $(i,j,k,l)=(0,n+1,N/2-n,N/2-1)$ into the LHS of (\ref{eqn:equation_for_symmetric_s1}), we obtain
\begin{eqnarray}
	\psi_\mathrm{odd}^{(0,n+1,N/2-n,N/2-1)}(\mathbf{b}_\mathrm{odd}) &=&
	\sum_{s=1}^{N/4}b_{2s-1}c_0c_{(2s-1)(n+1)}c_{(2s-1)(N/2-n)}c_{(2s-1)(N/2-1)}\nonumber\\
	&=&\frac{1}{4}\sum_{s=1}^{N/4}{b_{2s-1}\left[
				      -1-c_{2(2s-1)(n+1)}+c_{2(2s-1)}+c_{2(2s-1)n}\right]}\nonumber\\
\end{eqnarray}
Therefore, equation (\ref{eqn:equation_for_symmetric_s1}) reduces to
\begin{eqnarray}
\frac{1}{4}\sum_{s=1}^{N/4}{b_{2s-1}\left[
				      -1-c_{2(2s-1)(n+1)}+c_{2(2s-1)}+c_{2(2s-1)n}\right]}=0.\nonumber
 \\(n=1,2,\ldots,\frac{N}{4}-1)
\label{eqn:coefficient_asymmetric_term_i0}
\end{eqnarray}
Eq.(\ref{eqn:coefficient_asymmetric_term_i0}) can be rewritten into the matrix form,
\begin{eqnarray}
 \frac{1}{4}M_1 A \mathbf{b}_\mathrm{odd}=0,
 \label{eqn:equation_for_symmetry_odd_matrix}
\end{eqnarray}
where $A$ is the $N/4\times N/4$ matrix whose element is given by $A_{pq}=
c_{2(p-1)(2q-1)}$. $M_1$ is the $(N/4-1) \times N/4$ matrix defined by
\begin{eqnarray}
 M_1 = \left[
	\begin{array}{ccccccccc}
	 -1&2&-1 &0 & & &&\cdots &0 \\
	 -1&1&1 &-1 &0 & & &\cdots &0 \\
	 -1&1&0 & 1 &-1&0& &\cdots&0 \\
	 \vdots&\vdots&&&&&&\ddots& \\
	 &&&&&&&& \\
	 -1&1&0 & & &\cdots&0&1 &-1 \\
	 -1&1&0 & & &\cdots &0 &0 &1 \\
	\end{array}
       \right].
\end{eqnarray}
Note that in the last row in $M_1$, we use the relation $c_{2(2s-1)N/4}=0$.

For the following discussion
we introduce the $(N/4-1)\times (N/4-1)$ matrix $P_1$ defined by:
\begin{eqnarray}
 P_1 &=& \frac{1}{N/4}\left[
	\begin{array}{ccccc}
	 1&1&&\cdots&1\\
	 -(N/4\mathrel{-}2)&2&&\cdots&2\\
	 -(N/4\mathrel{-}3)&-(N/4\mathrel{-}3)&3&\cdots&3\\
	 \vdots&\vdots&&\ddots&\vdots\\
	 -1&-1&\cdots&-1&N/4\mathrel{-}1\\
	\end{array}
		\right].
\end{eqnarray}

\begin{lemma}
$P_1$ is regular matrix.
\label{lemma:regularity_p}
\end{lemma}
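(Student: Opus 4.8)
The plan is to prove $\det P_1\neq 0$ by an explicit computation after clearing denominators. Set $\nu=N/4$, so that $P_1$ is $(\nu-1)\times(\nu-1)$ and $P_1=\nu^{-1}M$, where $M$ is the integer matrix with $M_{pc}=p$ for $c\ge p$ and $M_{pc}=-(\nu-p)$ for $c<p$ (one reads this off the displayed pattern: row $p$ starts with $p-1$ copies of $-(\nu-p)$ and then has $p$ in every remaining entry). Since $\det P_1=\nu^{-(\nu-1)}\det M$, it suffices to show $\det M\neq 0$.

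First I would apply the determinant-preserving row operations $R_p\mapsto R_p-R_{p+1}$ for $p=1,\dots,\nu-2$, leaving the last row $R_{\nu-1}$ untouched. A short case check on the column index ($c<p$, $c=p$, $c>p$) shows that for each $p\le\nu-2$ the new $p$-th row is $(-1,\dots,-1,\nu-1,-1,\dots,-1)$, with the single entry $\nu-1$ sitting in column $p$; and the untouched last row $(-1,\dots,-1,\nu-1)$ is already of exactly this form, with the entry $\nu-1$ in column $\nu-1$. Hence the transformed matrix equals $\nu I_{\nu-1}-\mathbf{1}\mathbf{1}^{T}$, where $\mathbf{1}$ is the all-ones column vector of length $\nu-1$.

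Then I would finish with the matrix determinant lemma — equivalently, noting that $\mathbf{1}\mathbf{1}^{T}$ has eigenvalue $\nu-1$ once and $0$ with multiplicity $\nu-2$ — to get $\det M=\det\left(\nu I_{\nu-1}-\mathbf{1}\mathbf{1}^{T}\right)=\nu^{\nu-1}\left(1-\frac{\nu-1}{\nu}\right)=\nu^{\nu-2}\neq 0$ (the degenerate case $\nu=2$, i.e.\ $N=8$, where $P_1=[1/2]$, is immediate). Therefore $\det P_1=\nu^{-(\nu-1)}\nu^{\nu-2}=\nu^{-1}\neq 0$, so $P_1$ is regular.

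The computation is short, and the only genuine content is the observation that differencing consecutive rows collapses $M$ into a rank-one perturbation of a multiple of the identity, together with the convenient fact that the last row already matches that pattern. The step that needs the most care is the bookkeeping in the row-difference computation, i.e.\ correctly tracking the three regimes of the column index relative to $p$; everything afterwards is standard linear algebra.
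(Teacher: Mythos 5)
Your proof is correct. The identification of the unscaled matrix $M$ with $M_{pc}=p$ for $c\ge p$ and $M_{pc}=-(\nu-p)$ for $c<p$ (where $\nu=N/4$, $P_1=\nu^{-1}M$) matches the displayed $P_1$, the consecutive-row differences do collapse $M$ to $\nu I_{\nu-1}-\mathbf{1}\mathbf{1}^{T}$, and the matrix determinant lemma then gives $\det M=\nu^{\nu-2}\neq 0$, hence $\det P_1=\nu^{-1}\neq 0$. The paper reaches the same conclusion by a slightly different set of elementary row operations: it adds $(\nu-m)$ times the all-ones first row to the $m$-th row, which annihilates the negative entries $-(\nu-m)$ below the pivot and turns every remaining entry of row $m$ into $\nu$, so the matrix becomes upper triangular with nonzero diagonal at once and no determinant formula is needed. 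Your route trades that immediate triangularization for the recognition of a rank-one perturbation of $\nu I$, which costs one extra standard lemma but pays back the explicit value $\det P_1=4/N$, information the paper's argument does not record; both arguments are equally rigorous, and your bookkeeping of the three column regimes $c<p$, $c=p$, $c>p$ (and of the untouched last row) is accurate.
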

\begin{proof}
Let a matrix $T_{1,p,q}$ be the elementary matrix which represents the elementary row operation of adding $q$ times of the first
row to the $p$-th row.
Applying $T_{1,m,N/4-m}\quad (m=2, 3, \cdots, N/4-1)$ to $P_1$ from left, we obtain an upper triangular matrix as
\begin{eqnarray}
	T_{1,N/4-1,1}T_{1,N/4-2,2}\cdots T_{1,2,N/4-2}P_1 = 
	\frac{1}{N/4}\left[
	\begin{array}{ccccc}
	 1&1&1&\cdots&1\\
	 0&N/4&N/4&\cdots&N/4\\
	 0&0&N/4&\cdots&N/4\\
	 \vdots&\vdots&&\ddots&\vdots\\
	 0&0&\cdots&0&N/4\\
	\end{array}
	      \right].\nonumber\\
\end{eqnarray}
The transformed upper triangular matrix is regular. Therefore $P_1$ is
regular matrix.
\end{proof}

Applying $P_1$ to $M_1$, we obtain the $(N/4-1) \times N/4$ matrix
\begin{eqnarray}
 P_1 M_1= \left[
	   \begin{array}{ccccc}
	   f_{N/4}(1) &1 &0 &\cdots &0 \\
	   f_{N/4}(2) &0 &1 &\cdots &0 \\
	   \vdots &\vdots & &\ddots &\vdots \\
	   f_{N/4}(N/4\mathrel{-}1)&0 &0 &\cdots &1 \\
	   \end{array}
		   \right]
		   \label{eqn:p1m1}
\end{eqnarray}
where $f_m(n) = -1 + n/m$. 
It is clear from Eq.(\ref{eqn:p1m1}) that the rank of $P_1 M$ is $N/4-1$.
Then, the following lemma holds:

\begin{lemma}
Rank of $M_1$ is $N/4-1$.
\label{lemma:rank_M1}
\end{lemma}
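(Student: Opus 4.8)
The plan is short, since most of the heavy lifting has already been carried out in the display~(\ref{eqn:p1m1}) together with Lemma~\ref{lemma:regularity_p}. First I would invoke the elementary fact that left-multiplication by an invertible square matrix does not change the rank: if $P_1$ is regular then $\mathrm{rank}(M_1)=\mathrm{rank}(P_1 M_1)$. By Lemma~\ref{lemma:regularity_p} the matrix $P_1$ is regular, so it suffices to determine $\mathrm{rank}(P_1 M_1)$.

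Next I would read off the rank of $P_1 M_1$ directly from~(\ref{eqn:p1m1}). The submatrix formed by columns $2,3,\dots,N/4$ of $P_1 M_1$ is the $(N/4-1)\times(N/4-1)$ identity matrix, so those $N/4-1$ columns are linearly independent and hence $\mathrm{rank}(P_1 M_1)\ge N/4-1$. Since $P_1 M_1$ has exactly $N/4-1$ rows, its rank is at most $N/4-1$. Therefore $\mathrm{rank}(P_1 M_1)=N/4-1$, and consequently $\mathrm{rank}(M_1)=N/4-1$, which is the assertion of the lemma.

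The only genuinely non-trivial point — and the place where an error could slip in — is the explicit evaluation of the product $P_1 M_1$ that yields the clean form~(\ref{eqn:p1m1}): one must combine the staircase pattern of the rows of $P_1$ with the rows of $M_1$ (using $c_{2(2s-1)N/4}=0$ in the last row of $M_1$) and check that the last $N/4-1$ columns collapse to the identity while the first column becomes $f_{N/4}(n)=-1+n/(N/4)$. I would thus regard the verification of~(\ref{eqn:p1m1}) as the substantive step and the rank count as an immediate corollary. Granting~(\ref{eqn:p1m1}) and Lemma~\ref{lemma:regularity_p}, the proof of Lemma~\ref{lemma:rank_M1} is the two-line argument given above; I expect no further obstacle.
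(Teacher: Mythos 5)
Your argument is correct and is essentially the paper's own: the paper likewise uses the regularity of $P_1$ (Lemma \ref{lemma:regularity_p}) and reads the rank $N/4-1$ directly off the explicit form (\ref{eqn:p1m1}) of $P_1 M_1$, whose last $N/4-1$ columns form the identity. Your added remark that the real content lies in verifying the product (\ref{eqn:p1m1}) is apt, but there is no substantive difference in approach.
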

The following proposition for the equations $\psi_\mathrm{odd}^{(i,j,k,l)}(\mathbf{b}_\mathrm{odd})=0, \forall (i,j,k,l)\in S_1$ holds:
\begin{prop}
The equations $\psi^{(i,j,k,l)}_\mathrm{odd}(\mathbf{b}_\mathrm{odd})=0, \forall (i,j,k,l)\in S_1$ are the $N/4-1$ linearly independent equations for
$b_{2s-1} (s=1,2,\ldots,N/4)$.
\label{prop:independent_odd}
\end{prop}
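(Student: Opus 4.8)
The plan is to reduce the claim to the regularity of the matrix $A$. By Eq.~(\ref{eqn:equation_for_symmetry_odd_matrix}), the equations $\psi_\mathrm{odd}^{(i,j,k,l)}(\mathbf{b}_\mathrm{odd})=0$, $\forall (i,j,k,l)\in S_1$, are precisely the linear system $\frac{1}{4}M_1 A\,\mathbf{b}_\mathrm{odd}=0$ in the unknowns $b_1,b_3,\dots,b_{N/2-1}$, so the assertion that these are $N/4-1$ linearly independent equations for the $b_{2s-1}$ is equivalent to $\mathrm{rank}(M_1 A)=N/4-1$. Since $\mathrm{rank}(M_1)=N/4-1$ by Lemma~\ref{lemma:rank_M1} and right multiplication by a square invertible matrix preserves rank, it is enough to show that the $N/4\times N/4$ matrix $A$, whose $(p,q)$ entry is $A_{pq}=c_{2(p-1)(2q-1)}=\cos\!\left(\frac{2\pi(p-1)(2q-1)}{N}\right)$, is regular.

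To prove the regularity of $A$, set $\theta_q=\frac{2\pi(2q-1)}{N}$ and $x_q=\cos\theta_q$ for $q=1,\dots,N/4$, and use the Chebyshev identity $\cos\!\left((p-1)\theta_q\right)=T_{p-1}(x_q)$, so that $A_{pq}=T_{p-1}(x_q)$. Because $\{T_0,T_1,\dots,T_{N/4-1}\}$ is a basis of the space of polynomials of degree at most $N/4-1$, the matrix $[T_{p-1}(x_q)]_{p,q=1}^{N/4}$ is obtained from the Vandermonde matrix $[x_q^{\,p-1}]_{p,q=1}^{N/4}$ by multiplication from the left by a nonsingular lower-triangular matrix; hence $\det A\ne 0$ provided the nodes $x_1,\dots,x_{N/4}$ are pairwise distinct. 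This last point is immediate: the angles $\theta_q$, $q=1,\dots,N/4$, are pairwise distinct and all lie in $(0,\pi)$ — the largest being $\theta_{N/4}=\pi-\frac{2\pi}{N}$ — and $\cos$ is strictly monotone on $[0,\pi]$, so the $x_q$ are pairwise distinct. Therefore $A$ is regular, and combining this with Lemma~\ref{lemma:rank_M1} yields $\mathrm{rank}(M_1 A)=N/4-1$, which is the assertion of the proposition.

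The only nontrivial point is the regularity of $A$; everything else is bookkeeping with the factorization $\frac{1}{4}M_1 A$ already set up before Lemma~\ref{lemma:rank_M1}. If one prefers to avoid Chebyshev polynomials, the same conclusion can be reached by writing $A_{pq}=\frac{1}{2}\left(\zeta^{(p-1)(2q-1)}+\zeta^{-(p-1)(2q-1)}\right)$ with $\zeta=e^{2\pi i/N}$ and evaluating $\det A$ directly, but the Chebyshev--Vandermonde argument is the shortest route.
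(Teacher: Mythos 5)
Your proposal is correct, and its skeleton is exactly the paper's: the proposition is reduced to $\mathrm{rank}(M_1A)=N/4-1$, which follows from Lemma~\ref{lemma:rank_M1} together with the regularity of $A$. The only place you diverge is in how the regularity of $A$ is established. The paper (in \ref{sec:regularity_C}) proves it by exhibiting the explicit inverse $\bar{A}$, i.e.\ by the discrete orthogonality of the cosines $\cos\bigl(\tfrac{2\pi}{N}(p-1)(2q-1)\bigr)$; you instead observe that $A_{pq}=T_{p-1}(x_q)$ with $x_q=\cos\bigl(\tfrac{2\pi(2q-1)}{N}\bigr)$, factor $A$ through a Vandermonde matrix via the change of basis from Chebyshev polynomials to monomials, and note that the nodes $x_q$ are pairwise distinct because the angles $\tfrac{2\pi(2q-1)}{N}$, $q=1,\dots,N/4$, are distinct and lie in $(0,\pi)$ where $\cos$ is injective. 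Both arguments are sound; yours is arguably shorter and needs no guessing of an inverse, while the paper's construction buys something you do not get for free: the explicit matrix $\bar{A}=A^{-1}$ is reused later (in \ref{sec:explicit_solution}) to solve $A\,\mathbf{b}_\mathrm{odd}=\mathbf{B}$ in closed form and obtain the coefficients of Lemma~\ref{theo:solution1}, so the paper's version of the regularity proof does double duty that a pure nonvanishing-determinant argument cannot.
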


\begin{proof}
The rank of matrix $A$ is $N/4$ (see \ref{sec:regularity_C}) and that of matrix $M_1$
is $N/4-1$ from Lemma \ref{lemma:rank_M1}. Therefore, the rank of matrix $M_1 A$ is $N/4-1$. Therefore, the equations $\psi^{(i,j,k,l)}_\mathrm{odd}(\mathbf{b}_\mathrm{odd})=0, \forall (i,j,k,l)\in S_1$ are $N/4-1$ linearly independent equations for $b_{2s-1} (s=1,2,\cdots,N/4)$.
\end{proof}

The equations $\psi^{(i,j,k,l)}_\mathrm{odd}(\mathbf{b}_\mathrm{odd})=0, \forall (i,j,k,l)\in S_1$ are $N/4-1$ linearly independent equations. On the other hand, the number of unknown
variables $b_{2s-1}$ is $N/4$. Therefore, we can express the nontrivial solutions $b_{2s-1}
(s=2,3,\ldots,N/4)$ in term of $b_1$ as follows:
\begin{eqnarray}
 b_{2s-1} = w_{2s-1}b_1,
 \label{eqn:solution_odd}
\end{eqnarray}
where $w_{2s-1}$ is a real constant. Substituting (\ref{eqn:solution_odd}) into (\ref{eqn:equation_for_symmetry_odd_matrix}) and then applying $P_1$ from the left side, we obtain the equations for $w_{2s-1}$:
\begin{eqnarray}
 \sum_{s=1}^{N/4}{\left[f_{N/4}(n) +c_{2n(2s-1)}\right]w_{2s-1}}=0
 \label{eqn:relation_odd}
\end{eqnarray}
for $n=1,2,\cdots,N/4-1$. Note that we define $w_1 = 1$.

It can be easily confirmed that Eq.(\ref{eqn:relation_odd}) also holds for the case of $n=0$ and $n=N/4$. 
We next check that Eq.(\ref{eqn:relation_odd}) also holds for a wider range of $n$. Define a new variable $n'=N/2-n$. Substituting
$n=N/2-n'$ into Eq.(\ref{eqn:relation_odd}), we obtain
\begin{eqnarray}
 \sum_{s=1}^{N/4}{\left[f_{N/4}(n') +c_{2n'(2s-1)}\right]w_{2s-1}}=0
 \label{eqn:relation_odd2}
\end{eqnarray}
for $n'=N/4+1, N/4+2, ..., N/2-1$.  This indicates that
Eq.(\ref{eqn:relation_odd}) also holds for $n=N/4+1,
N/4+2,\ldots,N/2-1$.  Finally, it is shown that Eq.(\ref{eqn:relation_odd}) holds for
the extended range $n=0,1,...\ldots,N/2-1$.

Next, we consider the equations $\psi^{(i,j,k,l)}_\mathrm{odd}(\mathbf{b}_\mathrm{odd})=0,\forall (i,j,k,l)\in T_1$.

\begin{prop}
The nontrivial solution $b_{2s-1} = w_{2s-1}b_1$ of the equations $\psi^{(i,j,k,l)}_\mathrm{odd}(\mathbf{b}_\mathrm{odd})=0,\forall (i,j,k,l)\in S_1$ also solves the other equations in $T_1$.
\label{prop:solve_on_t1}
\end{prop}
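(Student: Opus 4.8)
The plan is to reduce every equation $\psi^{(0,j,k,l)}_\mathrm{odd}(\mathbf{b}_\mathrm{odd})=0$ with $(0,j,k,l)\in T_1$ to the relations~(\ref{eqn:relation_odd}), which have already been established for the extended range $n=0,1,\ldots,N/2-1$. Since $i=0$ gives $c_0=1$, one has $\psi^{(0,j,k,l)}_\mathrm{odd}(\mathbf{b}_\mathrm{odd})=\sum_{s=1}^{N/4}b_{2s-1}\,c_{(2s-1)j}c_{(2s-1)k}c_{(2s-1)l}$, so first I would expand the triple product of cosines by the product-to-sum identity: with $A=(2s-1)j\pi/N$, $B=(2s-1)k\pi/N$, $C=(2s-1)l\pi/N$,
\[
c_{(2s-1)j}c_{(2s-1)k}c_{(2s-1)l}=\frac14\left[\cos(A+B+C)+\cos(A+B-C)+\cos(A-B+C)+\cos(A-B-C)\right].
\]
Using $j+k+l=N$, the four arguments are $(2s-1)$ times $N$, $N-2l$, $N-2k$, and $2j-N$. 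Because $2s-1$ is odd, $\cos((2s-1)\pi)=-1$ and $\cos((2s-1)\pi\pm\theta)=-\cos\theta$, so the bracket collapses to $-1-c_{2j(2s-1)}-c_{2k(2s-1)}-c_{2l(2s-1)}$, and therefore $\psi^{(0,j,k,l)}_\mathrm{odd}(\mathbf{b}_\mathrm{odd})=\frac14\sum_{s=1}^{N/4}b_{2s-1}\bigl[-1-c_{2j(2s-1)}-c_{2k(2s-1)}-c_{2l(2s-1)}\bigr]$.

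Next I would substitute the nontrivial solution $b_{2s-1}=w_{2s-1}b_1$ and invoke~(\ref{eqn:relation_odd}) rewritten as $\sum_{s=1}^{N/4}w_{2s-1}c_{2n(2s-1)}=-f_{N/4}(n)\,W$, where $W=\sum_{s=1}^{N/4}w_{2s-1}$ and $f_{N/4}(n)=-1+4n/N$. Before applying this at $n=j,k,l$ I must check the admissible range: from $0\le j\le k\le l\le N_h=N/2-1$ and $j+k+l=N$ we get $k+l\le N-2$, hence $j\ge2$, so $0\le j,k,l\le N/2-1$ and~(\ref{eqn:relation_odd}) indeed applies. This yields $\psi^{(0,j,k,l)}_\mathrm{odd}(\mathbf{b}_\mathrm{odd})=\frac{b_1W}{4}\bigl[-1+f_{N/4}(j)+f_{N/4}(k)+f_{N/4}(l)\bigr]=\frac{b_1W}{4}\bigl[-4+\frac{4}{N}(j+k+l)\bigr]=0$, using $j+k+l=N$ once more. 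Because $S_1\subset T_1$, the same computation re-derives the equations indexed by $S_1$ and, in particular, settles the remaining ones in $T_1$.

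The main obstacle, such as it is, will be the sign bookkeeping in $\cos((2s-1)\pi\pm\theta)=-\cos\theta$ --- this is precisely where oddness of $2s-1$, and hence the hypothesis that $N$ is a multiple of $4$, is used --- together with the range check $0\le j,k,l\le N/2-1$ that licenses the extended relation~(\ref{eqn:relation_odd}). Once these are in place, the rest is a direct trigonometric computation.
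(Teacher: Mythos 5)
Your proof is correct and follows essentially the same route as the paper's: expand the triple cosine product (with $c_0=1$) by product-to-sum, substitute $b_{2s-1}=w_{2s-1}b_1$, invoke the extended relation~(\ref{eqn:relation_odd}) at the three indices, and use the affine form of $f_{N/4}$ together with $j+k+l=N$ to make the bracket vanish. The only difference is cosmetic: the paper parametrizes $T_1$ as $(0,a+b,N/2-b,N/2-a)$ and applies the relation at $n=a,b,a+b$, whereas you work directly with $(0,j,k,l)$ and justify $0\le j,k,l\le N/2-1$ via $j\ge 2$, which is a valid and slightly more symmetric bookkeeping of the same argument.
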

\begin{proof}
The set $T_1$ can be expressed by $(i,j,k,l)=(0,a+b,N/2-b,N/2-a)$, where
$a$ and $b$ are integer which are in the ranges of $2\le a \le N/6$ and $a \le b \le (N-2a)/4$.
Substituting solution (\ref{eqn:solution_odd}) and using
$\sum_{s=1}^{N/4}{c_{2n(2s-1)}w_{2s-1}} = -\sum_{s=1}^{N/4}{f_{N/4}(n)w_{2s-1}}$, 
the left hand side of equations $\psi^{(i,j,k,l)}_\mathrm{odd}(\mathbf{b}_\mathrm{odd})=0, \forall (i,j,k,l)\in T_1$ can be expressed as
\begin{eqnarray}
\psi&&^{(0,a+b,N/2-b,N/2-a)}_\mathrm{odd}(w_{2s-1}b_1)\nonumber
\\
&&=b_1\sum_{s=1}^{N/4}{w_{2s-1}\left[
 -1-c_{2(2s-1)(a+b)}+c_{2(2s-1)b}+c_{2(2s-1)a}\right]}\nonumber\\
&&=b_1\sum_{s=1}^{N/4}{w_{2s-1}\left[
 -1+f_{N/4}(a+b)-f_{N/4}(b)-f_{N/4}(a)\right]}\nonumber\\
&&=0
\label{eqn:coefficient_asymmetric_term_i0_3}
\end{eqnarray}
Therefore, the nontrivial solution $b_{2s-1}=w_{2s-1}b_1$ of the equations $\psi^{(i,j,k,l)}_\mathrm{odd}(\mathbf{b}_\mathrm{odd})=0, \forall (i,j,k,l)\in S_1$ also solves the other equations in $T_1$.	
\end{proof}

\section{Equations for $b_{2s}$}\label{sec:even_equation}
We consider $\psi^{(i,j,k,l)}(\mathbf{b})=0, \forall (i,j,k,l)\in S_2=\{(2-m,m,N/2-1,N/2-1)|m=1\ \mbox{or}\ 3\le m \le N/4+1\}$ under the
condition that $b_{2s-1} = w_{2s-1}b_1$ is the nontrivial solution of 
$\psi_{\mathrm{odd}}^{(i,j,k,l)}(\mathbf{b}_{\mathrm{odd}})=0, \forall (i,j,k,l)\in S_1$.
Substituting Eq.(\ref{eqn:solution_odd}) into
$\psi^{(i,j,k,l)}(\mathbf{b})=0,\quad\forall (i,j,k,l)\in S_2$, we obtain
\begin{eqnarray}
\psi_{\mathrm{even}}^{(i,j,k,l)}(\mathbf{b}_\mathrm{even})= \mathcal{R}^{(i,j,k,l)}(w_{1},w_{3},\cdots,w_{N/2-1}),\quad\forall (i,j,k,l)\in S_2,
\label{eqn:equation_for_symmetric_s2}
\end{eqnarray}
where $\mathbf{b}_\mathrm{even} = [b_2,b_4,\cdots,b_{N/2}]^{T}$ is the $N/4$-dimensional vector, and LHS and RHS of Eq.(\ref{eqn:equation_for_symmetric_s2}) are given as
\begin{eqnarray}
\psi_{\mathrm{even}}^{(i,j,k,l)}(\mathbf{b}_\mathrm{even})&=&\sum_{s=1}^{N/4}{b_{2s}s_{2si} s_{2sj} s_{2sk} s_{2sl}}, \label{eqn:s2_lhs}\\
 \mathcal{R}^{(i,j,k,l)}(w_{1},w_{3},\cdots,w_{N/2-1})&=&\psi_\mathrm{odd}^{(i,j,k,l)}(\{b_1 w_{2s-1}\}_s)\nonumber\\ &=&\sum_{s=1}^{N/4}b_1{w_{2s-1}
 c_{(2s-1)i} c_{(2s-1)j} c_{(2s-1)k} c_{(2s-1)l}}\label{eqn:s2_rhs},\nonumber\\
\end{eqnarray}
respectively.

 Substituting
$i=2-m$, $j=m$, $k=N/2-1$, $l=N/2-1$, Eq.~(\ref{eqn:s2_lhs})
can be rewritten as follows:
\begin{eqnarray}
\psi&& ^{(2-m,m,N/2-1,N/2-1)}_{\mathrm{even}}(\mathbf{b}_\mathrm{even})\nonumber\\
&&=
\sum_{s=1}^{N/4}{b_{2s}s_{2s(2-m)} s_{2sm} s_{2s(N/2-1)} s_{2s(N/2-1)}}\nonumber\\&&=\frac{1}{8}\sum_{s=1}^{N/4}\left[1-2c_{4s}+c_{8s}-c_{4s(m-2)}+2c_{4s(m-1)}-c_{4sm}\right]b_{2s}.
\end{eqnarray}
On the other hand, Eq.(\ref{eqn:s2_rhs}) becomes,
\begin{eqnarray}
 &&\mathcal{R}^{(2-m,m,N/2-1,N/2-1)}\nonumber\\
 &&=
 \sum_{s=1}^{N/4}b_1{w_{2s-1}
 c_{(2s-1)(2-m)} c_{(2s-1)m} c_{(2s-1)(N/2-1)} c_{(2s-1)(N/2-1)}}\nonumber\\
 &&=
 \frac{b_1}{8}\sum_{s=1}^{N/4}{w_{2s-1}[
 -1+2c_{2(2s-1)}-c_{4(2s-1)}-c_{(2s-1)(2m-4)}+2c_{(2s-1)(2m-2)}-c_{2m(2s-1)}]}.\nonumber\\
&&\quad\quad (m=1, \mbox{ or } 3\le m \le \frac{N}{4}+1)
 \label{eqn:equation_for_symmetric_s2_2}
\end{eqnarray}
Note that the sign of index of the fourth term in Eq.(\ref{eqn:equation_for_symmetric_s2_2}) can change depending on the value of $m$.
In the case that $m=1$, the index 
 becomes negative. The relation (\ref{eqn:relation_odd}) is only valid for nonnegative index of $c_n$. We use the relation $c_{-n}=c_{n}$ to avoid the negative index. Substituting (\ref{eqn:relation_odd}) into
(\ref{eqn:equation_for_symmetric_s2_2}), we obtain
\begin{eqnarray}
 \mathcal{R}^{(1,1,N/2-1,N/2-1)} = \frac{b_1}{N}\sum_{s=1}^{N/4}\omega_{2s-1}\equiv Wb_1.
 \label{eqn:definition_R}
\end{eqnarray}
When $m\ge3$, substituting (\ref{eqn:relation_odd}) into
(\ref{eqn:equation_for_symmetric_s2_2}), we obtain
\begin{eqnarray}
 &&\mathcal{R}^{(2-m,m,N/2-1,N/2-1)}\nonumber\\ &=&
 \frac{b_1}{8}\sum_{s=1}^{N/4}{w_{2s-1}[
 -1-2f_{N/4}(1)+f_{N/4}(2)+f_{N/4}(m-2)-2f_{N/4}(m-1)+f_{N/4}(m)]}\nonumber\\
 &=&0.
\end{eqnarray}

Therefore, Eq.(\ref{eqn:equation_for_symmetric_s2}) is rewritten as
\begin{eqnarray}
\left\{
\begin{array}{l}
 \displaystyle{\frac{1}{8}\sum_{s=1}^{N/4}{[3 -4c_{4s}+c_{8s}]}b_{2s} =
  Wb_{1}\quad(m=1)},\\
 \displaystyle{\frac{1}{8}\sum_{s=1}^{N/4}{[1-2c_{4s}+c_{8s}-c_{4s(m-2)}+2c_{4s(m-1)}-c_{4sm}]}b_{2s}=0.
  \quad(m=3,4,\ldots,\frac{N}{4}+1)}
\end{array}
\right.
  \label{eqn:equation_for_symmetric_s3}
\end{eqnarray}
The matrix form of Eq.~(\ref{eqn:equation_for_symmetric_s3}) is written as
follows:\begin{eqnarray}
 \frac{1}{8}\tilde{M}_2 \tilde{D} \mathbf{b}_\mathrm{even} = \mathbf{g},
 \label{eqn:equation_for_symmetry_even_matrix}
\end{eqnarray}
where $\tilde{M}_2$ is the $N/4 \times (N/4+1)$ matrix
\begin{eqnarray}
 \tilde{M}_2 = \left[
	\begin{array}{ccccccccc}
	 3&-4&1 &0 & & &&\cdots &0 \\
	 1&-3&3 &-1 &0 & & &\cdots &0 \\
	 1&-2&0 & 2 &-1&0& &\cdots&0 \\
	 1&-2&1 & -1 &2&-1&0&\cdots&0 \\
	 \vdots&\vdots&&&&&&\ddots& \\
	 &&&&&&&& \\
	 1&-2&1 &0&\cdots &-1&2&-1 &0 \\
	 1&-2&1 &0&\cdots &0&-1&2 &-1 \\
	 1&-2&1 &0& &\cdots&0&-2 &2 \\
	 \end{array}
       \right],
\end{eqnarray}
Note that in the last row in $\tilde{M}_2$, we use the relation $c_{4s(N/4+1)} = c_{4s(N/4-1)}$.
$\tilde{D}$ is the $(N/4+1) \times N/4$ matrix that element is given by
$D_{pq}=c_{4(p-1)q}$, and $\mathbf{g}$ is the $N/4$-dimensional vector given by
\begin{eqnarray}
 \mathbf{g} =\left[Wb_1, 0, \cdots, 0\right]^{T}.
\end{eqnarray}
Since $\tilde{M}_2$ and $\tilde{D}$ are not square matrix,
we convert these matrices to square ones in order to proceed the proof.

Let $\tilde{\mathbf{x}} = [\tilde{x}_1,\tilde{x}_2, \cdots, \tilde{x}_{N/4+1}]^T$ be the $(N/4+1)$ vector defined as follows:
\begin{eqnarray}
	\tilde{\mathbf{x}} = \tilde{D} \mathbf{b}_\mathrm{even},
	\label{eqn:bbar}
\end{eqnarray}
or
\begin{eqnarray}
	\tilde{x}_i = D_{i,1}b_2 + D_{i,2}b_4+\cdots+D_{i,N/4}b_{N/2}.\quad(i=1,2,\cdots,N/4+1).
	\label{eqn:bbar2}
\end{eqnarray}
Using the relation (\ref{eqn:bbar}), Eq.~(\ref{eqn:equation_for_symmetry_even_matrix}) is rewritten as
\begin{eqnarray}
	\frac{1}{8}\tilde{M}_2 \tilde{\mathbf{x}} = \mathbf{g}.
	\label{eqn:sub1}
\end{eqnarray}

Let $\mathbf{m}_i\quad(i=1,2,\cdots,N/4+1)$ be the column vectors of $\tilde{M}_2$. It can be easily checked that the last column vector $\mathbf{m}_{N/4+1} = -\sum_{j=1}^{N/4}{\mathbf{m}_j}$. Using this relation, Eq.~(\ref{eqn:sub1}) can be transformed into
\begin{equation}
	\frac{1}{8}M_2 \mathbf{x} = \mathbf{g},
	\label{eqn:sub2}
\end{equation}
where $M_2$ is the $N/4 \times N/4$ matrix
\begin{eqnarray}
 M_2 = \left[
	\begin{array}{ccccccccc}
	 3&-4&1 &0 & & &&\cdots &0 \\
	 1&-3&3 &-1 &0 & & &\cdots &0 \\
	 1&-2&0 & 2 &-1&0& &\cdots&0 \\
	 1&-2&1 & -1 &2&-1&0&\cdots&0 \\
	 \vdots&\vdots&&&&&&\ddots& \\
	 &&&&&&&& \\
	 1&-2&1 &0&\cdots &0&-1&2 &-1 \\
	 1&-2&1 &0&\cdots &0&0&-1 &2 \\
	 1&-2&1 &0& &\cdots&0&0 &-2 \\
	 \end{array}
       \right],
\end{eqnarray}
and $\mathbf{x}$ is the N/4 vector defined by
\begin{eqnarray}
	\mathbf{x}&=&[x_1, x_2, \cdots, x_{N/4}]^T\nonumber\\
&=&[\tilde{x}_1-\tilde{x}_{N/4+1}, \tilde{x}_2-\tilde{x}_{N/4+1}, \cdots, \tilde{x}_{N/4}-\tilde{x}_{N/4+1}]^T.
\label{eqn:ebar}
\end{eqnarray} 

Next, we discuss the matrix $\tilde{D}$. Let $\mathbf{d}_p =[D_{p,1}, D_{p,2},\cdots, D_{p,N/4}]\quad(p=1,2,\cdots, N/4+1)$ be the row vectors of $\tilde{D}$. This immediately leads
\begin{eqnarray}
	&&\mathbf{d}_1 = [1, 1, \cdots, 1]\\
	&&\mathbf{d}_{N/4+1}= [-1,1,\cdots,-1,1] 
\end{eqnarray}
Moreover, it can be easily derived that
\begin{eqnarray}
	\mathbf{d}_1 + \mathbf{d}_2 + \cdots + \mathbf{d}_{N/4} = [1,0,\cdots,1,0].
\end{eqnarray}
Therefore, we obtain the following relation
\begin{eqnarray}
	\mathbf{d}_{N/4+1} &=& \mathbf{d}_1-2(\mathbf{d}_1+\mathbf{d}_2+\cdots+\mathbf{d}_{N/4})\nonumber\\
		&=& -\mathbf{d}_1 - 2(\mathbf{d}_2+\mathbf{d}_3+\cdots+\mathbf{d}_{N/4})
		\label{eqn:d_n4_1}
\end{eqnarray}
or
\begin{eqnarray}
	D_{N/4+1,j} 
		= -D_{1,j} - 2(D_{2,j}+D_{3,j}+\cdots+D_{N/4,j}).\quad(j=1,2,\cdots,N/4)\nonumber\\
		\label{eqn:d_n4_1_component}
\end{eqnarray}

Using the relation (\ref{eqn:bbar2}), the element of vector (\ref{eqn:ebar}) can be rewritten as:
\begin{eqnarray}
	x_i &=& \tilde{x}_i - \tilde{x}_{N/4+1}\nonumber\\
	&=& (D_{i,1}-D_{N/4+1,1})b_{2} + (D_{i,2}-D_{N/4+1,2})b_{4}\nonumber\\ &+& \cdots + (D_{i,N/4}-D_{N/4+1,N/4}) b_{N/2}\nonumber\\
	&&\quad\quad(i=1,2,\cdots,N/4)
	\label{eqn:sub4}
\end{eqnarray}
Substituting (\ref{eqn:d_n4_1_component}) into (\ref{eqn:sub4}), we obtain
\begin{eqnarray}
	x_1 
	&=& 2(D_{1,1}+\cdots +D_{N/4,1})b_{2} + 2(D_{1,2}+\cdots+D_{N/4,2})b_{4} +\nonumber\\ &&\cdots + 2(D_{1,N/4}+\cdots D_{N/4,N/4})b_{N/2}\nonumber\\
	x_i &=&(D_{1,1}+2D_{2,1}+\cdots + 3D_{i,1}+\cdots+2D_{N/4,1})b_2\nonumber\\
	&&+(D_{1,2}+2D_{2,2}+\cdots + 3D_{i,2}+\cdots+2D_{N/4,2})b_4\nonumber\\
	&&+\cdots+(D_{1,N/4}+2D_{2,N/4}+\cdots + 3D_{i,N/4}+\cdots+2D_{N/4,N/4})b_{N/2}.\nonumber\\&&\quad \quad (i = 2,3,\cdots,N/4)
	\label{eqn:sub5}
\end{eqnarray}
The matrix form of Eq.~(\ref{eqn:sub5}) is 
\begin{eqnarray}
	\mathbf{x} = P_4 D \mathbf{b}_\mathrm{even},
	\label{eqn:ebar2}
\end{eqnarray}
where  $P_4$ is the $N/4 \times N/4$ matrix
\begin{eqnarray}
 P_4 =
  \left[\begin{array}{cccccccc}
   2&2&2&2&\ldots&2 &2 \\
   1&3&2&2&\ldots&2 &2 \\
   1&2&3&2&\ldots&2 &2 \\
   1&2&2&3&\ldots&2 &2 \\
   \vdots&\vdots&\vdots&\vdots&\ddots&\vdots &\vdots \\
   1&2&2&2&\ldots&3 &2 \\
   1&2&2&2&\ldots&2 &3 \\
  \end{array}\right],
  \label{eqn:p4}
\end{eqnarray}
and $D$ is the $N/4 \times N/4$ matrix 
constructed by removing the $(N/4+1)$-th row from $\tilde{D}$.

Finally, we obtain the transformed equation of (\ref{eqn:equation_for_symmetry_even_matrix}) by combining (\ref{eqn:sub2}) and (\ref{eqn:ebar2}),
\begin{eqnarray}
	\frac{1}{8}M_2 P_4 D \mathbf{b}_\mathrm{even} = \mathbf{g}.
	\label{eqn:trans_equation_for_symmetry_even_matrix}
\end{eqnarray}

As to the matrix $M_2$, the following lemma holds:
\begin{lemma}
	$M_2$ is regular matrix.
	\label{lemma:regularity_m2}
\end{lemma}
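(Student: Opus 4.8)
The plan is to prove that the homogeneous system $M_2\mathbf v=\mathbf 0$ has only the trivial solution. Write $\mathbf v=(v_1,\dots,v_{N/4})^T$ and set $c:=v_1-2v_2+v_3$. Reading the rows of $M_2$ off its displayed form, one sees that every row except the first, the $(N/4-1)$-th and the $N/4$-th has the shape $[1,-2,1,0,\dots,0,-1,2,-1,0,\dots,0]$, with the block $[-1,2,-1]$ sitting in columns $(k,k+1,k+2)$ of the $k$-th row; hence for $k=2,\dots,N/4-2$ the $k$-th equation collapses to $v_k-2v_{k+1}+v_{k+2}=c$. (At $k=2$ the prefix $[1,-2,1]$ and the block $[-1,2,-1]$ overlap in two columns, producing the visible entries $1,-3,3,-1$; at $k=3$ they overlap in one column, producing the visible $0$ there; in both cases the equation still reduces to the stated second-difference identity.) Since also $v_1-2v_2+v_3=c$ by the definition of $c$, the sequence $v_1,\dots,v_{N/4}$ has constant second difference $c$, and therefore $v_j=P(j)$ for $j=1,\dots,N/4$, where $P(x)=\frac{c}{2}x^2+\beta x+\gamma$ for suitable constants $\beta,\gamma$.

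It then remains to impose the three leftover equations, which become linear conditions on $(c,\beta,\gamma)$. The first row gives $3v_1-4v_2+v_3=0$; substituting $v_j=P(j)$ and expanding, this simplifies to $c+\beta=0$. The $N/4$-th row gives $v_1-2v_2+v_3-2v_{N/4}=0$, i.e.\ $P(N/4)=c/2$. The $(N/4-1)$-th row gives $v_1-2v_2+v_3-v_{N/4-1}+2v_{N/4}=0$, i.e.\ $P(N/4-1)-2P(N/4)=c$; because $P$ has constant second difference $c$, this is equivalent to $P(N/4+1)=0$. Eliminating $\gamma$ between $P(N/4)=c/2$ and $P(N/4+1)=0$ yields $c(N/4+1)+\beta=0$, and subtracting $c+\beta=0$ gives $(N/4)\,c=0$, hence $c=0$ since $N/4\ge1$. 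Then $\beta=0$, and $P(N/4)=c/2$ forces $\gamma=0$; so $P\equiv0$, $\mathbf v=\mathbf 0$, and $M_2$ is regular.

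The step I expect to cost the most care is the correct reading of the four boundary rows (rows $1$, $2$, $N/4-1$, $N/4$): each deviates from the plain $[-1,2,-1]$ stencil, either because of overlap with the $[1,-2,1]$ prefix or because the last column of $\tilde M_2$ has been folded in through $\mathbf m_{N/4+1}=-\sum_{j\le N/4}\mathbf m_j$, and one must verify in each case that the equation still collapses to the simple form used above. One should also dispose of the small cases $N/4=2$ and $N/4=3$ separately — there the middle range $k=2,\dots,N/4-2$ is empty or degenerate — but in those cases $M_2$ is a $2\times2$ or $3\times3$ matrix whose determinant ($-8$ and $12$, respectively) is checked by direct computation. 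An alternative proof would instead reduce $M_2$ to triangular form by explicit elementary row operations, in the spirit of the proof that $P_1$ is regular, but the argument above has the advantage of exhibiting $N/4\ne0$ as the single reason the matrix is nonsingular.
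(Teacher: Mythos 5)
Your proof is correct: I checked your reading of the rows (including the overlapping rows $[1,-3,3,-1,\dots]$ and $[1,-2,0,2,-1,\dots]$ and the two folded boundary rows $[1,-2,1,0,\dots,-1,2]$ and $[1,-2,1,0,\dots,0,-2]$), the reduction of the first row to $c+\beta=0$ (indeed $3P(1)-4P(2)+P(3)=-2(c+\beta)$), and the small cases ($\det=-8$ for $N/4=2$ and $\det=12$ for $N/4=3$ are both right). However, it takes a genuinely different route from the paper. The paper proves regularity by explicit row reduction: it constructs the auxiliary regular matrices $P_2$ and $P_3$, computes $P_3P_2M_2$, and eliminates the remaining entries $a_{i,j}$ to exhibit rank $N/4$. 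You instead show $\ker M_2=\{0\}$ by observing that any null vector has constant second difference $c=v_1-2v_2+v_3$, hence is a quadratic sequence $v_j=P(j)$, and that the three boundary rows force $P\equiv 0$, with $(N/4)\,c=0$ isolating the single arithmetic reason for nonsingularity. Your argument is shorter and more transparent than the paper's, and it avoids the unpleasant explicit entries $a_{n,1},a_{n,2},a_{n,3}$; note also that your generic argument actually already covers $N/4=3$ (the middle range is empty but the single second difference is $c$ by definition), so only $N/4=2$ truly needs the separate determinant check. What the paper's heavier machinery buys is reuse: the transformations $P_2$ (and later $Q_2,Q_3,Q_4$) are applied again to derive the relations such as Eq.~(\ref{eqn:relation_even0}) and (\ref{eqn:relation_even}), which are essential for Proposition~\ref{prop:solve_on_t2} and for the explicit solution in \ref{sec:explicit_solution}; your kernel argument establishes the lemma itself but does not produce those auxiliary identities, so it would replace only this proof, not the surrounding computation.
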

\begin{proof}
Consider the $N/4 \times N/4$ lower triangular matrix $P_2$ as follows
\begin{eqnarray}
 P_2 =
  \left[\begin{array}{cccccccc}
   1&0&0&0&\ldots&0 &0 &0 \\
   0&1&0&0&\ldots&0 &0 &0 \\
   0&2&1&0&\ldots&0 &0 &0 \\
   0&3&2&1&\ldots&0 &0 &0 \\
   \vdots&\vdots&\vdots&\vdots&\ddots&0 &0 &0 \\
   0&N/4-3&N/4-4&N/4-5&\ldots&1 &0 &0 \\
   0&N/4-2&N/4-3&N/4-4&\ldots&2 &1 &0 \\
   0&-2(N/4-3)&-2(N/4-4)&-2(N/4-5)&\ldots&-2 &0 &1 \\
  \end{array}\right].\nonumber\\
\end{eqnarray}

All diagonal components of the matrix $P_2$ are nonzero. Therefore $P_2$ is regular.
Applying $P_2$ to $M_2$, we obtain
\begin{eqnarray}
 P_2 M_2 = \left[
	\begin{array}{ccccccccc}
	 3&-4&1 &0 & & &\cdots &0 \\
	 a_{21}&a_{22}&a_{23}&-1 &0 & & \cdots &0 \\
	 a_{31}&a_{32}&a_{33}& 0 &-1&0& \cdots&0 \\
	 a_{41}&a_{42}&a_{33}& 0 &0&-1&\cdots&0 \\
	 \vdots&\vdots&&&&&\ddots& \\
	 a_{N/4-3,1}&a_{N/4-2,2}&a_{N/4-2,3}&0&\cdots &0&0&-1 \\
	 a_{N/4-1,1}&a_{N/4-1,2}&a_{N/4-1,3}&0& &\cdots&0 &0\\
	 a_{N/4,1}&a_{N/4,2}&a_{N/4,3}&0& &\cdots &0 &0\\
	\end{array}
       \right],
\end{eqnarray}
where
\begin{eqnarray}
 &&a_{n,1}=\left\{
  \begin{array}{ll}
   n(n-1)/2&n=2,3,\ldots,N/4-1\\
   -(N^2-20N+80)/16, &n=N/4
  \end{array}
 \right.\\
 &&a_{n,2}=\left\{
  \begin{array}{ll}
   1-n^2&n=2,3,\ldots,N/4-1\\
   (N^2-16N+32)/8,&n=N/4
  \end{array}
 \right.\\
 &&a_{n,3}=\left\{
  \begin{array}{ll}
   n(n+1)/2&n=2,3,\ldots,N/4-1\\
   -(N^2-12N+16)/16.&n=N/4
  \end{array}
 \right.
\end{eqnarray}
Finally, we consider the $N/4 \times N/4$ regular matrix $P_3$ as follows:
\begin{eqnarray}
 P_3 =\left[
\begin{array}{cccccccc}
 N/16&0 &0&0&\cdots &0 &1-N/8 &-N/16 \\
 0     &1 &0&0&\cdots&0 &0 &0 \\
 0&0 &1 &0 &\cdots &0 &0 &0 \\
 \vdots&\vdots & & &\ddots & &\vdots &\vdots \\
 0&0 &0 &0 &\cdots &1 &0 &0 \\
 \frac{(N-4)^2}{16N}&0 &0 &0&\cdots&0 &1+2/N-N/8 &1/N-N/16 \\
 \frac{(N-8)^2}{16N}&0 &0 &0 &\cdots&0&1+8/N-N/8 &4/N-N/16 \\
	\end{array}	
       \right]
\end{eqnarray}
Applying $P_3$ to $P_2 M_2$, we obtain
\begin{eqnarray}
 P_3 P_2 M_2 = \left[
	\begin{array}{ccccccccc}
	 1&0&0 &0 & & &\cdots &0 \\
	 a_{21}&a_{22}&a_{33}&-1 &0 & & \cdots &0 \\
	 a_{31}&a_{32}&a_{43}& 0 &-1&0& \cdots&0 \\
	 a_{41}&a_{42}&a_{53}& 0 &0&-1&\cdots&0 \\
	 \vdots&\vdots&&&&&\ddots& \\
	 a_{N/4-2,1}&a_{N/4-2,2}&a_{N/4-2,3}&0&\cdots &0&0&-1 \\
	 0&1&0&0& &\cdots&0 &0\\
	 0&0&1&0& &\cdots &0 &0\\
	\end{array}
       \right].
\end{eqnarray}
We can eliminate the components $a_{i,j}, 2\le i \le N/4-2, 1 \le j \le 3$ by elementary operations of adding the first, $N/4-1$
and $N/4$ rows. It is found that the rank of $P_3 P_2 M_2$ is
$N/4$. Since $P_2$ and $P_3$ is the regular matrix, the rank of $M_2$ is
equal to $N/4$. Therefore $M_2$ is regular.
\end{proof}

As to equations $\psi^{(i,j,k,l)}_\mathrm{even}(\mathbf{b}_\mathrm{even})=\sum_{s=1}^{N/4}b_1{w_{2s-1}
 c_{(2s-1)i} c_{(2s-1)j}c_{(2s-1)k}c_{(2s-1)l}}$, $\forall (i,j,k,l)\in S_2$, the following proposition holds:

\begin{prop}
For a given solution $b_{2n-1} = b_{1}w_{2n-1}, n=1,2,\cdots N/4$ for the equations $\psi^{(i,j,k,l)}_\mathrm{odd}(\mathbf{b}_\mathrm{odd})=0$,$\forall (i,j,k,l)\in S_1$, equations
$\psi^{(i,j,k,l)}_\mathrm{even}(\mathbf{b}_\mathrm{even})=\sum_{s=1}^{N/4}b_1{w_{2s-1}
c_{(2s-1)i}c_{(2s-1)j}c_{(2s-1)k}c_{(2s-1)l}}$, $\forall (i,j,k,l)\in S_2$ are $N/4$ linearly
independent equations.
\label{prop:independent_even}
\end{prop}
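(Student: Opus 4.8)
The plan is to reduce the claim to the invertibility of an explicit $N/4\times N/4$ matrix. Recall that in this section the equations $\psi_{\mathrm{even}}^{(i,j,k,l)}(\mathbf{b}_{\mathrm{even}})=\mathcal{R}^{(i,j,k,l)}$, $(i,j,k,l)\in S_2$, have been written as the vector equation $\frac18\tilde M_2\tilde D\,\mathbf{b}_{\mathrm{even}}=\mathbf g$ in Eq.~(\ref{eqn:equation_for_symmetry_even_matrix}), and that the substitutions $\tilde{\mathbf x}=\tilde D\,\mathbf{b}_{\mathrm{even}}$ and $\mathbf x=P_4 D\,\mathbf{b}_{\mathrm{even}}$, together with the column relation $\mathbf m_{N/4+1}=-\sum_{j=1}^{N/4}\mathbf m_j$, give $\frac18\tilde M_2\tilde D\,\mathbf{b}_{\mathrm{even}}=\frac18 M_2 P_4 D\,\mathbf{b}_{\mathrm{even}}$ for every $\mathbf{b}_{\mathrm{even}}$ [cf.~Eq.~(\ref{eqn:trans_equation_for_symmetry_even_matrix})], hence $\tilde M_2\tilde D=M_2 P_4 D$ as matrices. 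Consequently the $N/4$ linear forms $\mathbf{b}_{\mathrm{even}}\mapsto\psi_{\mathrm{even}}^{(i,j,k,l)}(\mathbf{b}_{\mathrm{even}})$, $(i,j,k,l)\in S_2$, are linearly independent if and only if the $N/4\times N/4$ matrix $M_2 P_4 D$ is nonsingular, so it suffices to prove that each of $M_2$, $P_4$, and $D$ is regular.

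The matrix $M_2$ is regular by Lemma~\ref{lemma:regularity_m2}. For $P_4$ I would compute the determinant directly: subtracting the first row from each of the remaining rows turns the $i$-th row ($i\ge 2$) into the vector whose first entry is $-1$, whose $i$-th entry is $+1$, and all of whose other entries vanish; adding the columns $2,3,\dots,N/4$ to the first column then makes the first column equal to $(N/2,0,\dots,0)^{T}$ and leaves an $(N/4-1)\times(N/4-1)$ identity block in rows and columns $2,\dots,N/4$. Cofactor expansion along the first column gives $\det P_4=N/2\ne 0$, so $P_4$ is regular.

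It remains to treat $D$, whose $(p,q)$ entry is $D_{pq}=c_{4(p-1)q}=\cos(4(p-1)q\pi/N)$ for $1\le p,q\le N/4$. Setting $x_q=\cos(4q\pi/N)$, the angles $4q\pi/N$, $q=1,\dots,N/4$, are $N/4$ distinct points of $(0,\pi]$, so the $x_q$ are pairwise distinct, and $D_{pq}=T_{p-1}(x_q)$ with $T_k$ the Chebyshev polynomial of degree $k$. Writing each $T_{p-1}$ in the monomial basis gives $D=CV$, where $C$ is a triangular matrix with nonzero diagonal (hence invertible) and $V=[x_q^{\,p-1}]$ is the Vandermonde matrix, whose determinant $\prod_{q<q'}(x_{q'}-x_q)$ is nonzero; hence $D$ is regular. (This is the same argument as for the matrix $A$ in \ref{sec:regularity_C}, to which it can alternatively be reduced.) Therefore $M_2P_4D$ is a product of three regular $N/4\times N/4$ matrices and is itself regular, which proves the proposition.

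The substantive difficulty, the regularity of $M_2$, has already been overcome in Lemma~\ref{lemma:regularity_m2}; the only additional care needed here is to confirm that the passage from $\tilde M_2\tilde D$ to $M_2P_4D$ is an identity of matrices (equivalently, of linear forms in $\mathbf{b}_{\mathrm{even}}$), since it is precisely this identity that lets the linear independence of the $N/4$ equations be read off from $\det(M_2P_4D)\ne 0$.
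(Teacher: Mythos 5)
Your argument is correct and follows the same overall route as the paper's proof: both reduce the claim to the equivalence of Eq.~(\ref{eqn:equation_for_symmetry_even_matrix}) with the transformed equation (\ref{eqn:trans_equation_for_symmetry_even_matrix}) (i.e.\ to the matrix identity $\tilde{M}_2\tilde{D}=M_2P_4D$, which indeed holds since $M_2$ is $\tilde{M}_2$ with its last column dropped and $\mathbf{m}_{N/4+1}=-\sum_{j=1}^{N/4}\mathbf{m}_j$), and both then invoke Lemma~\ref{lemma:regularity_m2} for $M_2$ and the regularity of the remaining factor. The one genuine difference is how you handle that remaining factor: the paper treats the product $P_4D$ as a single object and proves its regularity in \ref{sec:regularity_D} by exhibiting the explicit inverse $\bar{D}$ (a discrete-cosine orthogonality identity, which is then reused in \ref{sec:explicit_solution} to solve for $\mathbf{b}_\mathrm{even}$ explicitly), whereas you prove $P_4$ and $D$ separately regular --- $\det P_4=N/2$ by elementary row/column operations, and $\det D\neq 0$ via the factorization $D_{pq}=T_{p-1}(x_q)$ with $x_q=\cos(4q\pi/N)$ pairwise distinct, i.e.\ a Chebyshev--Vandermonde argument. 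Your variant is more elementary and self-contained for the purpose of this proposition (it does not presuppose the orthogonality computation behind $\bar{D}$), but it yields only nonsingularity, while the paper's construction of $\bar{D}=(P_4D)^{-1}$ buys the closed-form coefficients $b_{2s}$ needed later for Lemma~\ref{theo:solution1}; for the statement at hand both are equally adequate.
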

\begin{proof}
The equation (\ref{eqn:equation_for_symmetry_even_matrix}) is equivalent to the transformed equation (\ref{eqn:trans_equation_for_symmetry_even_matrix}).
$M_2$ is a $N/4 \times N/4$ regular matrix from Lemma \ref{lemma:regularity_m2}.
Since $P_4D$ is the $N/4 \times N/4$ regular matrix(see \ref{sec:regularity_D}), $M_2 P_4 D$ is the $N/4 \times N/4$ regular matrix. Therefore, $\psi^{(i,j,k,l)}_\mathrm{even}(\mathbf{b}_\mathrm{even})=\sum_{s=1}^{N/4}b_1{w_{2s-1}c_{(2s-1)i}c_{(2s-1)j}c_{(2s-1)k}c_{(2s-1)l}}$, $\forall (i,j,k,l)\in S_2$ are $N/4$ linearly independent equations. 
\end{proof}
		
The solution of Eq.~(\ref{eqn:trans_equation_for_symmetry_even_matrix}) can be written as 
$\mathbf{b}_\mathrm{even}=8(P_4D)^{-1}	M_2^{-1}\mathbf{g}$, which leads to
$b_{2s} = 8[(P_4 D)^{-1} M_2^{-1}]_{s,1}Wb_1\quad (s=1,2,\cdots, N/4)$. 
Therefore, it can be also written in the form
\begin{eqnarray}
 b_{2s} = w_{2s}b_1,\quad (s = 1,2,\cdots,N/4)
  \label{eqn:solution_even}
\end{eqnarray}
where $w_{2s} = 8[(P_4 D)^{(-1)} M_2^{-1}]_{s,1}W$.

For converting Eq.~(\ref{eqn:trans_equation_for_symmetry_even_matrix}) into a simple form, we introduce the $N/4\times N/4$ regular matrices $Q_2$, $Q_3$ and $Q_4$ as follows:
\begin{eqnarray}
 Q_2 =
  \left[
   \begin{array}{ccccccc}
    1&0 &0 &\cdots &0 &0&0 \\
    0&1 &0 &\cdots &0 &0&0\\
    0&0 &1 &\cdots &0 &0&0 \\
    \vdots&\vdots&\vdots &\ddots &\vdots &\vdots&\vdots \\
    0&0 &0 &\cdots &1 &0&0\\
    0&2 &2 &\cdots &2&1 &0 \\
    0&0 &0 &\cdots &0&2 &1 \\
   \end{array}
  \right]
\end{eqnarray}
\begin{eqnarray}
 Q_3 = 
 \left[
   \begin{array}{ccccccc}
    1&0 &0 &\cdots &0 &0&0 \\
    0&1 &0 &\cdots &0 &0&0 \\
    0&0 &1 &\cdots &0 &0&0 \\
    \vdots&\vdots&\vdots &\ddots &\vdots &\vdots&\vdots \\
    0&0 &0 &\cdots &1 &0 &0\\
    0&0 &0 &\cdots &0&\frac{24}{N(N+2)} &-\frac{N-4}{4(N+2)} \\
    0&0 &0 &\cdots &0&\frac{48(N-4)}{N(N^2-4)}&-\frac{N^2-12N+8}{2(N^2-4)} \\
   \end{array}
  \right]
\end{eqnarray}
\begin{eqnarray}
 Q_4 =
  \left[
   \begin{array}{ccccccc}
    1&0 &0 &\cdots &0 &0&0 \\
    0&-1 &0 &\cdots &0 &a_{22}&a_{23} \\
    0&0 &-1 &\cdots &0 &a_{32}&a_{33} \\
    \vdots&\vdots&\vdots &\ddots &\vdots &\vdots&\vdots \\
    0&0 &0 &\cdots &-1 &a_{N/4-2,2} &a_{N/4-2,3}\\
    0&0 &0 &\cdots &0&1 &0 \\
    0&0 &0 &\cdots &0&0 &1 \\
   \end{array}
  \right]
\end{eqnarray}
Applying $Q_4 Q_3 Q_2 P_2$ to (\ref{eqn:trans_equation_for_symmetry_even_matrix}), we
obtain
\begin{eqnarray}
 \frac{1}{8}
  \left[
   \begin{array}{cccccccc}
    3&-4 &1 &0 &0 &0 &\cdots&0 \\
    h_{N}(3)&0 &0 &1 &0 &0 &\cdots&0 \\
    h_{N}(4)&0 &0 &0 &1 &0&\cdots &0 \\
    \vdots&\vdots &\vdots &\vdots &\vdots &&\ddots &\vdots \\
    h_{N}(N/4-1)&0 &0 &0 &0 &\cdots &0 &1 \\
    h_{N}(1)&1 &0 &0 &0 &0 &\cdots &0\\
    h_{N}(2)&0 &1 &0 &0 &0 &\cdots &0 \\
   \end{array}
				\right]D\mathbf{b}_\mathrm{even} = \mathbf{g}.
  \label{eqn:equation_for_symmetry_odd_matrix2}
\end{eqnarray}
where $h_m(n)=-1 -\frac{24n^2-12nm}{m^2-4}$.

Substituting Eq.~(\ref{eqn:solution_even}) into
Eq.~(\ref{eqn:equation_for_symmetry_odd_matrix2}), the following relation holds:
\begin{eqnarray}
 \frac{1}{8}\sum_{s=1}^{N/4}[3-4c_{4s}+c_{8s}]w_{2s} = W,
  \label{eqn:relation_even0}
\end{eqnarray}
and
\begin{eqnarray}
 \frac{1}{8}\sum_{s=1}^{N/4}[h_{N}(n) +c_{4ns}]w_{2s} = 0.\quad
  (n=1,2,\cdots,N/4-1)
 \label{eqn:relation_even}
\end{eqnarray}
It can be shown that Eq.(\ref{eqn:relation_even}) also holds for $n=0$ and $n=N/2$.
Define new variables $n' = N/2-n$. Substituting $n=N/2-n'$ into
Eq.(\ref{eqn:relation_even}), we obtain
\begin{eqnarray}
\frac{1}{8}\sum_{s=1}^{N/4}[h_{N}(n') +c_{4n's}]w_{2s}=0
\end{eqnarray}
for $n' = N/4+1, N/4+2,\cdots,N/2-1$. Therefore,
Eq.(\ref{eqn:relation_even}) holds for the extended range $n=0,1,\cdots, N/2$.

Substituting (\ref{eqn:relation_even}) for $n=1,2$ into (\ref{eqn:relation_even0}),
we obtain the relation
\begin{eqnarray}
 \frac{3N}{N^2-4}\sum_{s=1}^{N/4}w_{2s} = W.
  \label{eqn:relation_even_odd}
\end{eqnarray}

We consider the equations $\psi_{\mathrm{even}}^{(i,j,k,l)}(\mathbf{b}_\mathrm{even})=\sum_{s=1}^{N/4}b_1{w_{2s-1}
 c_{(2s-1)i} c_{(2s-1)j}c_{(2s-1)k}c_{(2s-1)l}}$, $\forall (i,j,k,l)\in T_2$.

\begin{prop}
For the given solutions $b_{2n-1} = b_{1}w_{2n-1}\quad (n=1,2,\cdots N/4)$ for the equations $\psi^{(i,j,k,l)}_\mathrm{odd}(\mathbf{b}_\mathrm{odd})=0$,$\forall (i,j,k,l)\in S_1$, 
 the nontrivial solution of equations\\ $\psi^{(i,j,k,l)}_\mathrm{even}(\mathbf{b}_\mathrm{even})=\sum_{s=1}^{N/4}b_1{w_{2s-1}
 c_{(2s-1)i} c_{(2s-1)j}c_{(2s-1)k}c_{(2s-1)l}}$, $\forall (i,j,k,l)\in S_2$ also solves the other equations in $T_2$.
 \label{prop:solve_on_t2}
\end{prop}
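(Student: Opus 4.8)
The plan is to follow the same strategy as in the proof of Proposition~\ref{prop:solve_on_t1}: substitute the explicit solution into $\psi^{(i,j,k,l)}$, use the ``master relations'' to eliminate the unknowns, and reduce the claim to an elementary identity in the indices. Substituting $b_{2s-1}=w_{2s-1}b_1$ and $b_{2s}=w_{2s}b_1$ into the decomposition $\psi^{(i,j,k,l)}(\mathbf{b})=\psi^{(i,j,k,l)}_\mathrm{odd}(\mathbf{b}_\mathrm{odd})-\psi^{(i,j,k,l)}_\mathrm{even}(\mathbf{b}_\mathrm{even})$, the assertion $\psi^{(i,j,k,l)}(\mathbf{b})=0$ for $(i,j,k,l)\in T_2$ becomes
\begin{equation}
 \sum_{s=1}^{N/4}w_{2s-1}\,c_{(2s-1)i}c_{(2s-1)j}c_{(2s-1)k}c_{(2s-1)l}
 =\sum_{s=1}^{N/4}w_{2s}\,s_{2si}s_{2sj}s_{2sk}s_{2sl}.
\end{equation}
First I would fix a parametrization of a general element of $T_2$ analogous to the one $(0,a+b,N/2-b,N/2-a)$ used for $T_1$, so that all indices stay in a controlled range; note that on $T_2$ none of the four indices vanishes, which keeps the right-hand side ``full''.

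Next I would expand both sides by the product-to-sum formulas. The left-hand product of four cosines is $\frac{1}{8}$ times a sum of eight terms $c_{(2s-1)(\pm i\pm j\pm k\pm l)}$; since $i+j+k+l=N$, the all-plus term equals $c_{(2s-1)N}=-1$, and because $N$ is even every remaining argument is even, so the other seven terms are of the form $c_{2m(2s-1)}$. Reducing each $m$ into the admissible range by the folding identities $c_{-n}=c_n$ and $c_{(N-m)(2s-1)}=-c_{m(2s-1)}$ and then invoking relation~(\ref{eqn:relation_odd}) in the extended form $\sum_s c_{2m(2s-1)}w_{2s-1}=-f_{N/4}(m)\sum_s w_{2s-1}$ (valid for $0\le m\le N/2$, as established in Section~\ref{sec:odd_equation}) turns the left-hand side into a fixed linear combination of values of the affine function $f_{N/4}$, multiplied by $\sum_s w_{2s-1}=NW$. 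I would treat the right-hand product of four sines the same way: now the all-plus term is $c_{2sN}=+1$, the remaining seven are of the form $c_{4sm}$ with signs $\pm\frac{1}{8}$, and relation~(\ref{eqn:relation_even}) in the form $\sum_s c_{4sm}w_{2s}=-h_N(m)\sum_s w_{2s}$ turns the right-hand side into a fixed combination of values of the quadratic function $h_N$, multiplied by $\sum_s w_{2s}$.

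At this point both sides are explicit multiples of $b_1$: the left involves only $W$ and $f_{N/4}$, the right only $\sum_s w_{2s}$ and $h_N$, and the two are linked by relation~(\ref{eqn:relation_even_odd}), $\frac{3N}{N^2-4}\sum_s w_{2s}=W$. Substituting this link, the required equality collapses to a single scalar identity in $i,j,k,l$. Since $f_{N/4}(n)=-1+4n/N$ is affine and $h_N(n)=-1-\frac{12n(2n-N)}{N^2-4}$ is quadratic, each side reduces --- by the same kind of bookkeeping as in~(\ref{eqn:coefficient_asymmetric_term_i0_3}), now with the bilinear cross terms produced by the quadratic --- to an elementary closed form, and the identity is then checked by a finite, routine (if lengthy) computation in which the $h_N$ cross terms cancel in the alternating sum on the sine side. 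Note that relations~(\ref{eqn:relation_odd}),~(\ref{eqn:relation_even}) and~(\ref{eqn:relation_even_odd}) already encode the equations on $S_1\cup S_2$, so this is a genuine extension to $T_2$, not a circular argument.

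I expect the principal obstacle to be the index bookkeeping in the reduction step: for each of the eight sign patterns on each side and for each element of $T_2$, one must verify that the folded argument of $f_{N/4}$ or $h_N$ actually lands in the range where the master relation applies, treating separately the degenerate cases in which $\pm i\pm j\pm k\pm l$ hits $0$, $\pm N$, or a boundary value (where $c$ collapses to $\pm1$) and the cases in which a raw argument must be folded twice. This is the analogue of the reduction carried out for $T_1$ in Proposition~\ref{prop:solve_on_t1}, but with more cases because on $T_2$ no index is forced to vanish and the quadratic $h_N$ does not telescope as cleanly as the affine $f_{N/4}$.
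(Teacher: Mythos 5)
Your proposal takes essentially the same route as the paper's proof: you substitute the solutions, expand the quadruple cosine and sine products by product-to-sum formulas, fold the indices into the admissible range, and invoke the extended relations (\ref{eqn:relation_odd}) and (\ref{eqn:relation_even}) together with the link (\ref{eqn:relation_even_odd}) so that both sides reduce to multiples of $Wb_1$ that cancel. The paper merely makes explicit the bookkeeping you defer, via the parametrization $(i,j,k,l)=(a+b-c,c,N/2-b,N/2-a)$ and three sign cases, so your plan is correct and matches the published argument.
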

\begin{proof}
The set $T_2$ can be expressed by $(i,j,k,l)=(a+b-c,c,N/2-b,N/2-a)$, where $a$, $b$ and $c$ are integers. 
Substituting $i=a+b-c, j=c , k=N/2-b, l=N/2-a$ into
Eq.(\ref{eqn:s2_lhs}) which is LHS of Eq.~(\ref{eqn:equation_for_symmetric_s2}), we obtain
\begin{eqnarray}
 &&\psi^{(a+b-c,c,N/2-b,N/2-a)}_\mathrm{even}(\mathbf{b}_\mathrm{even})\nonumber\\
  &&=
  \sum_{s=1}^{N/4}b_{2s}s_{2(a+b-c)s}s_{2cs}s_{(N-2b)s}s_{(N-2a)s}\nonumber\\
 &&=\frac{1}{8}\sum_{s=1}^{N/4}\left[1+c_{4(a+b)s}-c_{4(c-a-b)s}-c_{4as}-c_{4bs}-c_{4cs}+c_{4(c-a)s}+c_{4(c-b)s}\right]b_{2s}.\nonumber\\
 \label{eqn:lhs_t2}
\end{eqnarray}
Since $i\le j\le k\le l$, it is found that $(a+b)/2 \le c$ and $a\le b$. 
Therefore, the relation $a \le (a+b)/2 \le c$ holds. Then we have
$c-a \ge 0$. Moreover, we have $a,b \ge 1$ since $N/2-a\le N/2-1$ and $N/2-b\le N/2-1$. Therefore, we have $1\le a\le c$.
In order to keep the index $\alpha$ of $c_\alpha$ positive or zero in the 3rd, 7th and last terms in Eq.(\ref{eqn:lhs_t2}),
three cases of RHS of Eq.~(\ref{eqn:lhs_t2}) should be considered.
\begin{enumerate}
 \item $a+b-c < 0$, $c-a\ge 0$ and $c-b\ge0$
       \begin{eqnarray}
	L_1&=&\frac{1}{8}\sum_{s=1}^{N/4}\left[1+c_{4(a+b)s}-c_{4(c-a-b)s}-c_{4as}-c_{4bs}-c_{4cs}\right.\nonumber\\
	&&\left.+c_{4(c-a)s}+c_{4(c-b)s}\right]b_{2s}\nonumber\\
	 \label{eqn:equation_lhs_s2_1}
       \end{eqnarray}
 \item $a+b-c \ge 0$, $c-a\ge 0$ and $c-b\ge0$
       \begin{eqnarray}
	L_2&=&\frac{1}{8}\sum_{s=1}^{N/4}\left[1+c_{4(a+b)s}-c_{4(a+b-c)s}-c_{4as}-c_{4bs}-c_{4cs}\right.\nonumber\\
	&&\left.+c_{4(c-a)s}+c_{4(c-b)s}\right]b_{2s}\nonumber\\
	 \label{eqn:equation_lhs_s2_2}
       \end{eqnarray}
 \item $a+b-c \ge 0$, $c-a\ge 0$ and $c-b<0$
       \begin{eqnarray}
	L_3&=&\frac{1}{8}\sum_{s=1}^{N/4}\left[1+c_{4(a+b)s}-c_{4(a+b-c)s}-c_{4as}-c_{4bs}-c_{4cs}\right.\nonumber\\
	&&\left.+c_{4(c-a)s}+c_{4(b-c)s}\right]b_{2s}\nonumber\\
	 \label{eqn:equation_lhs_s2_3}
       \end{eqnarray}
\end{enumerate}
Using (\ref{eqn:solution_even}) and (\ref{eqn:relation_even}),
Eq.(\ref{eqn:equation_lhs_s2_1})-(\ref{eqn:equation_lhs_s2_3}) are
simplified as follows:
\begin{eqnarray}
 L_1&=&\frac{b_1}{8}\sum_{s=1}^{N/4}\left[1-h_{N}(a+b)+h_{N}(c-a-b)+h_{N}(a)+h_{N}(b)+h_{N}(c)\right.\nonumber\\
 &&\left.-h_{N}(c-a)-h_{N}(c-b)\right]w_{2s}\nonumber\\
 &=&0.
  \label{eqn:equation_lhs_s2_1_2}
\end{eqnarray}
\begin{eqnarray}
 L_2&=&\frac{b_1}{8}\sum_{s=1}^{N/4}\left[1-h_{N}(a+b)+h_{N}(a+b-c)+h_{N}(a)+h_{N}(b)+h_{N}(c)\right.\nonumber\\
 &&\left.-h_{N}(c-a)-h_{N}(c-b)\right]w_{2s}\nonumber\\
 &=&\frac{3Nb_1}{N^2-4}(a+b-c)\sum_{s=1}^{N/4}w_{2s}.
  \label{eqn:equation_lhs_s2_2_2}
\end{eqnarray}
\begin{eqnarray}
 L_3&=&\frac{b_1}{8}\sum_{s=1}^{N/4}\left[1-h_{N}(a+b)+h_{N}(a+b-c)+h_{N}(a)+h_{N}(b)+h_{N}(c)\right.\nonumber\\
 &&\left.-h_{N}(c-a)-h_{N}(b-c)\right]w_{2s}\nonumber\\
 &=&\frac{3Nb_1}{N^2-4}a\sum_{s=1}^{N/4}w_{2s}.
  \label{eqn:equation_lhs_s2_3_2}
\end{eqnarray}

Substituting $i=a+b-c, j=c , k=N/2-b, l=N/2-a$ into
Eq.(\ref{eqn:s2_rhs}) which is the RHS of Eq.(\ref{eqn:equation_for_symmetric_s2}), we obtain
\begin{eqnarray}
 \mathcal{R}^{(a+b-c,c,N/2-b,N/2-a)}&=&\sum_{s=1}^{N/4}\left[c_{(2s-1)(a+b-c)}c_{(2s-1)c}c_{(2s-1)(N/2-b)}c_{(2s-1)(N/2-a)}\right]b_1 w_{2s-1}\nonumber\\
 &=&\frac{1}{8}\sum_{s=1}^{N/4}\left[-1-c_{2(2s-1)(a+b)}+c_{2(2s-1)a}+c_{2(2s-1)b}-c_{2(2s-1)c}\right.\nonumber\\
&&\left.-c_{2(2s-1)(a+b-c)}+c_{2(2s-1)(a-c)}+c_{2(2s-1)(b-c)}\right]b_1 w_{2s-1}.\nonumber\\
\end{eqnarray}
As same as the LHS, three cases should be considered in order to keep
the index $\alpha$ of $c_{\alpha}$ positive or zero. In each case, the
equation is simplified by using the relation (\ref{eqn:relation_odd}).
\begin{enumerate}
 \item $a+b-c < 0$, $c-a\ge 0$ and $c-b\ge0$
       \begin{eqnarray}
	R_1&=&\frac{b_1}{8}\sum_{s=1}^{N/4}\left[-1-c_{2(2s-1)(a+b)}+c_{2(2s-1)a}+c_{2(2s-1)b}\right.\nonumber\\
	&&\left.
	-c_{2(2s-1)c}-c_{2(2s-1)(c-a-b)}+c_{2(2s-1)(c-a)}+c_{2(2s-1)(c-b)}\right]w_{2s-1}\nonumber\\
	&=&\frac{b_1}{8}\sum_{s=1}^{N/4}\left[-1+f_{N/4}(a+b)-f_{N/4}(a)-f_{N/4}(b)\right.\nonumber\\
	&&\left.+f_{N/4}(c)+f_{N/4}(c-a-b)-f_{N/4}(c-a)-f_{N/4}(c-b)\right]w_{2s-1}\nonumber\\
	&=&0.
	 \label{eqn:equation_rhs_s2_1}
       \end{eqnarray}
 \item $a+b-c \ge 0$, $c-a\ge 0$ and $c-b\ge0$
       \begin{eqnarray}
	R_2&=&\frac{b_1}{8}\sum_{s=1}^{N/4}\left[-1-c_{2(2s-1)(a+b)}+c_{2(2s-1)a}+c_{2(2s-1)b}\right.\nonumber\\
&&\left.-c_{2(2s-1)c}-c_{2(2s-1)(a+b-c)}+c_{2(2s-1)(c-a)}+c_{2(2s-1)(c-b)}\right]w_{2s-1}.\nonumber\\
	&=&\frac{b_1}{8}\sum_{s=1}^{N/4}\left[-1+f_{N/4}(a+b)-f_{N/4}(a)-f_{N/4}(b)\right.\nonumber\\
	&&\left.+f_{N/4}(c)+f_{N/4}(a+b-c)-f_{N/4}(c-a)-f_{N/4}(c-b)\right]w_{2s-1}\nonumber\\
	&=&W{b_1}(a+b-c).
	 \label{eqn:equation_rhs_s2_2}
       \end{eqnarray}
 \item $a+b-c \ge 0$, $c-a\ge 0$ and $c-b<0$
       \begin{eqnarray}
	R_3&=&\frac{b_1}{8}\sum_{s=1}^{N/4}\left[-1-c_{2(2s-1)(a+b)}+c_{2(2s-1)a}+c_{2(2s-1)b}\right.\nonumber\\
&&\left.-c_{2(2s-1)c}-c_{2(2s-1)(a+b-c)}+c_{2(2s-1)(c-a)}+c_{2(2s-1)(b-c)}\right]w_{2s-1}\nonumber\\
	&=&\frac{b_1}{8}\sum_{s=1}^{N/4}\left[-1+f_{N/4}(a+b)-f_{N/4}(a)-f_{N/4}(b)\right.\nonumber\\
	&&\left.+f_{N/4}(c)+f_{N/4}(a+b-c)-f_{N/4}(c-a)-f_{N/4}(b-c)\right]w_{2s-1}\nonumber\\
	&=&W{b_1}a.
	 \label{eqn:equation_rhs_s2_3}
       \end{eqnarray}
\end{enumerate}

From the relation (\ref{eqn:relation_even_odd}), it follows that $L_i - R_i =0\quad (i=1,2,3)$.
This indicates that the solution $\mathbf{b}_\mathrm{even}$ of the equation $\psi^{(i,j,k,l)}_\mathrm{even}(\mathbf{b}_\mathrm{even})=\sum_{s=1}^{N/4}b_1{w_{2s-1}
 c_{(2s-1)i} c_{(2s-1)j}c_{(2s-1)k}c_{(2s-1)l}}$, $\forall (i,j,k,l)\in S_2$ also solves the other equations in $T_2$ when $b_{2n-1} = b_{1}w_{2n-1}\quad (n=1,2,\cdots, N/4)$ is the solution of the equations $\psi^{(i,j,k,l)}_\mathrm{odd}(\mathbf{b}_\mathrm{odd})=0$,$\forall (i,j,k,l)\in S_1$.
 \end{proof}

\section{Proof of Main Results}\label{sec:proof}
\begin{proof}[{\bf Proof of Lemma \ref{theo:main1}}]
	From Proposition \ref{prop:independent_odd} and Proposition \ref{prop:independent_even}, the equations $\psi^{(i,j,k,l)}(\mathbf{b})=0,\forall (i,j,k,l)\in S_1 \cup S_2$ are $N-1$ linearly independent equations. Therefore, they have a nontrivial solution $\mathbf{b}\ne 0$ for given $b_1$. From Proposition \ref{prop:solve_on_t1} and Proposition \ref{prop:solve_on_t2}, the nontrivial solution also solves the equations $\psi^{(i,j,k,l)}(\mathbf{b})=0,\forall (i,j,k,l)\in T_1 \cup T_2 = S$. From Lemma \ref{lemma:minus}, the solution also solves the equations for $S_0$.
\end{proof}

\begin{proof}[{\bf Proof of Theorem \ref{theo:main2}}]
	From Lemma \ref{theo:main1}, a trivial solution $\mathbf{b}$ for $\psi^{(i,j,k,l)}(\mathbf{b})=0,\forall (i,j,k,l)\in S_1 \cup S_2$ also solves $\psi^{(i,j,k,l)}(\mathbf{b})=0,\forall (i,j,k,l)\in S_0$. Therefore, the asymmetric part $\Psi_\mathrm{a}$ of Hamiltonian (\ref{eqn:proposed_model}) vanishes for such $\mathbf{b}$. Therefore, from Definition \ref{def:symmetric}, the system (\ref{eqn:proposed_model}) with the solution $\mathbf{b}$ is a symmetric lattice.
\end{proof}

\section*{Acknowledgements}	
	The first author (Y.D.) was partially supported by a Grant-in-Aid for Scientific Research (C), No.~19K12003 from Japan Society for the Promotion of Science (JSPS).  The authors were supported by a
 Grant-in-Aid for Scientific Research (C), No.~19K03654
 from Japan Society for the Promotion of Science (JSPS).

\appendix
\section{Estimating velocity $v_{DB}$ of approximated traveling DB}\label{sec:estimate_vdb}
The map $\mathcal{T}_\lambda$ with an arbitrary $\lambda$ defined by Eq.(\ref{eqn:transform_T}) represents the arbitrary space shifting and sign-inverting transformation. This map corresponds to the rotation by the angle $-m\lambda$ in each $U_m$ component on the complex plane. Therefore, we can estimate the distance that a DB travels in the lattice from the angle that the component $U_m$ rotates in the complex plane.

Figure \ref{fig:trajectory_complex} shows the trajectory of component $U_m$ for an approximated traveling DB on the complex plane. The trajectory can be decomposed into the fast vibration corresponding to the internal vibration and the slow rotation corresponding to the propagation of a traveling DB. 
From the definition of map (\ref{eqn:transform_T}), the component $U_m$ rotates $-2m\pi$ during the DB propagated $N$ lattice spacing. Therefore, the rotation angle $\theta_1=-2m\pi/N$ of the component $U_m$ corresponds to an one-lattice-space propagation of the traveling DB.

We can estimate $v_\mathrm{DB}$ by the following steps:
\begin{enumerate}
\item Perform the numerical integration of (\ref{eqn:motion_tascl_q}) and (\ref{eqn:motion_tascl}) for the traveling DB with a certain small perturbation $\delta l_0$  (\ref{eqn:variable_transformation_to_dp}). Transform the obtained temporal evolution $q_n$ into $U_m$.
\item Find the time $t_1$ and $t_3$ of the first and third extreama of $|U_m|$.
\item Estimate the rotation angle $|\Delta \theta|$ between $t_1$ and $t_3$. This corresponds to the rotation angle one internal vibration: 
\begin{eqnarray}
  |\Delta \theta | = \arg U_m(t_3)- \arg U_m(t_1),
\end{eqnarray}
where $\arg$ indicates the argument of complex numbers.
\item Calculate the velocity $v_\mathrm{DB}$ by
\begin{eqnarray}
	v_\mathrm{DB} = \left|\frac{\Delta \theta}{\theta_1}\right|=\left|\frac{N\Delta \theta}{2m\pi}\right|.
\end{eqnarray}
\end{enumerate}

As is shown in Fig.~\ref{fig:vdb_in_perturbed_db}, the velocity $v_\mathrm{DB}$ of approximated traveling DB is precisely proportional to $\delta l$. Therefore, we obtain the relation $v_\mathrm{DB} = K\delta l$. The coefficient $K$ can be calculated from one pair of $\delta l_0$ and $v_\mathrm{DB}$ for a certain $T_\mathrm{DB}$ following the above procedure. Finally, the $\delta l$ for desired $v_\mathrm{DB}$ can be obtained by $\delta l = v_\mathrm{DB}/K$.

\begin{figure}
\centering
\begin{tabular}{c}
	\includegraphics{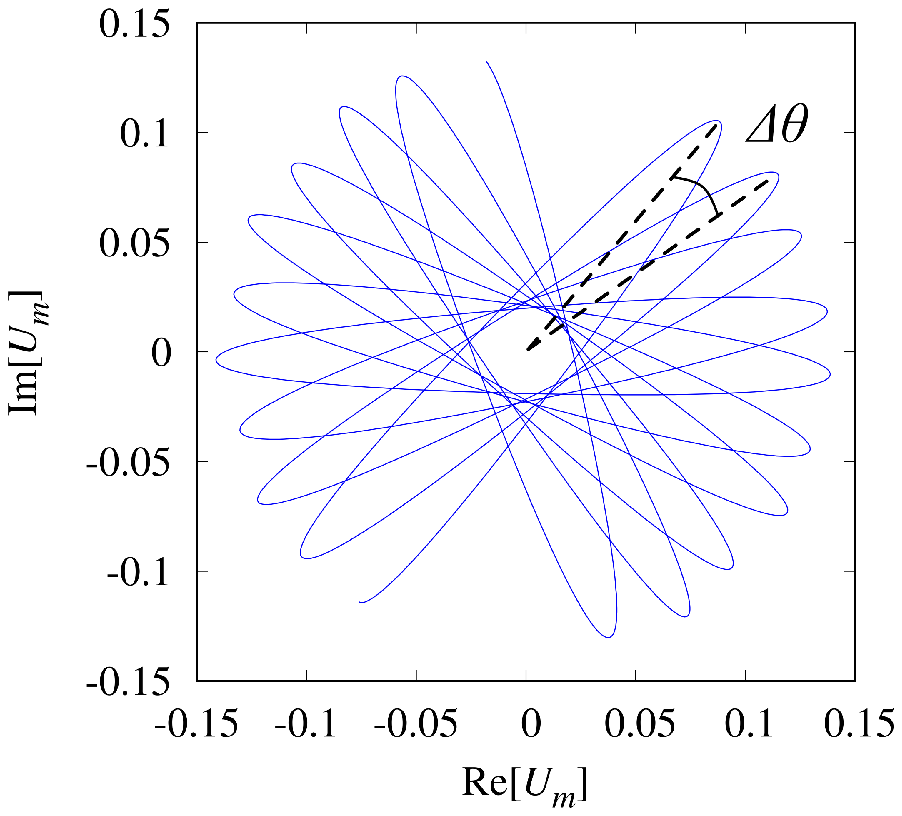}\\
\end{tabular}
	\caption{Trajectory of the component $U_m$ in the complex coordinate. $\Delta \theta$ indicates the change of angle in one internal vibration. Numerical results are the cases that $T_\mathrm{DB}=2, \delta l=0.02$, and $m=20$.}
	\label{fig:trajectory_complex}
\end{figure}

\section{Proof of regularity of $A$}\label{sec:regularity_C}
\setcounter{theo}{0}
$A$ is the $N/4 \times N/4$ matrix whose element is given by $A_{pq} =
\cos{\frac{2\pi}{N}(p-1)(2q-1)}$. Consider the $N/4\times N/4$ matrix
$\bar{A}$ as follows,
\begin{eqnarray}
 \bar{A} = \frac{8}{N}
  \left[
   \begin{array}{ccccc}
    \frac{1}{2}A_{11}&A_{21}&A_{31}&\cdots&A_{\frac{N}{4}1}\\
    \frac{1}{2}A_{12}&A_{22}&A_{32}&\cdots&A_{\frac{N}{4}2}\\
    \frac{1}{2}A_{13}&A_{23}&A_{33}&\cdots&A_{\frac{N}{4}3}\\
    \vdots&\vdots&\vdots&\ddots&\vdots\\
    \frac{1}{2}A_{1\frac{N}{4}}&A_{2\frac{N}{4}}&A_{3\frac{N}{4}}&\cdots&A_{\frac{N}{4}\frac{N}{4}}\\
   \end{array}
  \right]
\end{eqnarray}
It can be easily shown that $A\bar{A} = \bar{A}A = I$. Therefore,
$A^{-1} = \bar{A}$ and this means that $A$ is regular.

\section{Proof of regularity of $P_4D$}\label{sec:regularity_D}
$P_4$ is the $N/4 \times N/4$ matrix given in (\ref{eqn:p4}) and $D$ is the $N/4 \times N/4$ matrix that element is given by $D_{pq} = \cos{\frac{4(p-1)q\pi}{N}}$. Consider the $N/4\times N/4$ matrix $\bar{D}$ as follows:
\begin{eqnarray}
	\bar{D} = \frac{8}{N}
	\left[
	\begin{array}{ccccc}
		\frac{1}{2}D_{11}&D_{21}&D_{31}&\cdots&D_{N/4,1}\\
		\frac{1}{2}D_{12}&D_{22}&D_{32}&\cdots&D_{N/4,2}\\
		\frac{1}{2}D_{13}&D_{23}&D_{33}&\cdots&D_{N/4,3}\\
	    \vdots&\vdots&\vdots&\ddots&\vdots\\
		\frac{1}{2}D_{1,N/4-1}&D_{2,N/4-1}&D_{3,N/4-1}&\cdots&D_{N/4,N/4-1}\\
		\frac{1}{4}D_{1,N/4}&\frac{1}{2}D_{2,N/4}&\frac{1}{2}D_{3,N/4}&\cdots&\frac{1}{2}D_{N/4,N/4}\\
	\end{array}
	\right].
\end{eqnarray}
It can be shown that $\bar{D}P_4 D = P_4 D \bar{D} = I$. Therefore, $(P_4 D)^{-1} = \bar{D}$ and this means that $P_4 D$ is regular.

\section{Derivation of explicit solution}\label{sec:explicit_solution}
At first, we consider Eq.~(\ref{eqn:equation_for_symmetry_odd_matrix}). 
Applying the matrix $P_1$ from left side, we obtain
\begin{eqnarray}
	P_1 M_1 A \mathbf{b}_\mathrm{odd} = 0
	\label{eqn:p1m1c}
\end{eqnarray}
The matrix $P_1 M_1$ is given in (\ref{eqn:p1m1}). 
We introduce the $N/4$-vector $\mathbf{B}$ as follows:
\begin{eqnarray}
\mathbf{B} = [B_1, B_2, \cdots, B_{N/4}]^T =A \mathbf{b}_\mathrm{odd}.
\label{eqn:def_vecB}
\end{eqnarray}
The equation (\ref{eqn:p1m1c}) is rewritten to
\begin{eqnarray}
	P_1 M_1 \mathbf{B} = 0,
\end{eqnarray}
and this equation leads to the relation
\begin{eqnarray}
	B_n  = -f_{N/4}(n-1)B_1.\quad (n=2,3,\cdots,N/4)
	\label{def:vectorB}
\end{eqnarray}
Using (\ref{def:vectorB}), $\mathbf{B}$ can be rewritten to
\begin{eqnarray}
	\mathbf{B} = B_1 [1, -f_{N/4}(1), -f_{N/4}(2), \cdots, -f_{N/4}(N/4-1)]^T.
	\label{def:vectorB2}
\end{eqnarray}
Using (\ref{eqn:p1m1c}) and (\ref{def:vectorB2}), and considering \ref{sec:regularity_D}, we obtain
\begin{eqnarray}
	\mathbf{b}_\mathrm{odd}&=& B_1 A^{-1}[1, -f_{N/4}(1), -f_{N/4}(2), \cdots, -f_{N/4}(N/4-1)]^T\nonumber\\
	&=& B_1 \bar{A}[1, -f_{N/4}(1), -f_{N/4}(2), \cdots, -f_{N/4}(N/4-1)]^T\nonumber\\
	&=& \frac{8B_1}{N}
  \left[
   \begin{array}{ccccc}
    \frac{1}{2}A_{11}&A_{21}&A_{31}&\cdots&A_{\frac{N}{4}1}\\
    \frac{1}{2}A_{12}&A_{22}&A_{32}&\cdots&A_{\frac{N}{4}2}\\
    \frac{1}{2}A_{13}&A_{23}&A_{33}&\cdots&A_{\frac{N}{4}3}\\
    \vdots&\vdots&\vdots&\ddots&\vdots\\
    \frac{1}{2}A_{1\frac{N}{4}}&A_{2\frac{N}{4}}&A_{3\frac{N}{4}}&\cdots&A_{\frac{N}{4}\frac{N}{4}}\\
   \end{array}
  \right]
    \left[
   \begin{array}{c}
    1\\
    -f_{N/4}(1)\\
    -f_{N/4}(2)\\
    \vdots\\
    -f_{N/4}(N/4-1)
   \end{array}
  \right].\nonumber\\
\end{eqnarray}
Therefore, we obtain
\begin{eqnarray}
	b_{2s-1} &=& \frac{8B_1}{N}\left[\frac{1}{2}A_{1s}-\sum_{k=1}^{N/4-1}{A_{k+1,s}f_{N/4}(k)}\right]\nonumber\\
	&=&\frac{8B_1}{N}\left[\frac{1}{2}-\sum_{k=1}^{N/4-1}\cos{\frac{2\pi k(2s-1)}{N}f_{N/4}(k)}\right]\nonumber\\
	&=&\frac{8B_1}{N^2}\frac{1}{\sin^2\left((2s-1)\pi/N\right)}.\quad(s=1,2,\cdots,N/4)
	\label{eqn:solution_explicit_odd1}
\end{eqnarray}
Moreover, we obtain
\begin{eqnarray}
	w_{2s-1}=\frac{b_{2s-1}}{b_1} = \frac{\sin^2{(\pi/N)}}{\sin^2{((2s-1)\pi/N)}}
	\quad(s=1,2,\cdots,N/4).
	\label{sol:w_odd}
\end{eqnarray}

Following additional calculations are performed for the further discussion. From the first row in Eq.~(\ref{eqn:def_vecB}), the following relation holds:
\begin{eqnarray}
B_1 = \sum_{s'=1}^{N/4}{b_{2s'-1}} = b_1\sum_{s'=1}^{N/4}{w_{2s'-1}}=b_1 N W,
\label{eqn:def_B1}
\end{eqnarray}
In the last equality, we use Eq.~(\ref{eqn:definition_R}).
It is also obtained from Eq.~(\ref{eqn:solution_explicit_odd1}) by setting $s=1$.
\begin{eqnarray}
	B_1 = \frac{b_1}{8}N^2 \sin^2{\frac{\pi}{N}}.\label{eqn:b1b1}
\end{eqnarray}
Comparing Eq.(\ref{eqn:def_B1}) and (\ref{eqn:b1b1}), we obtain
\begin{eqnarray}
	W = \frac{N}{8}\sin^2{\frac{\pi}{N}}\label{eqn:def_W}
\end{eqnarray}

Next, we consider Eq.~(\ref{eqn:trans_equation_for_symmetry_even_matrix}). Applying $Q_4Q_3Q_2P_2$ from the left side, we obtain Eq.~(\ref{eqn:equation_for_symmetry_odd_matrix2}). We introduce $N/4$-vector: 
\begin{eqnarray}
	\mathbf{G}=[G_1, G_2, \cdots, G_{N/4}]^T = D\mathbf{b}_\mathrm{even}.
	\label{eqn:vectorG}
\end{eqnarray}
 Equation (\ref{eqn:equation_for_symmetry_odd_matrix2}) is rewritten to
\begin{eqnarray}
 \frac{1}{8}
  \left[
   \begin{array}{cccccccc}
    3&-4 &1 &0 &0 &0 &\cdots&0 \\
    h_{N}(3)&0 &0 &1 &0 &0 &\cdots&0 \\
    h_{N}(4)&0 &0 &0 &1 &0&\cdots &0 \\
    \vdots&\vdots &\vdots &\vdots &\vdots &&\ddots &\vdots \\
    h_{N}(N/4-1)&0 &0 &0 &0 &\cdots &0 &1 \\
    h_{N}(1)&1 &0 &0 &0 &0 &\cdots &0\\
    h_{N}(2)&0 &1 &0 &0 &0 &\cdots &0 \\
   \end{array}
				\right]\mathbf{G} = \mathbf{g},
\end{eqnarray}
and we obtain the relations
\begin{eqnarray}
	&&3G_1 -4G_2 + G_3 = 8Wb_1,\label{eqn:G1G2G2}\\
	&&G_n = -h_{N}(n-1)G_1\quad(n = 2,\cdots, N/4).
	\label{eqn:relationG}
\end{eqnarray}
Substituting (\ref{eqn:relationG}) into (\ref{eqn:G1G2G2}), we obtain
\begin{eqnarray}
	G_1 = \frac{N^2-4}{3N}Wb_1 = \frac{N^2-4}{24}b_1\sin^2{\frac{\pi}{N}},
	\label{eqn:g1b1}
\end{eqnarray} 
where we use Eq.~(\ref{eqn:def_W}) for the last equality.

Applying $P_4$ to (\ref{eqn:vectorG}) from the left side and substituting (\ref{eqn:relationG}), we obtain
\begin{eqnarray}
	P_4 D \mathbf{b}_\mathrm{even} = G_1P_4[1,-h_N(1),-h_N(2),\cdots,-h_N(N/4-1)]^T.
	\label{eqn:p4dbe.g1p4h}
\end{eqnarray}
Let $P_4[1,-h_N(1),-h_N(2),\cdots,-h_N(N/4-1)]^T = [H_1, H_2, \cdots, H_{N/4}]^T=\mathbf{H}$. Elements of $\mathbf{H}$ are given by
\begin{eqnarray}
	H_1 &=& 2(1-\sum_{n=1}^{N/4-1}{h_N(n))}=\frac{3N^2}{2(N^2-4)},\\
	H_m &=& H_1-1-h_N(m-1)
     =\frac{3(N-4m+4)^2}{2(N^2-4)}\quad(m=2,3,\cdots, N/4)\nonumber\\
\end{eqnarray}
Using the relation $(P_4 D)^{-1} = \bar{D}$ (see \ref{sec:regularity_D}), we obtain
\begin{eqnarray}
\mathbf{b}_\mathrm{even} &=& G_1\bar{D}\mathbf{H}\nonumber\\
&=&\frac{8G_1}{N}
	\left[
	\begin{array}{ccccc}
		\frac{1}{2}D_{11}&D_{21}&D_{31}&\cdots&D_{N/4,1}\\
		\frac{1}{2}D_{12}&D_{22}&D_{32}&\cdots&D_{N/4,2}\\
		\frac{1}{2}D_{13}&D_{23}&D_{33}&\cdots&D_{N/4,3}\\
	    \vdots&\vdots&\vdots&\ddots&\vdots\\
		\frac{1}{2}D_{1,N/4-1}&D_{2,N/4-1}&D_{3,N/4-1}&\cdots&D_{N/4,N/4-1}\\
		\frac{1}{4}D_{1,N/4}&\frac{1}{2}D_{2,N/4}&\frac{1}{2}D_{3,N/4}&\cdots&\frac{1}{2}D_{N/4,N/4}\\
	\end{array}
	\right]
    \left[
   \begin{array}{c}
    H_1\\
    H_2\\
    \vdots\\
    H_{N/4}
   \end{array}
  \right].\nonumber\\
\end{eqnarray}
Therefore, we obtain
\begin{eqnarray}
	b_{2s} &=& \frac{8G_1}{N}\left[\frac{1}{2}D_{1s}H_1 + \sum_{k=2}^{N/4}{D_{ks}H_{k}}\right]\nonumber\\
    &=& \frac{8G_1}{N}\left[\frac{1}{2}H_1 + \sum_{k=2}^{N/4}{\cos{\frac{4(k-1)s\pi}{N}} H_k}\right]\nonumber\\
	&=&\frac{24G_1}{(N^2-4)\sin^2{(2s\pi/N)}}\quad(s=1,2,\cdots, N/4-1)\nonumber\\
	b_{N/2}&=&\frac{12G_1}{N^2-4}
	\label{eqn:explicit_sol_be1}
\end{eqnarray}
From the relation (\ref{eqn:g1b1}), we obtain
\begin{eqnarray}
	w_{2s} &=& \frac{b_{2s}}{b_1} = \frac{\sin^2{(\pi/N)}}{\sin^2{(2s\pi/N)}}\quad(s=1,2,\cdots,N/4-1)\nonumber\\
	w_{N/2}&=&\frac{b_{N/2}}{b_1}=\frac{\sin^2{(\pi/N)}}{2}\quad(s=N/4)
	\label{sol:w_even}
\end{eqnarray}
Combining (\ref{sol:w_odd}) and (\ref{sol:w_even}), the explicit solution is
\begin{eqnarray}
	b_{r} &=& b_1\frac{\sin^2{(\pi/N)}}{\sin^2{(r\pi/N)}}\quad(r=1,2,\cdots,N/2-1)\nonumber\\
	b_{N/2} &=& \frac{b_1}{2}\sin^2{(\pi/N)}\quad(r=N/2).
\end{eqnarray}

\clearpage

\end{document}